\tikzset{vertex/.style={circle,draw=black,fill=gray,inner sep=3pt}
,edgeVertex/.style={circle,draw=black,fill=white,inner sep=3pt}
,edgeVertexSpinPlus/.style={circle,draw=black,fill=red!10,inner sep=3pt,append after command={(\tikzlastnode.west) edge (\tikzlastnode.east) (\tikzlastnode.south) edge (\tikzlastnode.north)}}
,edgeVertexSpinMinus/.style={circle,draw=black,fill=blue!10,inner sep=3pt,append after command={(\tikzlastnode.west) edge (\tikzlastnode.east) }}
,face/.style={diamond,draw=black,fill=white,inner sep=2.5pt,append after command= {(\tikzlastnode.west) edge (\tikzlastnode.east) (\tikzlastnode.south) edge (\tikzlastnode.north)}}
,clusterFace/.style={diamond,draw=black,fill=white,inner sep=2.5pt,append after command= {(\tikzlastnode.west) edge (\tikzlastnode.east) (\tikzlastnode.south) edge (\tikzlastnode.north)},append after command= {(\tikzlastnode.west) node[circle,fill=black,inner sep=1pt]{}} {(\tikzlastnode.east) node[circle,fill=black,inner sep=1pt]{}} {(\tikzlastnode.south) node[circle,fill=black,inner sep=1pt]{}} {(\tikzlastnode.north) node[circle,fill=black,inner sep=1pt]{}}}
,clusterFaceDobOut/.style={diamond,draw=black,fill=white,inner sep=2.5pt,append after command= {(\tikzlastnode.west) edge (\tikzlastnode.east) (\tikzlastnode.south) edge (\tikzlastnode.north)},append after command= {(\tikzlastnode.west) node[circle,fill=black,inner sep=1pt]{}} {(\tikzlastnode.east) node[circle,fill=green,inner sep=1pt]{}} {(\tikzlastnode.south) node[circle,fill=black,inner sep=1pt]{}} {(\tikzlastnode.north) node[circle,fill=black,inner sep=1pt]{}}}
,clusterFaceDobIn/.style={diamond,draw=black,fill=white,inner sep=2.5pt,append after command= {(\tikzlastnode.west) edge (\tikzlastnode.east) (\tikzlastnode.south) edge (\tikzlastnode.north)},append after command= {(\tikzlastnode.west) node[circle,fill=black,inner sep=1pt]{}} {(\tikzlastnode.east) node[circle,fill=black,inner sep=1pt]{}} {(\tikzlastnode.south) node[circle,fill=green,inner sep=1pt]{}} {(\tikzlastnode.north) node[circle,fill=black,inner sep=1pt]{}}}
,dualEdge/.style={dashed}
,clusterLongEdge/.style={black}
,dobEdge/.style={dashed,very thick, green}
,dobFace/.style={face,fill=green}
,contour/.style={very thick}
,peierls/.style={very thick,blue,inner sep=0pt}
,excludedEdge/.style={contour,red}
}
\DeclareRobustCommand{\tikzvertex}[1]{\tikz[baseline=default]{ \node [#1] {};}}
\def\examplen{9}
\DeclareMathOperator{\diam}{diam}
\DeclareMathOperator{\supp}{supp}
\DeclareMathOperator{\dist}{dist}
\newcommand{\SLE}[1]{\ensuremath{\textup{SLE}({#1})}}
\newcommand{\ul}[1]{{\ensuremath{\underline{#1}}}}
\newcommand{\lis}[1]{{\ensuremath{\overline{#1}}}}
\newcommand{\bulk}{\ensuremath{\textup{bulk}} }
\newcommand{\state}[1]{\left\langle{#1} \right\rangle}
\newcommand{\bC}{\ensuremath{\mathbb{C}}}
\newcommand{\bE}{\ensuremath{\mathbb{E}}}
\newcommand{\bP}{\ensuremath{\mathbb{P}}}
\newcommand{\bR}{\ensuremath{\mathbb{R}}}
\newcommand{\bZ}{\ensuremath{\mathbb{Z}}}
\newcommand{\bfM}{\ensuremath{\mathbf{M}}}
\newcommand{\cD}{\ensuremath{\mathcal{D}}}
\newcommand{\cE}{\ensuremath{\mathcal{E}}}
\newcommand{\cH}{\ensuremath{\mathcal{H}}}
\newcommand{\cO}{\ensuremath{\mathcal{O}}}
\newcommand{\cP}{\ensuremath{\mathcal{P}}}
\newcommand{\cQ}{\ensuremath{\mathcal{Q}}}
\newcommand{\cS}{\ensuremath{\mathcal{S}}}
\newcommand{\cU}{\ensuremath{\mathcal{U}}}
\newcommand{\cX}{\ensuremath{\mathcal{X}}}
\newcommand{\cY}{\ensuremath{\mathcal{Y}}}
\global\long\def\one{\scalebox{1.1}{\textnormal{1}} \hspace*{-.75mm} \scalebox{0.7}{\raisebox{.3em}{\bf |}} }
\newcommand{\bZnn}{\ensuremath{\mathbb{Z}_{\geq 0}}}
\newcommand{\ii}{\mathrm{i}\,}
\newcommand{\PowerSet}[1]{2^{#1}}
\newcommand{\Span}{\operatorname{Span}}
\global\long\def\ud{\,\mathrm{d}}
\newcommand{\cGprimal}[1]{\mathcal{G}_{#1}^{\scriptscriptstyle{\color{gray}\bullet}}}
\newcommand{\cVprimal}[1]{\mathcal{V}_{#1}^{\scriptscriptstyle{\color{gray}\bullet}}}
\newcommand{\cEprimal}[1]{\mathcal{E}_{#1}^{\scriptscriptstyle{\color{gray}\bullet}}}
\newcommand{\cFprimal}[1]{\mathcal{F}_{#1}}
\newcommand{\cGdual}[1]{\mathcal{G}_{#1}}
\newcommand{\cEdual}[1]{\mathcal{E}_{#1}}
\newcommand{\Eshort}[1]{\mathscr{E}^{\textup{short}}_{#1}}
\newcommand{\Elong}[1]{\mathscr{E}^{\textup{long}}_{#1}}
\global\long\def\eprimal{e^{\scriptscriptstyle{\color{gray}\bullet}}}
\global\long\def\vprimal{v^{\scriptscriptstyle{\color{gray}\bullet}}}
\global\long\def\wprimal{w^{\scriptscriptstyle{\color{gray}\bullet}}}
\newcommand{\Xprimal}{X^{\scriptscriptstyle{\color{gray}\bullet}}}
\newcommand{\Yprimal}{Y^{\scriptscriptstyle{\color{gray}\bullet}}}
\newcommand{\Aprimal}{A^{\scriptscriptstyle{\color{gray}\bullet}}}
\newcommand{\Hprimal}{H^{\scriptscriptstyle{\color{gray}\bullet}}}
\newcommand{\cGcluster}[1]{\mathscr{G}_{#1}^c}
\newcommand{\cVcluster}[1]{\mathscr{V}_{#1}^c}
\newcommand{\vcluster}[2]{{\halfedge}_{#1}^{#2}}
\newcommand{\cEcluster}[1]{\mathscr{E}_{#1}^c}
\global\long\def\ecluster{e^c}
\global\long\def\edual{e}
\global\long\def\halfedge{h}
\global\long\def\Pbc{P_o}
\global\long\def\Ombc{\Omega_o}
\global\long\def\Interaction{U}
\global\long\def\Potential{\cU}
\global\long\def\bface{f(\bhalfedge)}
\global\long\def\eface{f(\ehalfedge)}
\global\long\def\bedge{e(\bhalfedge)}
\global\long\def\eedge{e(\ehalfedge)}
\global\long\def\etime{n_{\textnormal{out}}}
\global\long\def\bhalfedge{h_{\textnormal{in}}}
\global\long\def\ehalfedge{h_{\textnormal{out}}}
\global\long\def\GrN{\varphi^N}
\global\long\def\GrE{\varphi^E}
\global\long\def\GrS{\varphi^S}
\global\long\def\GrW{\varphi^W}
\global\long\def\diamondplus{\smash{\rotatebox[origin=c]{45}{$\boxtimes$}}}
\global\long\def\degree{\mathrm{d}}
\title{On the spin interface distribution for \\ non-integrable variants of the two-dimensional Ising model}  
\date{}
\author{Rafael L.~Greenblatt\thanks{Department of Mathematics, Universit\`a di Roma ``Tor Vergata'', Italy. \protect\url{greenblatt@mat.uniroma2.it}} \;
and \,
Eveliina Peltola\thanks{Department of Mathematics and Systems Analysis, Aalto University, Finland; \protect\\  and 
Institute for Applied Mathematics, University of Bonn, Germany.   \protect\url{eveliina.peltola@aalto.fi}}
}
\setlist[enumerate]{topsep = 1ex, leftmargin=.8cm, itemsep= -2pt}
\let\OLDthebibliography\thebibliography
\renewcommand\thebibliography[1]{
  \OLDthebibliography{#1}
  \setlength{\parskip}{1pt}
  \setlength{\itemsep}{2pt}
}
\newtheorem{thm}{Theorem}[section]
\newtheorem{cor}[thm]{Corollary}
\newtheorem{conj}[thm]{Conjecture}
\newtheorem{lem}[thm]{Lemma}
\newtheorem{prop}[thm]{Proposition}
\theoremstyle{definition} 
\newtheorem{df}[thm]{Definition}
\newtheorem{remark}[thm]{Remark}
\numberwithin{equation}{section}
\numberwithin{figure}{section}
\begin{document}
\maketitle


\begin{abstract}
\noindent 
We point out that the construction of a martingale observable describing the spin interface of the two-dimensional Ising model extends to a class of non-integrable variants of the two-dimensional Ising model, and express it in terms of Grassmann integrals.  Under a conjecture about the scaling limit of this object, which is similar to some results recently obtained using constructive renormalization group methods, this would imply that the distribution of the interface at criticality has the same scaling limit as in the integrable model: Schramm-Loewner evolution \SLE3.
\end{abstract}

\tableofcontents

\newpage

\bigskip{}
\section{Introduction}

A number of critical discrete interface processes in two dimensions are known to converge to versions of Schramm-Loewner evolution (SLE), a universal family of conformally invariant processes satisfying a domain Markov property~\cite{Schramm:Scaling_limits_of_LERW_and_UST}. 
The most common strategy for proofs of this convergence, beginning with~\cite{Smirnov:Critical_percolation_in_the_plane, LSW:Conformal_invariance_of_planar_LERW_and_UST} 
and implemented for a variety of cases 
in~\cite{Schramm-Sheffield:Harmonic_explorer_and_its_convergence_to_SLE4, 
Smirnov:Towards_conformal_invariance_of_2D_lattice_models, 
Schramm-Sheffield:Contour_lines_of_2D_discrete_GFF, 
CDHKS:Convergence_of_Ising_interfaces_to_SLE, 
Izyurov:Critical_Ising_interfaces_in_multiply_connected_domains} (among others),
centers around a family of  ``observables'' depending on the domain shape and a ``marked'' point on the lattice, 
which are martingales with respect to the natural filtration of the growing interface, 
and admit a conformally invariant scaling limit. 
The martingale property of the observable makes it possible to use it to analyze the Loewner driving function of the interface, 
and one then proceeds to show that convergence of the martingale observable implies convergence of the driving function. 
There are a number of techniques which can then be used to prove convergence of the interface itself. 
This strategy has been rigorously carried out for only a few models ---  
including in particular the nearest-neighbor critical Ising model~\cite{CDHKS:Convergence_of_Ising_interfaces_to_SLE}.

For the spin interface of the critical Ising model, 
the ``usual'' observable was developed from a number of closely related objects known variously as 
spinors, parafermions, or discrete fermions, 
first introduced by Kadanoff and Ceva~\cite{Kadanoff-Ceva:Determination_of_an_operator_algebra_for_the_two-dimensional_Ising_model}, 
subsequently identified with primary fields of conformal field theories describing the scaling limits of this and related models~\cite{Nienhuis-Knops:Spinor_exponents_for_the_two-dimensionalPotts_model, 
Cardy-Riva:Holomorphic_parafermions_in_the_Potts_model_and_SLE, 
CHI:Correlations_of_primary_fields_in_the_critical_planar_Ising_model} --- 
and finally applied to the proof of convergence of the interface process to SLE 
in the works initiated by Smirnov~\cite{Smirnov:Towards_conformal_invariance_of_2D_lattice_models,
Smirnov:Conformal_invariance_in_random_cluster_models1, 
CDHKS:Convergence_of_Ising_interfaces_to_SLE} 
also based on earlier ideas by Aizenman, Cardy, and Kenyon, 
among others~\cite{Aizenman:Geometry_of_critical_percolation_and_conformal_invariance, 
Kenyon:Conformal_invariance_of_domino_tiling, 
Kenyon:Dominos_and_the_Gaussian_free_field,
Rajabpour-Cardy:Discretely_holomorphic_parafermions_in_lattice_ZN_models,
Ikhlef-Cardy:Discretely_holomorphic_parafermions_and_integrable_loop_models}.
The relationship between different forms of the observable is reviewed in~\cite{CCK:Revisiting_the_combinatorics_of_the_2D_Ising_model}.  
The fact that the scaling limit of this observable is conformally invariant follows by identifying it 
with a holomorphic function with specific boundary conditions that characterize it uniquely, 
which in turn follows from discrete holomorphicity properties of the lattice observable~\cite{Mercat:Discrete_Riemann_surfaces_and_the_Ising_model, 
Chelkak-Smirnov:Universality_in_2D_Ising_and_conformal_invariance_of_fermionic_observables}\footnote{In the present work, 
we do not attempt to derive or even introduce discrete Cauchy-Riemann equations or s-holomorphicity, 
since they cannot be expected to hold in general for the models we are considering, 
and (unlike other subjects we review in more detail) their only role is in characterizing the scaling limit in the $\lambda=0$ case. 
In the $\lambda=0$ case we do, however, point out the relation of our conventions to those of~\cite{Smirnov:Conformal_invariance_in_random_cluster_models1,
Chelkak-Smirnov:Universality_in_2D_Ising_and_conformal_invariance_of_fermionic_observables, 
CCK:Revisiting_the_combinatorics_of_the_2D_Ising_model, Chelkak:Ising_model_and_s-embeddings_of_planar_graphs} and related works --- see \cref{rem:sholo}.}.
Furthermore, the discrete interface process naturally has the domain Markov property, 
which can be seen as an explanation for the martingale property of the observable 
and the corresponding property of the limiting SLE process.

Customarily (as in, e.g.,~\cite{CDHKS:Convergence_of_Ising_interfaces_to_SLE, 
Kemppainen-Smirnov:Random_curves_scaling_limits_and_Loewner_evolutions, 
Chelkak:Ising_model_and_s-embeddings_of_planar_graphs}), 
the relationship of convergence of this observable to convergence of the interface is shown by a precompactness argument based ultimately on crossing estimates, 
which in turn are shown to hold for the Ising model using monotonicity properties related to the Fortuin-Kasteleyn-Ginibre (FKG) inequalities.
Proofs along the above lines have been formulated for various forms of the Ising model with critical pair interactions corresponding to the edges of some planar graph 
(with criticality defined in a way associated with a planar embedding of some specific form, 
which also plays a crucial role in the identification of the scaling limit, 
as discussed most fully in~\cite{Chelkak:Ising_model_and_s-embeddings_of_planar_graphs}).

\subsection*{Non-nearest-neighbor Ising models.} 

We consider a version of the Ising model defined by the formal Hamiltonian
\begin{align}
	\Hprimal(\sigma)
	:= - J \sum_{\{\vprimal,\wprimal\} \in \cEprimal{}} \sigma_{\vprimal} \sigma_{\wprimal}
	+
	\lambda \sum_{X \subset \cEprimal{}} \, V(X) 
	\prod_{\{\vprimal,\wprimal\} \in X} \sigma_{\vprimal} \sigma_{\wprimal}
	.
	\label{eq:formal_inter_Ham}
\end{align}
Here, the first term is the nearest-neighbor interaction with coupling constant (strength) $J > 0$, 
and $\lambda \in \bR$ is a parameter controlling the strength of the added non-nearest-neighbor multi-spin interaction 
(fixed but small in absolute value compared to $J$), 
and $\cEprimal{}$ denotes the set of nearest-neighbor pairs $\{\vprimal,\wprimal\}$ (edges) on the square lattice, 
$\sigma_{\vprimal} \in \{\pm 1\}$ are spins on its vertices $\vprimal$,  
and the potential $V(X)$ on subsets of edges is translation-invariant and finite-range in the sense that 
it is supported on sets $X$ with uniformly bounded diameter. 
Then, given an inverse temperature $\beta > 0$, the Gibbs measure associated to the Hamiltonian $\Hprimal$ gives 
weight $\exp(-\beta \Hprimal(\sigma))$ to each spin configuration $\sigma = (\sigma_{\vprimal})$,
thus defining a version of the Ising model (in its spin form).

At least for small $|\lambda|$, it is expected that these models belong to the same universality class as 
the $\lambda=0$ case (which is simply the standard Ising model on the square lattice), 
meaning that they exhibit an order-disorder phase transition at a unique critical temperature $1/\beta_c \in (0,\infty)$, 
and several properties of the critical point are the same.  
It has been shown 
that some correlation functions indeed have the same asymptotic behavior up to 
rescalings~\cite{GGM:The_scaling_limit_of_the_energy_correlations_in_non_integrable_Ising_models, AGG:Non_integrable_Ising_models_in_cylindrical_geometry, AGG:Energy_correlations_of_non_integrable_Ising_models,CGG:In_prep},    
and on physical grounds it seems reasonable to expect that the scaling limit of the spin interfaces is also the same \SLE3 process.
Any proof of this will, however, have a number of differences from the standard case: 
generically, the model defined by \cref{eq:formal_inter_Ham} is not in general expected to have discrete holomorphic observables, its spin interface does not have the domain Markov property (at least not in the same sense as the $\lambda=0$ version),
and even the FKG inequality (see~\cite{Grimmett:Random_cluster_model}) used to obtain crossing estimates holds only in some special cases.

\subsection*{Our goal and motivation.} 

We consider the observable defined in~\cite{Chelkak-Smirnov:Universality_in_2D_Ising_and_conformal_invariance_of_fermionic_observables, 
Izyurov:Critical_Ising_interfaces_in_multiply_connected_domains} 
for the spin (or Peierls) interface of the Ising model on a square lattice in the context of
these non-integrable models\footnote{We shall only consider the spin interface; e.g., we do not consider the Fortuyn-Kasteleyn (or random-cluster) interfaces, 
which for the standard Ising model converge to \SLE{16/3} 
\cite{CDHKS:Convergence_of_Ising_interfaces_to_SLE, 
BPW:On_the_uniqueness_of_global_multiple_SLEs, 
Izyurov:On_multiple_SLE_for_the_FK_Ising_model,
FPW:Connection_probabilities_of_multiple_FK_Ising_interfaces}.
In fact, it is not even clear whether there is a sensible generalization of this interface 
to the class of models we are considering: it is defined in terms of random clusters, 
which are not measurable in terms of the spin configuration, 
but require the introduction of auxiliary random variables.  
This can be done for Hamiltonians consisting only of pair interactions with a fixed sign, 
but even then if the interacting pairs do not form a planar graph, 
it is not clear how to identify the boundary of such a cluster with a curve in the plane.  
For Hamiltonians with terms involving more than two spins, the situation is even more complicated.}.
It is still a martingale with respect to the discrete interface process,  
and we show in \cref{thm:main_thm} that it can be represented in terms of 
correlation functions of an interacting lattice fermionic quantum field theory, 
that is, as a Grassmann (Berezin) integral similar to the path integral expression for the Schwinger functions.  
Previously such representations have been given for various observables in periodic and open boundary conditions (based on the high-temperature contour expansion) \cite{GGM:The_scaling_limit_of_the_energy_correlations_in_non_integrable_Ising_models,AGG:Non_integrable_Ising_models_in_cylindrical_geometry}; studying spin interfaces requires a different representation, which however has all of the properties used in a constructive renormalization group (RG) treatment (one should compare our \cref{prop:interacting_Grassmann_main}  
with e.g.~\cite[Proposition~1]{GGM:The_scaling_limit_of_the_energy_correlations_in_non_integrable_Ising_models} or~\cite[Proposition~3.1]{AGG:Non_integrable_Ising_models_in_cylindrical_geometry}) and is somewhat simpler to derive.

Importantly, the Grassmann observable in question is \emph{local} 
(like the energy observable studied in~\cite{GGM:The_scaling_limit_of_the_energy_correlations_in_non_integrable_Ising_models, AGG:Energy_correlations_of_non_integrable_Ising_models}, 
or the spin on the boundary~\cite{Cava:PhD_thesis, CGG:In_prep}, but unlike the generic spin).
Therefore, although we will not attempt to make rigorous conclusions concerning the scaling limit in the present article, 
there are reasons to believe that a scaling limit could be eventually  
obtained by constructive RG methods. 
The purpose of this article is to offer a \emph{specific setup and strategy} to build on recent developments on the RG side combined with the already established non-interacting case,
for deriving the scaling limit up to specific technical issues that we discuss in \Cref{sec:conclusions}.
Our main aim is to clarify exactly what needs to be proven by RG methods in order to obtain convergence of interfaces, intended to prepare the way for future work in this direction.
This strategy would lead to a proof that the observable is simply rescaled 
(along with a change in the critical temperature, cf. \cref{conj:interacting_convergence}), 
leading to \emph{exactly the same}\footnote{This should be distinguished from, e.g., interacting dimer models~\cite{GMT:Haldane_relation_for_interacting_dimers}, 
where at least some observables change more qualitatively but are still expected to have conformally invariant scaling limits.} 
scaling limit for the interface at the appropriate critical temperature as in the integrable case: \SLE3.

\subsection*{Additional remarks.}

Note that convergence of the observable might, a priori, not be sufficient to guarantee convergence of the interfaces in a strong sense in a space of curves.
As already mentioned, in existing proofs for the Ising model, this implication is obtained using crossing estimates~\cite{Aizenman-Burchard:Holder_regularity_and_dimension_bounds_for_random_curves,Kemppainen-Smirnov:Random_curves_scaling_limits_and_Loewner_evolutions}, 
which are ultimately based on the FKG inequality --- which does not generally hold in the non-integrable case. 
Nevertheless, by quite general arguments 
as in the original approach of Lawler, Schramm, and Werner~\cite{LSW:Conformal_invariance_of_planar_LERW_and_UST}, 
convergence of a martingale observable in a suitable locally uniform sense does imply that the \emph{Loewner driving function} of the interface converges. 
Though this is a priori much weaker than convergence of the interface process itself, 
under a \emph{reversibility condition}, a result of Sheffield and Sun~\cite{Sheffield-Sun:Strong_path_convergence_from_Loewner_driving_function_convergence} 
(originating from an earlier project of Schramm and Sheffield)
shows that it actually implies convergence of the interface.
The relevant reversibility property does not hold for all interface processes, 
but it does hold for our specific choice of the definition of the interface as a function of the low-temperature contour representation (\cref{def:interface}).

One feature of our proposal --- which may be surprising ---
is that the discrete interface process does not enjoy the domain Markov property, which is one of the defining features of \SLE3.  
In the standard Ising model on a planar graph, 
conditioned on the presence of a Peierls interface dividing the system into two parts, 
everything else about these two parts is (conditionally) independent. 
This is not the case in general, since the Hamiltonian~\eqref{eq:formal_inter_Ham} can {(and will) 
include interaction terms crossing the interface. 
Consequently, the distribution of the interface conditioned on an initial segment is not measurable with respect to the \emph{hull} of that segment, as the domain Markov property would require.
However, since the observable is still a martingale with respect to the interface process, this is not a problem for demonstrating convergence of the interface,
and what we propose in the present work is that the conjectured behavior of the observables 
still makes it possible to conclude that the full domain Markov property 
is restored as an \emph{emergent property} of the scaling limit, 
similar to conformal invariance (which also does not have an exact discrete analogue in all of the models under consideration).

Related to this, let us also note that when defining boundary conditions, 
it is necessary to specify the spin configuration at some distance from the boundary ---
in particular, it takes more than labelling two points on the boundary 
to specify ``Dobrushin'' type boundary conditions. 
Part of \cref{conj:interacting_convergence} is a bound on the effect of this additional information on the observable, which would imply that the additional boundary condition has no influence on the scaling limit, restoring the expected measurability property for the continuous observable and thus for the conditional distribution of the continuous interface.

Let us finally note that our work is limited to a square lattice, 
whereas for the planar nearest-neighbor Ising model ($\lambda=0$)
the spin interface is known to converge to SLE$(3)$ for a very general class of graphs,
as systematically developed in particular by Chelkak 
\cite{Chelkak-Smirnov:Universality_in_2D_Ising_and_conformal_invariance_of_fermionic_observables, 
Chelkak:Ising_model_and_s-embeddings_of_planar_graphs}.
For non-integrable interactions, the situation is likely to be significantly more complex for the following reason. 
The interaction in~\eqref{eq:formal_inter_Ham} has an effect analogous to changing the coupling of the planar model (thus changing the critical temperature), 
a priori changing each edge in a way depending on the structure the Hamiltonian. 
For the square lattice it is possible to guarantee that this effect of the interaction is uniform 
(apart from boundary effects) by stipulating that it respects the symmetries of $\bZ^2$ sufficiently well. 
In many other cases, the simplest plausible scenario is that the long-distance behavior of the system resembles 
that of an Ising model with inhomogeneous couplings, which includes cases where the critical behavior of the model is dramatically 
changed\footnote{E.g., disordered models which are translation-invariant in one direction apparently have a critical point 
only in a very different sense than the familiar case~\cite{McCoy-Wu:2D_Ising_model, Fisher:Critical_behavior_of_random_transverse_field_Ising_spin_chains} (see also~\cite{CGGContinuum_limit_of_random_matrix_products_in_statistical_mechanics_of_disordered_systems, Giacomin-Greenblatt:Lyapunov_exponent_for_products_of_random_Ising_transfer_matrices} 
for recent progress towards a rigorous proof of this statement), 
and hence should not exhibit conformal invariance.}.

\subsection*{Outline of the rest of this article.}

In \cref{sec:prelim}, we review the definition of the Ising model with Dobrushin boundary conditions, the low-temperature contour representation, and the spin (or Peierls) interface. 
We do this partly to make the present work as self-contained as possible, but more so to introduce many definitions used in the sequel
(some of which involve varying conventions in the literature), 
as well as to emphasize some features of these objects particularly relevant to the present work. 
In particular, in \cref{sec:interface} we introduce an  interface exploration process (\cref{def:interface})
and 
the associated martingale observable (\cref{prop:M_general}), 
which is a variant of the one used to prove the conformal invariance of the critical Ising model in the scaling limit in Smirnov's seminal works~\cite{Smirnov:Towards_conformal_invariance_of_2D_lattice_models}. 
In \cref{sec:Grassman}, we review a representation of the planar Ising model in terms of Grassmann variables and give an expression for the martingale observable in this framework, 
i.e., as a Grassmann (Berezin) integral, similar to~\cite{CCK:Revisiting_the_combinatorics_of_the_2D_Ising_model,Samuel:The_use_of_anticommuting_variable_integrals_in_statistical_mechanics}.
In particular, we do this using the relationship to the low-temperature contour representation in a way which prepares the later generalization.

Then, in \cref{sec:interact} we arrive at the modified (non-integrable) model, 
which we examine first in terms of low-temperature contours and then use them 
to derive a Grassmann representation, including an expression for a martingale observable 
(\cref{eq:M_interacting} and \cref{thm:main_thm}).  
This should make it possible to study the scaling limit via rigorous RG methods.
Unfortunately, thus far these methods have only been implemented for toroidal and cylindrical domains --- 
so we are not able to proceed any further in a rigorous fashion at this stage.  
The results obtained thus far do, however, lead to the specific, natural \cref{conj:interacting_convergence} about these correlation functions, 
which would imply their convergence in a locally uniform fashion to the same scaling limit as in the planar case. 
In \cref{sec:conjecture}, we present this conjecture and explain how it implies the convergence of the interface to \SLE3.
We end this article with some further comments in \cref{sec:conclusions}.

\subsection*{Acknowledgments.}

We thank Hugo Duminil-Copin and R\'emy Mahfouf for insightful correspondences.

The work of R.L.G.~was supported by the European Research Council (ERC) under the European Union's Horizon 2020 research and innovation programme (ERC StG MaMBoQ, grant agreement No.\ 802901 for R.L.G.) and by the MIUR Excellence Department Project MatMod@TOV awarded to the Department of Mathematics, University of Rome Tor Vergata, CUP E83C18000100006.

E.P.~was supported 
by the Deutsche Forschungsgemeinschaft (DFG, German Research Foundation) under Germany's Excellence Strategy {EXC-2047/1-390685813}, 
the DFG collaborative research centre ``The mathematics of emerging effects'' CRC-1060/211504053,
as well as by 
the Academy of Finland Centre of Excellence Programme grant number 346315 ``Finnish centre of excellence in Randomness and STructures (FiRST)'',  
the Academy of Finland grant number 340461 ``Conformal invariance in planar random geometry'', 
and the European Research Council (ERC) under the European Union's Horizon 2020 research and innovation programme (101042460): 
ERC Starting grant ``Interplay of structures in conformal and universal random geometry'' (ISCoURaGe).

%
%
%

\bigskip{}
\section{Preliminaries on the nearest-neighbor Ising model}
\label{sec:prelim}

In this section we present a number of basic objects underlying our analysis.
While the main ideas are already well-known, we take the occasion to set up important conventions for later use and highlight some features which are particularly important here. 

In \cref{sec:setup} we summarize the geometrical objects underlying all other definitions.
Then in \cref{sec:Ising}, we define the nearest-neighbor (spin-)Ising model, its partition functions, and the boundary conditions of interest. 
The low-temperature expansion of the domain walls between the variable spins gives rise to a contour representation of the Ising model, i.e.\ a formulation of the underlying probability space in terms of geometric objects rather than spins. 
We then define various objects related to the interface in terms of such a model in \cref{sec:interface}, in particular
the associated exploration process (\cref{def:interface}) and martingale observable (\cref{prop:M_general}).
Crucially, the definitions --- and certain key properties --- do not depend on which probability measure is used, and so they generalize immediately to the non-integrable systems which are the main subject of this article.

\begin{figure}[t]
\centering
\begin{subfigure}[t]{0.45\textwidth}
\centering
\tikzpicturedependsonfile{graph_parts.tikz}
\begin{tikzpicture}
	\draw (0,0) -- (0,1);
	\draw (0,0) -- (1,0);
	\draw (0,1) -- (0,2);
	\draw (0,1) -- (1,1);
	\draw (0,2) -- (0,3);
	\draw (0,2) -- (1,2);
	\draw (0,3) -- (1,3);
	\draw (1,0) -- (1,1);
	\draw (1,1) -- (1,2);
	\draw (1,1) -- (2,1);
	\draw (1,2) -- (1,3);
	\draw (2,1) -- (3,1);
	\draw (3,0) -- (3,1);
	\draw (3,1) -- (3,2);
	\draw (-1,2) -- (0,2);
	\draw (0,0) -- (-1,0);
	\draw (0,0) -- (0,-1);
	\draw (0,1) -- (-1,1);
	\draw (0,3) -- (-1,3);
	\draw (0,3) -- (0,4);
	\draw (1,0) -- (2,0);
	\draw (1,0) -- (1,-1);
	\draw (1,2) -- (2,2);
	\draw (1,3) -- (1,4);
	\draw (1,3) -- (2,3);
	\draw (2,1) -- (2,2);
	\draw (2,1) -- (2,0);
	\draw (3,0) -- (3,-1);
	\draw (3,0) -- (2,0);
	\draw (3,0) -- (4,0);
	\draw (3,1) -- (4,1);
	\draw (3,2) -- (3,3);
	\draw (3,2) -- (2,2);
	\draw (3,2) -- (4,2);
	\draw (-1,2) -- (-1,1);
	\draw (-1,2) -- (-1,3);
	\draw (-1,2) -- (-2,2);
	\draw[dualEdge] (3.5,0.5) -- (3.5,1.5);
	\draw[dualEdge] (3.5,-0.5) -- (3.5,0.5);
	\draw[dualEdge] (0.5,2.5) -- (0.5,3.5);
	\draw[dualEdge] (0.5,2.5) -- (1.5,2.5);
	\draw[dualEdge] (2.5,1.5) -- (2.5,2.5);
	\draw[dualEdge] (2.5,1.5) -- (3.5,1.5);
	\draw[dualEdge] (0.5,3.5) -- (1.5,3.5);
	\draw[dualEdge] (-0.5,3.5) -- (0.5,3.5);
	\draw[dualEdge] (1.5,2.5) -- (1.5,3.5);
	\draw[dualEdge] (2.5,2.5) -- (3.5,2.5);
	\draw[dualEdge] (1.5,1.5) -- (2.5,1.5);
	\draw[dualEdge] (1.5,1.5) -- (1.5,2.5);
	\draw[dualEdge] (-0.5,-0.5) -- (-0.5,0.5);
	\draw[dualEdge] (-0.5,-0.5) -- (0.5,-0.5);
	\draw[dualEdge] (-0.5,0.5) -- (0.5,0.5);
	\draw[dualEdge] (-0.5,0.5) -- (-0.5,1.5);
	\draw[dualEdge] (0.5,0.5) -- (1.5,0.5);
	\draw[dualEdge] (0.5,0.5) -- (0.5,1.5);
	\draw[dualEdge] (0.5,-0.5) -- (0.5,0.5);
	\draw[dualEdge] (0.5,-0.5) -- (1.5,-0.5);
	\draw[dualEdge] (-1.5,2.5) -- (-0.5,2.5);
	\draw[dualEdge] (1.5,0.5) -- (1.5,1.5);
	\draw[dualEdge] (1.5,0.5) -- (2.5,0.5);
	\draw[dualEdge] (1.5,-0.5) -- (1.5,0.5);
	\draw[dualEdge] (0.5,1.5) -- (0.5,2.5);
	\draw[dualEdge] (0.5,1.5) -- (1.5,1.5);
	\draw[dualEdge] (-0.5,1.5) -- (0.5,1.5);
	\draw[dualEdge] (-0.5,1.5) -- (-0.5,2.5);
	\draw[dualEdge] (3.5,1.5) -- (3.5,2.5);
	\draw[dualEdge] (2.5,0.5) -- (3.5,0.5);
	\draw[dualEdge] (2.5,0.5) -- (2.5,1.5);
	\draw[dualEdge] (2.5,-0.5) -- (3.5,-0.5);
	\draw[dualEdge] (2.5,-0.5) -- (2.5,0.5);
	\draw[dualEdge] (-0.5,2.5) -- (0.5,2.5);
	\draw[dualEdge] (-0.5,2.5) -- (-0.5,3.5);
	\draw[dualEdge] (-1.5,1.5) -- (-1.5,2.5);
	\draw[dualEdge] (-1.5,1.5) -- (-0.5,1.5);
	\draw (1.5,3.5) node[face] {};
	\draw (3.5,0.5) node[face] {};
	\draw (3.5,-0.5) node[face] {};
	\draw (0.5,2.5) node[face] {};
	\draw (2.5,1.5) node[face] {};
	\draw (0.5,3.5) node[face] {};
	\draw (-0.5,3.5) node[face] {};
	\draw (1.5,2.5) node[face] {};
	\draw (2.5,2.5) node[face] {};
	\draw (1.5,1.5) node[face] {};
	\draw (3.5,2.5) node[face] {};
	\draw (-0.5,-0.5) node[face] {};
	\draw (-0.5,0.5) node[face] {};
	\draw (0.5,0.5) node[face] {};
	\draw (0.5,-0.5) node[face] {};
	\draw (-1.5,2.5) node[face] {};
	\draw (1.5,0.5) node[face] {};
	\draw (1.5,-0.5) node[face] {};
	\draw (0.5,1.5) node[face] {};
	\draw (-0.5,1.5) node[face] {};
	\draw (3.5,1.5) node[face] {};
	\draw (2.5,0.5) node[face] {};
	\draw (2.5,-0.5) node[face] {};
	\draw (-0.5,2.5) node[face] {};
	\draw (-1.5,1.5) node[face] {};
	\draw (0,0) node[vertex] {};
	\draw (0,1) node[vertex] {};
	\draw (0,2) node[vertex] {};
	\draw (0,3) node[vertex] {};
	\draw (1,0) node[vertex] {};
	\draw (1,1) node[vertex] {};
	\draw (1,2) node[vertex] {};
	\draw (1,3) node[vertex] {};
	\draw (2,1) node[vertex] {};
	\draw (3,0) node[vertex] {};
	\draw (3,1) node[vertex] {};
	\draw (3,2) node[vertex] {};
	\draw (-1,2) node[vertex] {};
	\draw (-1,1) node[edgeVertex] {};
	\draw (4,1) node[edgeVertex] {};
	\draw (-1,0) node[edgeVertex] {};
	\draw (3,3) node[edgeVertex] {};
	\draw (-1,3) node[edgeVertex] {};
	\draw (2,2) node[edgeVertex] {};
	\draw (-2,2) node[edgeVertex] {};
	\draw (1,4) node[edgeVertex] {};
	\draw (3,-1) node[edgeVertex] {};
	\draw (2,0) node[edgeVertex] {};
	\draw (2,3) node[edgeVertex] {};
	\draw (0,4) node[edgeVertex] {};
	\draw (4,2) node[edgeVertex] {};
	\draw (0,-1) node[edgeVertex] {};
	\draw (1,-1) node[edgeVertex] {};
	\draw (4,0) node[edgeVertex] {};
\end{tikzpicture}
\end{subfigure}
~
\begin{subfigure}[t]{0.45\textwidth}
\centering
\tikzpicturedependsonfile{Kasteleyn_graph_parts.tikz}
\begin{tikzpicture}
	\draw (-1,1) node[vertex,draw=white,fill=white] {};
	\draw (4,1) node[vertex,draw=white,fill=white] {};
	\draw (-1,0) node[vertex,draw=white,fill=white] {};
	\draw (3,3) node[vertex,draw=white,fill=white] {};
	\draw (-1,3) node[vertex,draw=white,fill=white] {};
	\draw (2,2) node[vertex,draw=white,fill=white] {};
	\draw (-2,2) node[vertex,draw=white,fill=white] {};
	\draw (1,4) node[vertex,draw=white,fill=white] {};
	\draw (3,-1) node[vertex,draw=white,fill=white] {};
	\draw (2,0) node[vertex,draw=white,fill=white] {};
	\draw (2,3) node[vertex,draw=white,fill=white] {};
	\draw (0,4) node[vertex,draw=white,fill=white] {};
	\draw (4,2) node[vertex,draw=white,fill=white] {};
	\draw (0,-1) node[vertex,draw=white,fill=white] {};
	\draw (1,-1) node[vertex,draw=white,fill=white] {};
	\draw (4,0) node[vertex,draw=white,fill=white] {};
	\draw[clusterLongEdge] (3.5,0.5) -- (3.5,1.5);
	\draw[clusterLongEdge] (3.5,-0.5) -- (3.5,0.5);
	\draw[clusterLongEdge] (0.5,2.5) -- (0.5,3.5);
	\draw[clusterLongEdge] (0.5,2.5) -- (1.5,2.5);
	\draw[clusterLongEdge] (2.5,1.5) -- (2.5,2.5);
	\draw[clusterLongEdge] (2.5,1.5) -- (3.5,1.5);
	\draw[clusterLongEdge] (0.5,3.5) -- (1.5,3.5);
	\draw[clusterLongEdge] (-0.5,3.5) -- (0.5,3.5);
	\draw[clusterLongEdge] (1.5,2.5) -- (1.5,3.5);
	\draw[clusterLongEdge] (2.5,2.5) -- (3.5,2.5);
	\draw[clusterLongEdge] (1.5,1.5) -- (2.5,1.5);
	\draw[clusterLongEdge] (1.5,1.5) -- (1.5,2.5);
	\draw[clusterLongEdge] (-0.5,-0.5) -- (-0.5,0.5);
	\draw[clusterLongEdge] (-0.5,-0.5) -- (0.5,-0.5);
	\draw[clusterLongEdge] (-0.5,0.5) -- (0.5,0.5);
	\draw[clusterLongEdge] (-0.5,0.5) -- (-0.5,1.5);
	\draw[clusterLongEdge] (0.5,0.5) -- (1.5,0.5);
	\draw[clusterLongEdge] (0.5,0.5) -- (0.5,1.5);
	\draw[clusterLongEdge] (0.5,-0.5) -- (0.5,0.5);
	\draw[clusterLongEdge] (0.5,-0.5) -- (1.5,-0.5);
	\draw[clusterLongEdge] (-1.5,2.5) -- (-0.5,2.5);
	\draw[clusterLongEdge] (1.5,0.5) -- (1.5,1.5);
	\draw[clusterLongEdge] (1.5,0.5) -- (2.5,0.5);
	\draw[clusterLongEdge] (1.5,-0.5) -- (1.5,0.5);
	\draw[clusterLongEdge] (0.5,1.5) -- (0.5,2.5);
	\draw[clusterLongEdge] (0.5,1.5) -- (1.5,1.5);
	\draw[clusterLongEdge] (-0.5,1.5) -- (0.5,1.5);
	\draw[clusterLongEdge] (-0.5,1.5) -- (-0.5,2.5);
	\draw[clusterLongEdge] (3.5,1.5) -- (3.5,2.5);
	\draw[clusterLongEdge] (2.5,0.5) -- (3.5,0.5);
	\draw[clusterLongEdge] (2.5,0.5) -- (2.5,1.5);
	\draw[clusterLongEdge] (2.5,-0.5) -- (3.5,-0.5);
	\draw[clusterLongEdge] (2.5,-0.5) -- (2.5,0.5);
	\draw[clusterLongEdge] (-0.5,2.5) -- (0.5,2.5);
	\draw[clusterLongEdge] (-0.5,2.5) -- (-0.5,3.5);
	\draw[clusterLongEdge] (-1.5,1.5) -- (-1.5,2.5);
	\draw[clusterLongEdge] (-1.5,1.5) -- (-0.5,1.5);
	\draw (1.5,3.5) node[clusterFace] {};
	\draw (3.5,0.5) node[clusterFace] {};
	\draw (3.5,-0.5) node[clusterFace] {};
	\draw (0.5,2.5) node[clusterFace] {};
	\draw (2.5,1.5) node[clusterFace] {};
	\draw (0.5,3.5) node[clusterFace] {};
	\draw (-0.5,3.5) node[clusterFace] {};
	\draw (1.5,2.5) node[clusterFace] {};
	\draw (2.5,2.5) node[clusterFace] {};
	\draw (1.5,1.5) node[clusterFace] {};
	\draw (3.5,2.5) node[clusterFace] {};
	\draw (-0.5,-0.5) node[clusterFace] {};
	\draw (-0.5,0.5) node[clusterFace] {};
	\draw (0.5,0.5) node[clusterFace] {};
	\draw (0.5,-0.5) node[clusterFace] {};
	\draw (-1.5,2.5) node[clusterFace] {};
	\draw (1.5,0.5) node[clusterFace] {};
	\draw (1.5,-0.5) node[clusterFace] {};
	\draw (0.5,1.5) node[clusterFace] {};
	\draw (-0.5,1.5) node[clusterFace] {};
	\draw (3.5,1.5) node[clusterFace] {};
	\draw (2.5,0.5) node[clusterFace] {};
	\draw (2.5,-0.5) node[clusterFace] {};
	\draw (-0.5,2.5) node[clusterFace] {};
	\draw (-1.5,1.5) node[clusterFace] {};
\end{tikzpicture}
\end{subfigure}
\caption{An example of the graphs being used.
On the left,
the points of $\Omega \subset \cVprimal\Omega$ (interior primal vertices), 
where the variable spins will be located, are shown as ``\raisebox{-1mm}{\tikzvertex{vertex}}'',  
the boundary vertices in $\partial\cVprimal\Omega$ (having fixed spins)  
as ``\raisebox{-1mm}{\tikzvertex{edgeVertex}}'', and edges in $\cEprimal\Omega$ as solid lines.  
The faces $\cFprimal\Omega$ (identified with dual vertices) are drawn as ``\raisebox{-1mm}{\tikzvertex{face}}'', 
and the associated dual edges $\cEdual\Omega$ as dashed lines.
On the right, the corresponding cluster graph $\cGcluster\Omega = ( \cVcluster\Omega, \cEcluster\Omega )$.
}
\label{fig:Omega_example}
\end{figure}

\subsection{Primal, dual, and cluster graphs and contour configurations}
\label{sec:setup}

We consider spin models on subsets of the two-dimensional 
square lattice $\bZ^2 = (\cVprimal{},\cEprimal{})$, whose vertices we identify with elements of 
$\cVprimal{} := \bZ + \ii \bZ$ (the set of complex numbers with integer real and imaginary parts)
and edges comprise the set $\cEprimal{}$ of unordered pairs 
$\{ \vprimal, \wprimal \} \subset \bZ + \ii \bZ$ with $|\vprimal - \wprimal |=1$.
The following definitions are illustrated in \cref{fig:Omega_example}.

Let $\Omega \subset \cVprimal{}$ be a finite subset of the vertices and 
let $\cEprimal\Omega \subset \cEprimal{}$ be the set of edges with at least one element in $\Omega$.
Let $\cVprimal\Omega \supset \Omega$ be the set of vertices in $\cVprimal{}$ which are at distance one or less from $\Omega$, and set $\partial \cVprimal\Omega := \cVprimal\Omega \setminus \Omega$.
We call $\cGprimal\Omega = (\cVprimal\Omega,\cEprimal\Omega)$ the \emph{primal graph}.

Next, let $\cFprimal{}$ be the set of faces of the lattice $\bZ^2$, which are unit squares, naturally identified with points in $(\bZ + \tfrac12) + \ii ( \bZ + \tfrac12 )$ (vertices of the dual lattice).  
Let $\cEdual{}$ be the set of unordered pairs of adjacent faces in $\cFprimal{}$, i.e.,~dual edges.
Note that each edge $\eprimal \in \cEprimal{}$ crosses exactly one element of $\cEdual{}$, which we denote $\edual$\footnote{Note also that faces of the dual lattice naturally correspond to (primal) vertices $\cVprimal{}$.}; 
we will frequently use this one-to-one correspondence $\eprimal \leftrightarrow \edual$.  
We say that $\edual \in \cEdual{}$ is \emph{incident to} a face $f \in \cFprimal{}$, and write $\edual \sim f$, 
if the corresponding dual vertex at $f$ is one of the endpoints of the dual edge $\edual$.

Let $\cEdual\Omega$ be the set of dual edges $\edual$ associated with edges $\eprimal \in \cEprimal\Omega$, 
and let $\cFprimal\Omega$ be the set of the faces incident to the dual edges in $\cEdual\Omega$.
We call $\cGdual\Omega = (\cFprimal\Omega,\cEdual\Omega)$ the \emph{dual graph}.
We~consider contour configurations on the dual graph.

We will make extensive use of the representation of the Ising model in terms of Grassmann variables 
indexed by ``half-edges'', which are pairs $h = (f(h),e(h)) \in \cFprimal{} \times \cEdual{}$.  
Let $\cVcluster\Omega$ be the set of half-edges with $f(h) \in \cFprimal\Omega$ (but not necessarily $e(h) \in \cEdual\Omega$).
Let $\Eshort\Omega$ (so-called ``short edges'') be the set of unordered pairs $\{\halfedge_1,\halfedge_2\} \subset \cVcluster\Omega$ with $f(\halfedge_1) = f(\halfedge_2)$ and $e(\halfedge_1) \neq e(\halfedge_2)$, 
and let $\Elong\Omega$ be the set of pairs with $e(\halfedge_1) = e(\halfedge_2) \in \cEdual\Omega$ and $f(\halfedge_1) \neq f(\halfedge_2)$.
The graph $\cGcluster\Omega = (\cVcluster\Omega, \cEcluster\Omega)$, 
where $\cEcluster\Omega := \Eshort\Omega \cup \Elong\Omega$, 
is a version of the \emph{cluster graph} or \emph{terminal graph} introduced by Kasteleyn~\cite{Kasteleyn:Dimer_statistics_and_phase_transitions} and illustrated in \cref{fig:Omega_example}.
Compared with~\cite{CCK:Revisiting_the_combinatorics_of_the_2D_Ising_model}, for example, our definition has some apparently redundant elements, which we find useful for introducing boundary conditions and for identifying elements of the graphs associated with different domains $\Omega$.
Note that this graph, unlike the others introduced so far, is necessarily nonplanar.  
The set $\Elong\Omega$ of long edges is naturally in one-to-one correspondence with $\cEdual\Omega$, 
and for each face $f \in \cFprimal\Omega$ there are exactly four elements of $\Eshort\Omega$ which form the edges of a complete 4-graph, which we denote by $\diamondplus_f$.

Kasteleyn introduced the cluster graph to relate the Ising model to a dimer model  in a way that is closely related to the representation we will use in this article.
A \emph{perfect matching} (dimer configuration) 
on $\cGcluster\Omega$ is a collection $\{ \ecluster_1, \ecluster_2, \ldots, \ecluster_{n} \} \subset \cEcluster\Omega$ of $n = \tfrac12 |\cVcluster\Omega|$ edges in which every vertex of $\cGcluster\Omega$ appears exactly once.
Let $\cP_\Omega$ be the set of even-degree subgraphs  
of the dual graph, i.e., \emph{contour configurations} $P \subset \cEdual\Omega$ 
on dual edges 
for which the \emph{degree} $\degree_f(P) := \big|\big\{ \edual \in P \;|\; \edual \sim f \big\}\big|$ 
is even for all $f \in \cFprimal\Omega$. 
On the cluster graph $\cGcluster\Omega$, any perfect matching always includes an even number of long edges incident to each face $f \in \cFprimal\Omega$. 
Hence, it can be put into (nonunique) correspondence with a contour configuration $P \in \cP_\Omega$, by completing the perfect matching at the short edges 
(note that there can be several ways to match the short edges that results in the same contour configuration). 
On the other hand, as long as one spin on the boundary is fixed spin configurations $\sigma$ on $\Omega$   
correspond bijectively to contour configurations $P \in \cP_\Omega$ (see \cref{sec:Ising}).
This gives a correspondence of spin configurations on $\Omega$ and perfect matchings (dimer covers) on $\cGcluster\Omega$, which makes it possible to express many quantities of interest in terms of Pfaffians of antisymmetric matrices related to the adjacency matrix of $\cGcluster\Omega$~\cite{Kasteleyn:Dimer_statistics_and_phase_transitions}.

\subsection{Ising model and its low-temperature contour representation}
\label{sec:Ising}

Standard references for the Ising model include the books~\cite{McCoy-Wu:2D_Ising_model, Baxter:Exactly_solved_models_in_statistical_mechanics}. 
A \emph{spin configuration} on $\Omega$ is an assignment 
$\sigma \in \{\pm1\}^{\cVprimal\Omega}$ of variable spins $\sigma_{\vprimal}$ equaling $-1$ or $+1$ at each vertex $\vprimal \in \cVprimal\Omega$. 
The \emph{Ising model} is a Boltzmann distribution on 
the set of all spin configurations. 
Given coupling constants $\ul J = (J_{\eprimal})_{\eprimal \in \cEprimal\Omega} \in [0,\infty)^{\cEprimal\Omega}$, the (ferromagnetic, nearest-neighbor) Ising Hamiltonian with ``$+$'' boundary conditions is\footnote{This differs from the most commonly used Ising spin Hamiltonian by a constant term $-\sum_{\eprimal \in \cEprimal\Omega} J_{\eprimal}$.} 
\begin{align}
H_{\Omega,\ul J}^{+}(\sigma) 
=
2 \sum_{\eprimal \in \cEprimal\Omega}
J_{\eprimal} \, \epsilon_{\eprimal}(\sigma) ,
\label{eq:Ham_plus} 
\end{align}
where $\epsilon_{\eprimal}(\sigma) = \tfrac12 (1 - \sigma_{\vprimal} \sigma_{\wprimal})$ for edges $\eprimal = \{ \vprimal,\wprimal \}$, with $\sigma_{\vprimal} = 1$ for all $\vprimal \in \partial \cVprimal\Omega$.
Let $\beta \in (0,\infty)$ be the inverse temperature and $\ul x = (x_{\eprimal})_{\eprimal \in \cEprimal\Omega}$ the collection of
\emph{edge weights} 
\begin{align}
\label{eq:edge_weights}
x_{\edual} = \exp(-2 \beta J_{\eprimal})  
\end{align}
where $\edual$ is the dual edge crossing $\eprimal$.
As $\epsilon_{\eprimal} = \epsilon_{\eprimal}(\sigma) \in \{ 0,1 \}$ for all $\sigma$, we can write
\begin{align}
\exp \big( - \beta H_{\Omega,\ul J}^{+}(\sigma) \big)
=
\prod_{\eprimal \in \cEprimal\Omega} x_{\edual}^{\epsilon_{\eprimal}(\sigma)}
=
\prod_{\substack{\eprimal \in \cEprimal\Omega \\ \epsilon_{\eprimal}(\sigma) = 1}} x_{\edual}
=
\prod_{\edual \in P(\sigma)} x_{\edual} ,
\label{eq:Gibbs_weight_plus}
\end{align}
using the one-to-one correspondence $\eprimal \leftrightarrow \edual$ 
and writing $P(\sigma) := \big\{ \edual \in \cEdual\Omega \;|\; \epsilon_{\eprimal} (\sigma) = 1 \big\}$.
The range of $P$ is exactly the $\cP_\Omega \subset 2^{\cEdual\Omega}$ which are \emph{even} in the sense that
$\degree_f(P)$ is even for all $f \in \cFprimal\Omega$.
In fact each element of $\cP_\Omega$ is associated with a unique spin configuration, so the Ising model is equivalent to one on this state space, known as the \emph{low-temperature contour representation}.
We  have the partition function
\begin{align}
Z_{\Omega,\ul x}^{+}
:=
\sum_{\sigma \in \{ \pm1 \}^\Omega} e^{- \beta H_{\Omega,\ul J}^{+}(\sigma)}
=
\sum_{P \in \cP_\Omega} \prod_{\edual \in P} x_{\edual} .
\label{eq:Z_plus_lt}
\end{align}
and the distribution of the random variable $P(e)$ is the measure
\begin{align*}
\bP^{+}_{\Omega,\ul x}[P] := \frac{1}{Z^{+}_{\Omega,\ul x}} \, \prod_{\edual \in P} x_{\edual} , \qquad P \in \cP_\Omega ,
\end{align*}
which we consider as being parameterized by the edge weights $\ul x$.

This can be generalized to Dobrushin boundary conditions as follows.  
For fixed $\xi = (\bhalfedge,\ehalfedge) \in \cVcluster\Omega \times \cVcluster\Omega$ with $\bhalfedge \neq \ehalfedge$, let $\cP_\Omega^\xi$ be the set of $P \subset \cEdual\Omega \setminus \{\bedge,\eedge\}$ such that $P$ is even with the two indicated half-edges counted in the degree of their face, i.e., 
\begin{align}
\nonumber
\degree_f(P) 
+ \one_{\bface}(f) + \one_{\eface} 
= \; &
\big| \big\{ \edual \in P \;|\; \edual \sim f \big\} \big|
+ \one_{\bface}(f) + \one_{\eface} 
\\
\label{eq:contour_collection_requirement}
\in \; & 2 \bZ , 
\qquad \textnormal{for all } f \in \cFprimal\Omega .
\end{align}
See \cref{fig:P_example} for an example.
We also write $\cP^{\emptyset}_\Omega := \cP_\Omega$ for the earlier even-degree contours.

\begin{figure}[h!]
\centering
\tikzpicturedependsonfile{P_example.tikz} \begin{tikzpicture}
	\draw[dualEdge] (3.5,0.5) -- (3.5,1.5);
	\draw[dualEdge] (3.5,-0.5) -- (3.5,0.5);
	\draw[dualEdge] (0.5,2.5) -- (0.5,3.5);
	\draw[dualEdge] (0.5,2.5) -- (1.5,2.5);
	\draw[dualEdge] (2.5,1.5) -- (2.5,2.5);
	\draw[dualEdge] (2.5,1.5) -- (3.5,1.5);
	\draw[dualEdge] (0.5,3.5) -- (1.5,3.5);
	\draw[dualEdge] (-0.5,3.5) -- (0.5,3.5);
	\draw[dualEdge] (1.5,2.5) -- (1.5,3.5);
	\draw[dualEdge] (2.5,2.5) -- (3.5,2.5);
	\draw[dualEdge] (1.5,1.5) -- (2.5,1.5);
	\draw[dualEdge] (1.5,1.5) -- (1.5,2.5);
	\draw[dualEdge] (-0.5,-0.5) -- (-0.5,0.5);
	\draw[dualEdge] (-0.5,-0.5) -- (0.5,-0.5);
	\draw[dualEdge] (-0.5,0.5) -- (0.5,0.5);
	\draw[dualEdge] (-0.5,0.5) -- (-0.5,1.5);
	\draw[dualEdge] (0.5,0.5) -- (1.5,0.5);
	\draw[dualEdge] (0.5,0.5) -- (0.5,1.5);
	\draw[dualEdge] (0.5,-0.5) -- (0.5,0.5);
	\draw[dualEdge] (0.5,-0.5) -- (1.5,-0.5);
	\draw[dualEdge] (-1.5,2.5) -- (-0.5,2.5);
	\draw[dualEdge] (1.5,0.5) -- (1.5,1.5);
	\draw[dualEdge] (1.5,0.5) -- (2.5,0.5);
	\draw[dualEdge] (1.5,-0.5) -- (1.5,0.5);
	\draw[dualEdge] (0.5,1.5) -- (0.5,2.5);
	\draw[dualEdge] (0.5,1.5) -- (1.5,1.5);
	\draw[dualEdge] (-0.5,1.5) -- (0.5,1.5);
	\draw[dualEdge] (-0.5,1.5) -- (-0.5,2.5);
	\draw[dualEdge] (3.5,1.5) -- (3.5,2.5);
	\draw[dualEdge] (2.5,0.5) -- (3.5,0.5);
	\draw[dualEdge] (2.5,0.5) -- (2.5,1.5);
	\draw[dualEdge] (2.5,-0.5) -- (3.5,-0.5);
	\draw[dualEdge] (2.5,-0.5) -- (2.5,0.5);
	\draw[dualEdge] (-0.5,2.5) -- (0.5,2.5);
	\draw[dualEdge] (-0.5,2.5) -- (-0.5,3.5);
	\draw[dualEdge] (-1.5,1.5) -- (-1.5,2.5);
	\draw[dualEdge] (-1.5,1.5) -- (-0.5,1.5);
	\draw[contour] (2.5,1.5) -- (3.5,1.5);
	\draw[contour] (-0.5,1.5) -- (0.5,1.5);
	\draw[contour] (0.5,2.5) -- (1.5,2.5);
	\draw[contour] (2.5,2.5) -- (3.5,2.5);
	\draw[contour] (1.5,0.5) -- (2.5,0.5);
	\draw[contour] (0.5,2.5) -- (0.5,3.5);
	\draw[contour] (2.5,1.5) -- (2.5,2.5);
	\draw[contour] (-0.5,0.5) -- (0.5,0.5);
	\draw[contour] (-0.5,2.5) -- (0.5,2.5);
	\draw[contour] (3.5,1.5) -- (3.5,2.5);
	\draw[contour] (0.5,3.5) -- (1.5,3.5);
	\draw[contour] (1.5,2.5) -- (1.5,3.5);
	\draw[contour] (0.5,1.5) -- (1.5,1.5);
	\draw[contour] (2.5,-0.5) -- (2.5,0.5);
	\draw[contour] (1.5,0.5) -- (1.5,1.5);
	\draw[contour] (-0.5,1.5) -- (-0.5,2.5);
	\draw[contour] (0.5,1.5) -- (0.5,2.5);
	\draw[contour] (2.5,-0.5) -- (3.5,-0.5);
	\draw[contour] (-0.5,0.5) -- (-0.5,1.5);
	\draw[contour] (0.5,0.5) -- (0.5,1.5);
	\draw[contour] (3.5,-0.5) -- (3.5,0.5);
	\draw[contour] (-1.5,1.5) -- (-0.5,1.5);
	\draw (1.5,3.5) node[face] {};
	\draw (3.5,0.5) node[face] {};
	\draw (3.5,-0.5) node[face] {};
	\draw (0.5,2.5) node[face] {};
	\draw (2.5,1.5) node[face] {};
	\draw (0.5,3.5) node[face] {};
	\draw (-0.5,3.5) node[face] {};
	\draw (1.5,2.5) node[face] {};
	\draw (2.5,2.5) node[face] {};
	\draw (1.5,1.5) node[face] {};
	\draw (3.5,2.5) node[face] {};
	\draw (-0.5,-0.5) node[face] {};
	\draw (-0.5,0.5) node[face] {};
	\draw (0.5,0.5) node[face] {};
	\draw (0.5,-0.5) node[face] {};
	\draw (-1.5,2.5) node[face] {};
	\draw (1.5,0.5) node[face] {};
	\draw (1.5,-0.5) node[face] {};
	\draw (0.5,1.5) node[face] {};
	\draw (-0.5,1.5) node[face] {};
	\draw (3.5,1.5) node[face] {};
	\draw (2.5,0.5) node[face] {};
	\draw (2.5,-0.5) node[face] {};
	\draw (-0.5,2.5) node[face] {};
	\draw (-1.5,1.5) node[face] {};
	\draw (-1.5,1.5) node[dobFace] {};
	\draw (3.5,0.5) node[dobFace] {};
\end{tikzpicture}
\caption{A contour configuration $P \in \cP_\Omega^{\xi}$ with the same $\Omega$ and $\xi$ as in the preceding figures.
Here, only the marked faces $\bface$ and $\eface$ are highlighted as ``\raisebox{-1mm}{\tikzvertex{face,fill=green!30}}''. 
While the marked dual edges $\bedge$ and $\eedge$ do not play any role in $\cP_\Omega^\xi$ when lying on the boundary, 
they are important for the relation with Grassmann integrals (see \cref{sec:Grassman}) and for the generalized boundary conditions used to define martingale observables in \cref{sec:interface}. 
}
\label{fig:P_example}
\end{figure}

We say that a pair $\xi = (\bface,\bedge) \in \cVcluster\Omega \times \cVcluster\Omega$ is an \emph{admissible boundary condition} if there exist $\Ombc \supset \Omega$ and $\Pbc \subset \cEdual{\Ombc\setminus \Omega}$ with $|\Ombc| < \infty$ such that $P \cup \Pbc \in \cP_{\Ombc}$ for all $P \in \cP_\Omega^\xi$.
In particular this requires that $\bhalfedge,\ehalfedge$ lie on the boundary of $\Omega$ as shown in \cref{fig:Dobrushin}.
\begin{figure}[t]
\centering
\begin{subfigure}[t]{0.45\textwidth}
\tikzpicturedependsonfile{dobrushin.tikz}
\begin{tikzpicture}
	\draw (0,0) -- (0,1);
	\draw (0,0) -- (1,0);
	\draw (0,1) -- (0,2);
	\draw (0,1) -- (1,1);
	\draw (0,2) -- (0,3);
	\draw (0,2) -- (1,2);
	\draw (0,3) -- (1,3);
	\draw (1,0) -- (1,1);
	\draw (1,1) -- (1,2);
	\draw (1,1) -- (2,1);
	\draw (1,2) -- (1,3);
	\draw (2,1) -- (3,1);
	\draw (3,0) -- (3,1);
	\draw (3,1) -- (3,2);
	\draw (-1,2) -- (0,2);
	\draw (0,0) -- (-1,0);
	\draw (0,0) -- (0,-1);
	\draw (0,1) -- (-1,1);
	\draw (0,3) -- (-1,3);
	\draw (0,3) -- (0,4);
	\draw (1,0) -- (2,0);
	\draw (1,0) -- (1,-1);
	\draw (1,2) -- (2,2);
	\draw (1,3) -- (1,4);
	\draw (1,3) -- (2,3);
	\draw (2,1) -- (2,2);
	\draw (2,1) -- (2,0);
	\draw (3,0) -- (3,-1);
	\draw (3,0) -- (2,0);
	\draw (3,0) -- (4,0);
	\draw (3,1) -- (4,1);
	\draw (3,2) -- (3,3);
	\draw (3,2) -- (2,2);
	\draw (3,2) -- (4,2);
	\draw (-1,2) -- (-1,1);
	\draw (-1,2) -- (-1,3);
	\draw (-1,2) -- (-2,2);
	\draw[dualEdge] (3.5,0.5) -- (3.5,1.5);
	\draw[dualEdge] (3.5,-0.5) -- (3.5,0.5);
	\draw[dualEdge] (0.5,2.5) -- (0.5,3.5);
	\draw[dualEdge] (0.5,2.5) -- (1.5,2.5);
	\draw[dualEdge] (2.5,1.5) -- (2.5,2.5);
	\draw[dualEdge] (2.5,1.5) -- (3.5,1.5);
	\draw[dualEdge] (0.5,3.5) -- (1.5,3.5);
	\draw[dualEdge] (-0.5,3.5) -- (0.5,3.5);
	\draw[dualEdge] (1.5,2.5) -- (1.5,3.5);
	\draw[dualEdge] (2.5,2.5) -- (3.5,2.5);
	\draw[dualEdge] (1.5,1.5) -- (2.5,1.5);
	\draw[dualEdge] (1.5,1.5) -- (1.5,2.5);
	\draw[dualEdge] (-0.5,-0.5) -- (-0.5,0.5);
	\draw[dualEdge] (-0.5,-0.5) -- (0.5,-0.5);
	\draw[dualEdge] (-0.5,0.5) -- (0.5,0.5);
	\draw[dualEdge] (-0.5,0.5) -- (-0.5,1.5);
	\draw[dualEdge] (0.5,0.5) -- (1.5,0.5);
	\draw[dualEdge] (0.5,0.5) -- (0.5,1.5);
	\draw[dualEdge] (0.5,-0.5) -- (0.5,0.5);
	\draw[dualEdge] (0.5,-0.5) -- (1.5,-0.5);
	\draw[dualEdge] (-1.5,2.5) -- (-0.5,2.5);
	\draw[dualEdge] (1.5,0.5) -- (1.5,1.5);
	\draw[dualEdge] (1.5,0.5) -- (2.5,0.5);
	\draw[dualEdge] (1.5,-0.5) -- (1.5,0.5);
	\draw[dualEdge] (0.5,1.5) -- (0.5,2.5);
	\draw[dualEdge] (0.5,1.5) -- (1.5,1.5);
	\draw[dualEdge] (-0.5,1.5) -- (0.5,1.5);
	\draw[dualEdge] (-0.5,1.5) -- (-0.5,2.5);
	\draw[dualEdge] (3.5,1.5) -- (3.5,2.5);
	\draw[dualEdge] (2.5,0.5) -- (3.5,0.5);
	\draw[dualEdge] (2.5,0.5) -- (2.5,1.5);
	\draw[dualEdge] (2.5,-0.5) -- (3.5,-0.5);
	\draw[dualEdge] (2.5,-0.5) -- (2.5,0.5);
	\draw[dualEdge] (-0.5,2.5) -- (0.5,2.5);
	\draw[dualEdge] (-0.5,2.5) -- (-0.5,3.5);
	\draw[dualEdge] (-1.5,1.5) -- (-1.5,2.5);
	\draw[dualEdge] (-1.5,1.5) -- (-0.5,1.5);
	\draw (1.5,3.5) node[face] {};
	\draw (3.5,0.5) node[face] {};
	\draw (3.5,-0.5) node[face] {};
	\draw (0.5,2.5) node[face] {};
	\draw (2.5,1.5) node[face] {};
	\draw (0.5,3.5) node[face] {};
	\draw (-0.5,3.5) node[face] {};
	\draw (1.5,2.5) node[face] {};
	\draw (2.5,2.5) node[face] {};
	\draw (1.5,1.5) node[face] {};
	\draw (3.5,2.5) node[face] {};
	\draw (-0.5,-0.5) node[face] {};
	\draw (-0.5,0.5) node[face] {};
	\draw (0.5,0.5) node[face] {};
	\draw (0.5,-0.5) node[face] {};
	\draw (-1.5,2.5) node[face] {};
	\draw (1.5,0.5) node[face] {};
	\draw (1.5,-0.5) node[face] {};
	\draw (0.5,1.5) node[face] {};
	\draw (-0.5,1.5) node[face] {};
	\draw (3.5,1.5) node[face] {};
	\draw (2.5,0.5) node[face] {};
	\draw (2.5,-0.5) node[face] {};
	\draw (-0.5,2.5) node[face] {};
	\draw (-1.5,1.5) node[face] {};
	\draw (0,0) node[vertex] {};
	\draw (0,1) node[vertex] {};
	\draw (0,2) node[vertex] {};
	\draw (0,3) node[vertex] {};
	\draw (1,0) node[vertex] {};
	\draw (1,1) node[vertex] {};
	\draw (1,2) node[vertex] {};
	\draw (1,3) node[vertex] {};
	\draw (2,1) node[vertex] {};
	\draw (3,0) node[vertex] {};
	\draw (3,1) node[vertex] {};
	\draw (3,2) node[vertex] {};
	\draw (-1,2) node[vertex] {};
	\draw (-1,1) node[edgeVertex] {};
	\draw (4,1) node[edgeVertex] {};
	\draw (-1,0) node[edgeVertex] {};
	\draw (3,3) node[edgeVertex] {};
	\draw (-1,3) node[edgeVertex] {};
	\draw (2,2) node[edgeVertex] {};
	\draw (-2,2) node[edgeVertex] {};
	\draw (1,4) node[edgeVertex] {};
	\draw (3,-1) node[edgeVertex] {};
	\draw (2,0) node[edgeVertex] {};
	\draw (2,3) node[edgeVertex] {};
	\draw (0,4) node[edgeVertex] {};
	\draw (4,2) node[edgeVertex] {};
	\draw (0,-1) node[edgeVertex] {};
	\draw (1,-1) node[edgeVertex] {};
	\draw (4,0) node[edgeVertex] {};
	\draw[dobEdge] (-1.5,1.5) -- (-1.5,0.7);
	\draw (-1.5,1.5) node[dobFace] {};
	\draw[dobEdge] (3.5,0.5) -- (4.3,0.5);
	\draw (3.5,0.5) node[dobFace] {};
\end{tikzpicture}
\caption{An admissible boundary condition $\xi = (\bhalfedge,\ehalfedge)$ for the above graph, with the faces $\bface$ and $\eface$ shown as ``\raisebox{-1mm}{\tikzvertex{dobFace}}'' 
and the external dual edges $\bedge$ and $\eedge$ in green. 
}
\end{subfigure}
~
\begin{subfigure}[t]{0.45\textwidth}
\centering
\tikzpicturedependsonfile{Kasteleyn_dobrushin.tikz}
\begin{tikzpicture}
	\draw (-1,1) node[vertex,draw=white,fill=white] {};
	\draw (4,1) node[vertex,draw=white,fill=white] {};
	\draw (-1,0) node[vertex,draw=white,fill=white] {};
	\draw (3,3) node[vertex,draw=white,fill=white] {};
	\draw (-1,3) node[vertex,draw=white,fill=white] {};
	\draw (2,2) node[vertex,draw=white,fill=white] {};
	\draw (-2,2) node[vertex,draw=white,fill=white] {};
	\draw (1,4) node[vertex,draw=white,fill=white] {};
	\draw (3,-1) node[vertex,draw=white,fill=white] {};
	\draw (2,0) node[vertex,draw=white,fill=white] {};
	\draw (2,3) node[vertex,draw=white,fill=white] {};
	\draw (0,4) node[vertex,draw=white,fill=white] {};
	\draw (4,2) node[vertex,draw=white,fill=white] {};
	\draw (0,-1) node[vertex,draw=white,fill=white] {};
	\draw (1,-1) node[vertex,draw=white,fill=white] {};
	\draw (4,0) node[vertex,draw=white,fill=white] {};
	\draw[clusterLongEdge] (3.5,0.5) -- (3.5,1.5);
	\draw[clusterLongEdge] (3.5,-0.5) -- (3.5,0.5);
	\draw[clusterLongEdge] (0.5,2.5) -- (0.5,3.5);
	\draw[clusterLongEdge] (0.5,2.5) -- (1.5,2.5);
	\draw[clusterLongEdge] (2.5,1.5) -- (2.5,2.5);
	\draw[clusterLongEdge] (2.5,1.5) -- (3.5,1.5);
	\draw[clusterLongEdge] (0.5,3.5) -- (1.5,3.5);
	\draw[clusterLongEdge] (-0.5,3.5) -- (0.5,3.5);
	\draw[clusterLongEdge] (1.5,2.5) -- (1.5,3.5);
	\draw[clusterLongEdge] (2.5,2.5) -- (3.5,2.5);
	\draw[clusterLongEdge] (1.5,1.5) -- (2.5,1.5);
	\draw[clusterLongEdge] (1.5,1.5) -- (1.5,2.5);
	\draw[clusterLongEdge] (-0.5,-0.5) -- (-0.5,0.5);
	\draw[clusterLongEdge] (-0.5,-0.5) -- (0.5,-0.5);
	\draw[clusterLongEdge] (-0.5,0.5) -- (0.5,0.5);
	\draw[clusterLongEdge] (-0.5,0.5) -- (-0.5,1.5);
	\draw[clusterLongEdge] (0.5,0.5) -- (1.5,0.5);
	\draw[clusterLongEdge] (0.5,0.5) -- (0.5,1.5);
	\draw[clusterLongEdge] (0.5,-0.5) -- (0.5,0.5);
	\draw[clusterLongEdge] (0.5,-0.5) -- (1.5,-0.5);
	\draw[clusterLongEdge] (-1.5,2.5) -- (-0.5,2.5);
	\draw[clusterLongEdge] (1.5,0.5) -- (1.5,1.5);
	\draw[clusterLongEdge] (1.5,0.5) -- (2.5,0.5);
	\draw[clusterLongEdge] (1.5,-0.5) -- (1.5,0.5);
	\draw[clusterLongEdge] (0.5,1.5) -- (0.5,2.5);
	\draw[clusterLongEdge] (0.5,1.5) -- (1.5,1.5);
	\draw[clusterLongEdge] (-0.5,1.5) -- (0.5,1.5);
	\draw[clusterLongEdge] (-0.5,1.5) -- (-0.5,2.5);
	\draw[clusterLongEdge] (3.5,1.5) -- (3.5,2.5);
	\draw[clusterLongEdge] (2.5,0.5) -- (3.5,0.5);
	\draw[clusterLongEdge] (2.5,0.5) -- (2.5,1.5);
	\draw[clusterLongEdge] (2.5,-0.5) -- (3.5,-0.5);
	\draw[clusterLongEdge] (2.5,-0.5) -- (2.5,0.5);
	\draw[clusterLongEdge] (-0.5,2.5) -- (0.5,2.5);
	\draw[clusterLongEdge] (-0.5,2.5) -- (-0.5,3.5);
	\draw[clusterLongEdge] (-1.5,1.5) -- (-1.5,2.5);
	\draw[clusterLongEdge] (-1.5,1.5) -- (-0.5,1.5);
	\draw (1.5,3.5) node[clusterFace] {};
	\draw (3.5,0.5) node[clusterFace] {};
	\draw (3.5,-0.5) node[clusterFace] {};
	\draw (0.5,2.5) node[clusterFace] {};
	\draw (2.5,1.5) node[clusterFace] {};
	\draw (0.5,3.5) node[clusterFace] {};
	\draw (-0.5,3.5) node[clusterFace] {};
	\draw (1.5,2.5) node[clusterFace] {};
	\draw (2.5,2.5) node[clusterFace] {};
	\draw (1.5,1.5) node[clusterFace] {};
	\draw (3.5,2.5) node[clusterFace] {};
	\draw (-0.5,-0.5) node[clusterFace] {};
	\draw (-0.5,0.5) node[clusterFace] {};
	\draw (0.5,0.5) node[clusterFace] {};
	\draw (0.5,-0.5) node[clusterFace] {};
	\draw (-1.5,2.5) node[clusterFace] {};
	\draw (1.5,0.5) node[clusterFace] {};
	\draw (1.5,-0.5) node[clusterFace] {};
	\draw (0.5,1.5) node[clusterFace] {};
	\draw (-0.5,1.5) node[clusterFace] {};
	\draw (3.5,1.5) node[clusterFace] {};
	\draw (2.5,0.5) node[clusterFace] {};
	\draw (2.5,-0.5) node[clusterFace] {};
	\draw (-0.5,2.5) node[clusterFace] {};
	\draw (-1.5,1.5) node[clusterFace] {};
	\draw (-1.5,1.5) node[clusterFaceDobIn] {};
	\draw (3.5,0.5) node[clusterFaceDobOut] {};
\end{tikzpicture}
\caption{The corresponding cluster graph, with the half-edges $\bhalfedge,\ehalfedge$ in green.}
\end{subfigure}
\caption{Admissible boundary conditions.}
\label{fig:Dobrushin}
\end{figure}
Then, $P \cup P_o$ is associated with a unique spin configuration on $\Omega_o$ 
and in particular, there are two components of the boundary $\partial \cVprimal\Omega$ depending only on $\xi$ 
which are always assigned opposite values, depending only on $P_o$, as shown in \cref{fig:Ising_Dobrushin}, 
what are known as \emph{Dobrushin boundary conditions}. 
In terms of the contours, the associated partition function is then given by
\begin{align}
Z_{\Omega,\ul x}^{\xi}
=
\sum_{P \in \cP^{\xi}_\Omega} \prod_{\edual \in P} x_{\edual} , \qquad \xi = (\bhalfedge,\ehalfedge) ,
\label{eq:Z_Dob_lt}
\end{align}
and the associated probability measure is\footnote{Note that these objects do not actually depend on $\Omega_0$ and $P_o$ except via $\xi$.  In fact, the definitions make perfect sense even when $\xi$ is not admissible (although the relationship to the Gibbs measure of the Ising model breaks down) and we use this generalization in an important way in the following section.} 
\begin{align}
\label{eq:contour_probability_measure}
\bP_{\Omega,\ul x}^{\xi}[P] := \frac{1}{Z^{\xi}_{\Omega,\ul x}} \, \prod_{\edual \in P} x_{\edual} , \qquad P \in \cP^{\xi}_\Omega .
\end{align}

\begin{remark}
All of this can easily be generalized to $\xi=(\halfedge_1,\ldots,\halfedge_{2N})$ with $N \geq 2$, 
corresponding to alternating boundary conditions as in~\cite{Izyurov:Critical_Ising_interfaces_in_multiply_connected_domains}, and with slightly more attention this is true for what follows as well 
(but we will mostly stick to the case with two marked points to avoid notational complications).
\end{remark}

\begin{figure}[h!]
\centering
\tikzpicturedependsonfile{Ising_dobrushin.tikz}
\begin{tikzpicture}
	\draw (0,0) -- (0,1);
	\draw (0,0) -- (1,0);
	\draw (0,1) -- (0,2);
	\draw (0,1) -- (1,1);
	\draw (0,2) -- (0,3);
	\draw (0,2) -- (1,2);
	\draw (0,3) -- (1,3);
	\draw (1,0) -- (1,1);
	\draw (1,1) -- (1,2);
	\draw (1,1) -- (2,1);
	\draw (1,2) -- (1,3);
	\draw (2,1) -- (3,1);
	\draw (3,0) -- (3,1);
	\draw (3,1) -- (3,2);
	\draw (-1,2) -- (0,2);
	\draw (0,0) -- (-1,0);
	\draw (0,0) -- (0,-1);
	\draw (0,1) -- (-1,1);
	\draw (0,3) -- (-1,3);
	\draw (0,3) -- (0,4);
	\draw (1,0) -- (2,0);
	\draw (1,0) -- (1,-1);
	\draw (1,2) -- (2,2);
	\draw (1,3) -- (1,4);
	\draw (1,3) -- (2,3);
	\draw (2,1) -- (2,2);
	\draw (2,1) -- (2,0);
	\draw (3,0) -- (3,-1);
	\draw (3,0) -- (2,0);
	\draw (3,0) -- (4,0);
	\draw (3,1) -- (4,1);
	\draw (3,2) -- (3,3);
	\draw (3,2) -- (2,2);
	\draw (3,2) -- (4,2);
	\draw (-1,2) -- (-1,1);
	\draw (-1,2) -- (-1,3);
	\draw (-1,2) -- (-2,2);
	\draw[dualEdge] (3.5,0.5) -- (3.5,1.5);
	\draw[dualEdge] (3.5,-0.5) -- (3.5,0.5);
	\draw[dualEdge] (0.5,2.5) -- (0.5,3.5);
	\draw[dualEdge] (0.5,2.5) -- (1.5,2.5);
	\draw[dualEdge] (2.5,1.5) -- (2.5,2.5);
	\draw[dualEdge] (2.5,1.5) -- (3.5,1.5);
	\draw[dualEdge] (0.5,3.5) -- (1.5,3.5);
	\draw[dualEdge] (-0.5,3.5) -- (0.5,3.5);
	\draw[dualEdge] (1.5,2.5) -- (1.5,3.5);
	\draw[dualEdge] (2.5,2.5) -- (3.5,2.5);
	\draw[dualEdge] (1.5,1.5) -- (2.5,1.5);
	\draw[dualEdge] (1.5,1.5) -- (1.5,2.5);
	\draw[dualEdge] (-0.5,-0.5) -- (-0.5,0.5);
	\draw[dualEdge] (-0.5,-0.5) -- (0.5,-0.5);
	\draw[dualEdge] (-0.5,0.5) -- (0.5,0.5);
	\draw[dualEdge] (-0.5,0.5) -- (-0.5,1.5);
	\draw[dualEdge] (0.5,0.5) -- (1.5,0.5);
	\draw[dualEdge] (0.5,0.5) -- (0.5,1.5);
	\draw[dualEdge] (0.5,-0.5) -- (0.5,0.5);
	\draw[dualEdge] (0.5,-0.5) -- (1.5,-0.5);
	\draw[dualEdge] (-1.5,2.5) -- (-0.5,2.5);
	\draw[dualEdge] (1.5,0.5) -- (1.5,1.5);
	\draw[dualEdge] (1.5,0.5) -- (2.5,0.5);
	\draw[dualEdge] (1.5,-0.5) -- (1.5,0.5);
	\draw[dualEdge] (0.5,1.5) -- (0.5,2.5);
	\draw[dualEdge] (0.5,1.5) -- (1.5,1.5);
	\draw[dualEdge] (-0.5,1.5) -- (0.5,1.5);
	\draw[dualEdge] (-0.5,1.5) -- (-0.5,2.5);
	\draw[dualEdge] (3.5,1.5) -- (3.5,2.5);
	\draw[dualEdge] (2.5,0.5) -- (3.5,0.5);
	\draw[dualEdge] (2.5,0.5) -- (2.5,1.5);
	\draw[dualEdge] (2.5,-0.5) -- (3.5,-0.5);
	\draw[dualEdge] (2.5,-0.5) -- (2.5,0.5);
	\draw[dualEdge] (-0.5,2.5) -- (0.5,2.5);
	\draw[dualEdge] (-0.5,2.5) -- (-0.5,3.5);
	\draw[dualEdge] (-1.5,1.5) -- (-1.5,2.5);
	\draw[dualEdge] (-1.5,1.5) -- (-0.5,1.5);
	\draw (1.5,3.5) node[face] {};
	\draw (3.5,0.5) node[face] {};
	\draw (3.5,-0.5) node[face] {};
	\draw (0.5,2.5) node[face] {};
	\draw (2.5,1.5) node[face] {};
	\draw (0.5,3.5) node[face] {};
	\draw (-0.5,3.5) node[face] {};
	\draw (1.5,2.5) node[face] {};
	\draw (2.5,2.5) node[face] {};
	\draw (1.5,1.5) node[face] {};
	\draw (3.5,2.5) node[face] {};
	\draw (-0.5,-0.5) node[face] {};
	\draw (-0.5,0.5) node[face] {};
	\draw (0.5,0.5) node[face] {};
	\draw (0.5,-0.5) node[face] {};
	\draw (-1.5,2.5) node[face] {};
	\draw (1.5,0.5) node[face] {};
	\draw (1.5,-0.5) node[face] {};
	\draw (0.5,1.5) node[face] {};
	\draw (-0.5,1.5) node[face] {};
	\draw (3.5,1.5) node[face] {};
	\draw (2.5,0.5) node[face] {};
	\draw (2.5,-0.5) node[face] {};
	\draw (-0.5,2.5) node[face] {};
	\draw (-1.5,1.5) node[face] {};
	\draw (0,0) node[vertex] {};
	\draw (0,1) node[vertex] {};
	\draw (0,2) node[vertex] {};
	\draw (0,3) node[vertex] {};
	\draw (1,0) node[vertex] {};
	\draw (1,1) node[vertex] {};
	\draw (1,2) node[vertex] {};
	\draw (1,3) node[vertex] {};
	\draw (2,1) node[vertex] {};
	\draw (3,0) node[vertex] {};
	\draw (3,1) node[vertex] {};
	\draw (3,2) node[vertex] {};
	\draw (-1,2) node[vertex] {};
	\draw (-1,1) node[edgeVertexSpinMinus] {};
	\draw (4,1) node[edgeVertexSpinPlus] {};
	\draw (-1,0) node[edgeVertexSpinMinus] {};
	\draw (3,3) node[edgeVertexSpinPlus] {};
	\draw (-1,3) node[edgeVertexSpinPlus] {};
	\draw (2,2) node[edgeVertexSpinPlus] {};
	\draw (-2,2) node[edgeVertexSpinPlus] {};
	\draw (1,4) node[edgeVertexSpinPlus] {};
	\draw (3,-1) node[edgeVertexSpinMinus] {};
	\draw (2,0) node[edgeVertexSpinMinus] {};
	\draw (2,3) node[edgeVertexSpinPlus] {};
	\draw (0,4) node[edgeVertexSpinPlus] {};
	\draw (4,2) node[edgeVertexSpinPlus] {};
	\draw (0,-1) node[edgeVertexSpinMinus] {};
	\draw (1,-1) node[edgeVertexSpinMinus] {};
	\draw (4,0) node[edgeVertexSpinMinus] {};
	\draw[dobEdge] (-1.5,1.5) -- (-1.5,0.7);
	\draw (-1.5,1.5) node[dobFace] {};
	\draw[dobEdge] (3.5,0.5) -- (4.3,0.5);
	\draw (3.5,0.5) node[dobFace] {};
\end{tikzpicture}
\caption{An example of a choice of the faces $\bface$ and $\eface$ (shown as ``\raisebox{-1mm}{\tikzvertex{dobFace}}'') 
and the external dual edges $\bedge$ and $\eedge$
 specifying a choice of Dobrushin boundary conditions. 
 The boundary is divided into two components, drawn as ``\tikzvertex{edgeVertexSpinPlus}'' and ``\tikzvertex{edgeVertexSpinMinus}''. 
}
\label{fig:Ising_Dobrushin}
\end{figure}

\subsection{Exploration interface and martingale observable}
\label{sec:interface}

The following discussion makes no direct connection to the spin Ising model; it is valid in general for any probability measure on the contour space $\smash{\cP^\xi_\Omega}$ --- including the representation of the more general model introduced in \cref{sec:interact}, which does not have the same form as \cref{eq:contour_probability_measure} 
(see in particular \cref{thm:main_thm}). 

\medskip

For $\xi = (\bhalfedge,\ehalfedge)$
we can define the \emph{Peierls interface}, or \emph{exploration path}  
(with Dobrushin boundary conditions) as a path from 
a given boundary vertex $\bhalfedge = (\bface,\bedge)$ to another vertex $\ehalfedge = (\eface,\eedge)$ consisting of edges which are included in $P$, with no edge traversed more than once.  
In terms of the spins, when $\xi$ is an admissible boundary condition, 
one can think of the interface as separating spins of different values 
(however we continue to make definitions in the general case not admitting such an interpretation).
For $P \in \smash{\cP^\xi_\Omega}$ with $\xi = (\bhalfedge,\ehalfedge)$, 
the parity constraint guarantees that such a path exists, but not that it would be unique (due to an ambiguity for faces with degree $\degree_f(P) = 4$). 
We shall fix a unique choice by the following procedure, illustrated in \cref{fig:contour}.

\begin{df} \label{def:interface}
We iteratively construct sequences $(\gamma_n)_{n \in \bZnn} = (\gamma_n(P))_{n \in \bZnn}$ of dual edges 
and $(\vec \gamma_n)_{n \in \bZnn} = (\vec \gamma_n(P))_{n \in \bZnn}$ of half-edges,  
starting at the initial edge 
$\gamma_0 := \bedge$ and half-edge $\vec \gamma_0 := \bhalfedge$,  and iterating as follows (see \cref{fig:contour}). 
\begin{enumerate}
\item If $\gamma_n = \eedge$, then stop.

\item If only one dual edge $\edual \in P \cup \{ \eedge \}$ which does not already appear in $\{\gamma_0,\ldots,\gamma_n\}$ is incident to $f(\vec \gamma_n)$, then set $\gamma_{n+1} := \edual$ 
and set $\vec \gamma_{n+1} := (f',e)$, where $f'$ is the other face incident to $\edual$.

\item If neither of the above holds, then the parity constraint requires that all of the other edges incident to $f(\vec \gamma_n)$ belong to $P \cup \{ \eedge \}$.  
In this case, set $\gamma_{n+1}$ to be the edge such that $\{ \gamma_n, \gamma_{n+1} \} = \{ f(\vec \gamma_n) \pm 1/2, f(\vec \gamma_n) \pm \ii/2 \}$
and set $\vec \gamma_{n+1}$, where $f'$ is the other face incident to~$\gamma_{n+1}$.
(This corresponds to resolving the ambiguity by pairing edges in 
the interface in the North-East and South-West directions\footnote{This is not the only way to define an interface which gives the same scaling limit.  
However some choices, for example the construction used in~\cite{Chelkak-Smirnov:Universality_in_2D_Ising_and_conformal_invariance_of_fermionic_observables, Hongler-Kytola:Ising_interfaces_and_free_boundary_conditions}, are not reversible.\label{foot:reversible}}, and we refer to it later as the NE/SW rule and the preferred pairs of edges as NE/SW pairs). 
\end{enumerate}
\end{df}

\begin{figure}[h!]
\centering
\tikzpicturedependsonfile{dobContour.tikz}
\begin{tikzpicture}
	\draw[dualEdge] (3.5,0.5) -- (3.5,1.5);
	\draw[dualEdge] (3.5,-0.5) -- (3.5,0.5);
	\draw[dualEdge] (0.5,2.5) -- (0.5,3.5);
	\draw[dualEdge] (0.5,2.5) -- (1.5,2.5);
	\draw[dualEdge] (2.5,1.5) -- (2.5,2.5);
	\draw[dualEdge] (2.5,1.5) -- (3.5,1.5);
	\draw[dualEdge] (0.5,3.5) -- (1.5,3.5);
	\draw[dualEdge] (-0.5,3.5) -- (0.5,3.5);
	\draw[dualEdge] (1.5,2.5) -- (1.5,3.5);
	\draw[dualEdge] (2.5,2.5) -- (3.5,2.5);
	\draw[dualEdge] (1.5,1.5) -- (2.5,1.5);
	\draw[dualEdge] (1.5,1.5) -- (1.5,2.5);
	\draw[dualEdge] (-0.5,-0.5) -- (-0.5,0.5);
	\draw[dualEdge] (-0.5,-0.5) -- (0.5,-0.5);
	\draw[dualEdge] (-0.5,0.5) -- (0.5,0.5);
	\draw[dualEdge] (-0.5,0.5) -- (-0.5,1.5);
	\draw[dualEdge] (0.5,0.5) -- (1.5,0.5);
	\draw[dualEdge] (0.5,0.5) -- (0.5,1.5);
	\draw[dualEdge] (0.5,-0.5) -- (0.5,0.5);
	\draw[dualEdge] (0.5,-0.5) -- (1.5,-0.5);
	\draw[dualEdge] (-1.5,2.5) -- (-0.5,2.5);
	\draw[dualEdge] (1.5,0.5) -- (1.5,1.5);
	\draw[dualEdge] (1.5,0.5) -- (2.5,0.5);
	\draw[dualEdge] (1.5,-0.5) -- (1.5,0.5);
	\draw[dualEdge] (0.5,1.5) -- (0.5,2.5);
	\draw[dualEdge] (0.5,1.5) -- (1.5,1.5);
	\draw[dualEdge] (-0.5,1.5) -- (0.5,1.5);
	\draw[dualEdge] (-0.5,1.5) -- (-0.5,2.5);
	\draw[dualEdge] (3.5,1.5) -- (3.5,2.5);
	\draw[dualEdge] (2.5,0.5) -- (3.5,0.5);
	\draw[dualEdge] (2.5,0.5) -- (2.5,1.5);
	\draw[dualEdge] (2.5,-0.5) -- (3.5,-0.5);
	\draw[dualEdge] (2.5,-0.5) -- (2.5,0.5);
	\draw[dualEdge] (-0.5,2.5) -- (0.5,2.5);
	\draw[dualEdge] (-0.5,2.5) -- (-0.5,3.5);
	\draw[dualEdge] (-1.5,1.5) -- (-1.5,2.5);
	\draw[dualEdge] (-1.5,1.5) -- (-0.5,1.5);
	\draw[contour] (2.5,1.5) -- (3.5,1.5);
	\draw[contour] (-0.5,1.5) -- (0.5,1.5);
	\draw[contour] (0.5,2.5) -- (1.5,2.5);
	\draw[contour] (2.5,2.5) -- (3.5,2.5);
	\draw[contour] (1.5,0.5) -- (2.5,0.5);
	\draw[contour] (0.5,2.5) -- (0.5,3.5);
	\draw[contour] (2.5,1.5) -- (2.5,2.5);
	\draw[contour] (-0.5,0.5) -- (0.5,0.5);
	\draw[contour] (-0.5,2.5) -- (0.5,2.5);
	\draw[contour] (3.5,1.5) -- (3.5,2.5);
	\draw[contour] (0.5,3.5) -- (1.5,3.5);
	\draw[contour] (1.5,2.5) -- (1.5,3.5);
	\draw[contour] (0.5,1.5) -- (1.5,1.5);
	\draw[contour] (2.5,-0.5) -- (2.5,0.5);
	\draw[contour] (1.5,0.5) -- (1.5,1.5);
	\draw[contour] (-0.5,1.5) -- (-0.5,2.5);
	\draw[contour] (0.5,1.5) -- (0.5,2.5);
	\draw[contour] (2.5,-0.5) -- (3.5,-0.5);
	\draw[contour] (-0.5,0.5) -- (-0.5,1.5);
	\draw[contour] (0.5,0.5) -- (0.5,1.5);
	\draw[contour] (3.5,-0.5) -- (3.5,0.5);
	\draw[contour] (-1.5,1.5) -- (-0.5,1.5);
	\draw[peierls] (-1.5,0.5) --(-1.5,1.5) node [midway,label={[label distance = 1pt]left: $\scriptstyle0$}] {} --(-0.5,1.5) node [midway,label={[label distance = 1pt]below: $\scriptstyle1$}] {} --(-0.5,0.5) node [midway,label={[label distance = 1pt]left: $\scriptstyle2$}] {} --(0.5,0.5) node [midway,label={[label distance = 1pt]below: $\scriptstyle3$}] {} --(0.5,1.5) node [midway,label={[label distance = 1pt]left: $\scriptstyle4$}] {} --(-0.5,1.5) node [midway,label={[label distance = 1pt]below: $\scriptstyle5$}] {} --(-0.5,2.5) node [midway,label={[label distance = 1pt]left: $\scriptstyle6$}] {} --(0.5,2.5) node [midway,label={[label distance = 1pt]below: $\scriptstyle7$}] {} --(0.5,1.5) node [midway,label={[label distance = 1pt]left: $\scriptstyle8$}] {} --(1.5,1.5) node [midway,label={[label distance = 1pt]below: $\scriptstyle9$}] {} --(1.5,0.5) node [midway,label={[label distance = 1pt]left: $\scriptstyle10$}] {} --(2.5,0.5) node [midway,label={[label distance = 1pt]below: $\scriptstyle11$}] {} --(2.5,-0.5) node [midway,label={[label distance = 1pt]left: $\scriptstyle12$}] {} --(3.5,-0.5) node [midway,label={[label distance = 1pt]below: $\scriptstyle13$}] {} --(3.5,0.5) node [midway,label={[label distance = 1pt]left: $\scriptstyle14$}] {} --(4.5,0.5) node [midway,label={[label distance = 1pt]below: $\scriptstyle15$}] {};
	\draw (1.5,3.5) node[face] {};
	\draw (3.5,0.5) node[face] {};
	\draw (3.5,-0.5) node[face] {};
	\draw (0.5,2.5) node[face] {};
	\draw (2.5,1.5) node[face] {};
	\draw (0.5,3.5) node[face] {};
	\draw (-0.5,3.5) node[face] {};
	\draw (1.5,2.5) node[face] {};
	\draw (2.5,2.5) node[face] {};
	\draw (1.5,1.5) node[face] {};
	\draw (3.5,2.5) node[face] {};
	\draw (-0.5,-0.5) node[face] {};
	\draw (-0.5,0.5) node[face] {};
	\draw (0.5,0.5) node[face] {};
	\draw (0.5,-0.5) node[face] {};
	\draw (-1.5,2.5) node[face] {};
	\draw (1.5,0.5) node[face] {};
	\draw (1.5,-0.5) node[face] {};
	\draw (0.5,1.5) node[face] {};
	\draw (-0.5,1.5) node[face] {};
	\draw (3.5,1.5) node[face] {};
	\draw (2.5,0.5) node[face] {};
	\draw (2.5,-0.5) node[face] {};
	\draw (-0.5,2.5) node[face] {};
	\draw (-1.5,1.5) node[face] {};
	\draw (-1.5,1.5) node[dobFace] {};
	\draw (3.5,0.5) node[dobFace] {};
\end{tikzpicture}
\def\exampleEtime{15}
\caption{The edges forming the Peierls interface $\gamma$ associated with the contour configuration 
in \cref{fig:P_example} and a compatible choice of dual edges $\bedge,\eedge$, 
	shown in blue and numbered according to the construction in \cref{def:interface}, with ambiguities resolved according to the  North-East/South-West rule. 
Observe that the numbering is reversed if the roles of $\bface$ and $\eface$ are reversed.  
In this example, we have $\etime=\exampleEtime$.
}
\label{fig:contour}
\end{figure}

Letting $\etime$ be the highest value of $n$, that is, the time when the exploration path terminates with $\gamma_{\etime} = \eedge$, 
the sequence $\gamma_{[0,n]} = (\gamma_0,\gamma_1,\ldots,\gamma_{n\wedge\etime})$ is a Markov chain and $\etime$ is an associated stopping time (for the natural filtration generated by $\gamma$).  
Note that the interface is manifestly reversible: 
exchanging $\bhalfedge$ and $\ehalfedge$ gives two different Markov chains, which however are related by time-reversal with respect to $\etime$ as in~\cite{Sheffield-Sun:Strong_path_convergence_from_Loewner_driving_function_convergence}. In particular, the terminal time $\etime$ and the path 
$\{ \gamma_1,\ldots,\gamma_{\etime} \}$ 
are unchanged as random variables.  
The state space of the Markov chain is the collection of all sequences $(\gamma_0,\gamma_1,\ldots,\gamma_n)$ of dual edges of various lengths $n \in \bZnn$.
As the graph is assumed to be bounded, the 
length is bounded from above (so the state space is finite but possibly large).

Denote $\PowerSet{\cEdual\Omega} = \{ X \; | \; X \subset \cEdual\Omega \}$.
Given a family of complex-valued functions $W^\xi \colon \smash{\cP^\xi_\Omega} \to \bC$ parameterized by $\xi \in \cVcluster\Omega \times \cVcluster\Omega$, for each $n \in \bZnn$,  
define a family of functions 
\begin{align}
\nonumber
F_W(\gamma_{[0,n]};\cdot) \colon \; & \cVcluster\Omega \to \bC, \\ 
F_W(\gamma_{[0,n]};\halfedge)
:= \; & 
\sum_{P \in \cP_\Omega^{\bhalfedge,\halfedge}} 
\one_{C_{\gamma,n}}(P) 
\, W^{\bhalfedge,\halfedge}(P) ,
\qquad \halfedge = (f,\edual) \in \cVcluster\Omega , 
\label{eq:F_def}
\end{align}
with $\bhalfedge = \vec \gamma_0$,
where $C_{\gamma,n}$ is the event that $\gamma_{[0,n]}$ is part of the Peierls interface, i.e., 
\begin{align*}
C_{\gamma,n} := \{ P \subset \cEdual\Omega \;|\; \gamma_{[0,n]} = \gamma_{[0,n]}(P) \} .
\end{align*}
A contour configuration $P \in C_{\gamma,n}$ necessarily contains $\{ \gamma_1,\ldots,\gamma_n \}$, 
but also excludes some adjacent dual edges which would have formed part of the interface if they were present. 
More precisely, for $n \le \etime$, from the construction of $\gamma_{[0,n]}(P)$ and the parity constraint in the definition of $\smash{\cP^\xi_\Omega}$, 
it follows that $C_{\gamma,n}\cup \smash{\cP^\xi_\Omega} $ consists exactly of those configurations satisfying both of the following properties:
\begin{enumerate}
\item $\gamma_t \in P$ for all $1 \le t \le n \vee (\etime-1)$; 

\item $e \notin P$ for all other edges $e$ which share a vertex with two edges in the path $\gamma_{[0,n]}$ 
that do not form a NE/SW pair 
(such as the edges drawn in red in \cref{fig:ContourConstraint}) 
--- indeed, such edges would have taken priority according to the NE/SW rule.
\end{enumerate}
\noindent 
In particular, if these two requirements are contradictory or the second is impossible, $\gamma_{[0,n]}$ does not respect the NE/SW rule and $C_{\gamma,n}$ is empty.  
Aside from this degenerate case, the set of edges which are not constrained coincides with $\cEdual{\Omega \setminus \tilde \gamma_{[0,n]}}$, where $\tilde \gamma_{[0,n]}$ is the set of vertices $v \in \cVprimal{}$ adjacent to at least one of the fixed edges --- see \cref{fig:ContourConstraint}.

\begin{figure}[h!]
\centering
\tikzpicturedependsonfile{SpinConstraint.tikz}
\begin{tikzpicture}
	\draw[dualEdge] (3.5,0.5) -- (3.5,1.5);
	\draw[dualEdge] (3.5,-0.5) -- (3.5,0.5);
	\draw[dualEdge] (0.5,2.5) -- (0.5,3.5);
	\draw[dualEdge] (0.5,2.5) -- (1.5,2.5);
	\draw[dualEdge] (2.5,1.5) -- (2.5,2.5);
	\draw[dualEdge] (2.5,1.5) -- (3.5,1.5);
	\draw[dualEdge] (0.5,3.5) -- (1.5,3.5);
	\draw[dualEdge] (-0.5,3.5) -- (0.5,3.5);
	\draw[dualEdge] (1.5,2.5) -- (1.5,3.5);
	\draw[dualEdge] (2.5,2.5) -- (3.5,2.5);
	\draw[dualEdge] (1.5,1.5) -- (2.5,1.5);
	\draw[dualEdge] (1.5,1.5) -- (1.5,2.5);
	\draw[dualEdge] (-0.5,-0.5) -- (-0.5,0.5);
	\draw[dualEdge] (-0.5,-0.5) -- (0.5,-0.5);
	\draw[dualEdge] (-0.5,0.5) -- (0.5,0.5);
	\draw[dualEdge] (-0.5,0.5) -- (-0.5,1.5);
	\draw[dualEdge] (0.5,0.5) -- (1.5,0.5);
	\draw[dualEdge] (0.5,0.5) -- (0.5,1.5);
	\draw[dualEdge] (0.5,-0.5) -- (0.5,0.5);
	\draw[dualEdge] (0.5,-0.5) -- (1.5,-0.5);
	\draw[dualEdge] (-1.5,2.5) -- (-0.5,2.5);
	\draw[dualEdge] (1.5,0.5) -- (1.5,1.5);
	\draw[dualEdge] (1.5,0.5) -- (2.5,0.5);
	\draw[dualEdge] (1.5,-0.5) -- (1.5,0.5);
	\draw[dualEdge] (0.5,1.5) -- (0.5,2.5);
	\draw[dualEdge] (0.5,1.5) -- (1.5,1.5);
	\draw[dualEdge] (-0.5,1.5) -- (0.5,1.5);
	\draw[dualEdge] (-0.5,1.5) -- (-0.5,2.5);
	\draw[dualEdge] (3.5,1.5) -- (3.5,2.5);
	\draw[dualEdge] (2.5,0.5) -- (3.5,0.5);
	\draw[dualEdge] (2.5,0.5) -- (2.5,1.5);
	\draw[dualEdge] (2.5,-0.5) -- (3.5,-0.5);
	\draw[dualEdge] (2.5,-0.5) -- (2.5,0.5);
	\draw[dualEdge] (-0.5,2.5) -- (0.5,2.5);
	\draw[dualEdge] (-0.5,2.5) -- (-0.5,3.5);
	\draw[dualEdge] (-1.5,1.5) -- (-1.5,2.5);
	\draw[dualEdge] (-1.5,1.5) -- (-0.5,1.5);
	\draw[peierls] (-1.5,1.5) --(-0.5,1.5) --(-0.5,0.5) --(0.5,0.5) --(0.5,1.5) --(-0.5,1.5) --(-0.5,2.5) --(0.5,2.5) --(0.5,1.5) --(1.5,1.5);
	\draw[excludedEdge] (-0.5,2.5) -- (-1.5,2.5);
	\draw[excludedEdge] (0.5,0.5) -- (1.5,0.5);
	\draw[excludedEdge] (-0.5,2.5) -- (-0.5,3.5);
	\draw[excludedEdge] (0.5,0.5) -- (0.5,-0.5);
	\draw (1.5,3.5) node[face] {};
	\draw (3.5,0.5) node[face] {};
	\draw (3.5,-0.5) node[face] {};
	\draw (0.5,2.5) node[face] {};
	\draw (2.5,1.5) node[face] {};
	\draw (0.5,3.5) node[face] {};
	\draw (-0.5,3.5) node[face] {};
	\draw (1.5,2.5) node[face] {};
	\draw (2.5,2.5) node[face] {};
	\draw (1.5,1.5) node[face] {};
	\draw (3.5,2.5) node[face] {};
	\draw (-0.5,-0.5) node[face] {};
	\draw (-0.5,0.5) node[face] {};
	\draw (0.5,0.5) node[face] {};
	\draw (0.5,-0.5) node[face] {};
	\draw (-1.5,2.5) node[face] {};
	\draw (1.5,0.5) node[face] {};
	\draw (1.5,-0.5) node[face] {};
	\draw (0.5,1.5) node[face] {};
	\draw (-0.5,1.5) node[face] {};
	\draw (3.5,1.5) node[face] {};
	\draw (2.5,0.5) node[face] {};
	\draw (2.5,-0.5) node[face] {};
	\draw (-0.5,2.5) node[face] {};
	\draw (-1.5,1.5) node[face] {};
	\draw (-1.5,1.5) node[dobFace] {};
	\draw (3.5,0.5) node[dobFace] {};
	\draw (-1,1) node[edgeVertexSpinMinus] {};
	\draw (0,0) node[edgeVertexSpinMinus] {};
	\draw (0,2) node[edgeVertexSpinMinus] {};
	\draw (1,1) node[edgeVertexSpinMinus] {};
	\draw (1,0) node[edgeVertexSpinMinus] {};
	\draw (-1,2) node[edgeVertexSpinPlus] {};
	\draw (-1,3) node[edgeVertexSpinPlus] {};
	\draw (0,3) node[edgeVertexSpinPlus] {};
	\draw (0,1) node[edgeVertexSpinPlus] {};
	\draw (1,2) node[edgeVertexSpinPlus] {};
\end{tikzpicture}
\caption{An example of $\tilde \gamma_{[0,n]}$ with the same $\gamma$ as in the previous figures and $n = \examplen$.  
The edges which are always included (resp.~excluded) for all $P \in C_{\gamma,n}$ are drawn in blue (resp.~red).
The set $\tilde \gamma_{[0,n]}$ can also be understood as the set of vertices whose relative spin is fixed by the presence of the interface up to time $n$, 
and the vertices in $\tilde \gamma_{[0,n]}$ (not necessarily in $\Omega$) are shown as 
``\tikzvertex{edgeVertexSpinPlus}'' or ``\tikzvertex{edgeVertexSpinMinus}'' 
according to how their relative spin is fixed.
}
\label{fig:ContourConstraint}
\end{figure}

As a result, we make the following observation, which will be useful in \cref{sec:Grassmann_martingale}. 

\begin{lem}
\label{lem:Cgamma}
For each Peierls interface $\gamma$ \textnormal{(}Def.~\ref{def:interface}\textnormal{)}, integer $n$, 
and boundary condition 
$\xi=(\bhalfedge,\halfedge)$, 
there exists a set $\tilde \gamma_{[0,n]} \subset \cVprimal{}$ depending only on $\gamma_{[0,n]}$ 
such that the assignment
\begin{align}
P = \hat P \cup \left\{ \gamma_1,\ldots,\gamma_n \right\}
\label{eq:Cgamma_P_reduce}
\end{align}
is a one-to-one correspondence between $P \in \smash{\cP^\xi_\Omega}$ 
and $\hat P \in \cP^{\vec\gamma_n,\halfedge}_{\Omega \setminus \tilde \gamma_{[0,n]}}$.
\end{lem}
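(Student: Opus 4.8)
The plan is to work on the natural domain of the map, namely the set $C_{\gamma,n} \cap \cP^\xi_\Omega$ of configurations whose Peierls interface begins with $\gamma_{[0,n]}$ (outside this event the correspondence is vacuous), and to take $\tilde\gamma_{[0,n]}$ to be precisely the vertex set already singled out in the discussion preceding the lemma: the primal vertices incident to a primal edge crossed either by some path edge $\gamma_1, \dots, \gamma_n$ or by one of the NE/SW-excluded edges; this depends only on $\gamma_{[0,n]}$. The argument then splits into two parts: identifying the image of the map $P \mapsto \hat P := P \setminus \{\gamma_1, \dots, \gamma_n\}$ as an explicit family of subsets of $\cEdual{\Omega \setminus \tilde\gamma_{[0,n]}}$, and then checking a parity identity that pins this family down as exactly $\cP^{\vec\gamma_n, \halfedge}_{\Omega \setminus \tilde\gamma_{[0,n]}}$.

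For the first part I would import the structural description established just before the lemma (and illustrated in \cref{fig:ContourConstraint}): $P \in C_{\gamma,n} \cap \cP^\xi_\Omega$ exactly when $P$ contains all of $\gamma_1, \dots, \gamma_n$, contains none of the NE/SW-excluded edges, respects the $\cP^\xi_\Omega$-parity constraint, and is otherwise free — the free dual edges being precisely those of $\cEdual{\Omega \setminus \tilde\gamma_{[0,n]}}$. Granting this: if $P \in C_{\gamma,n} \cap \cP^\xi_\Omega$ then $\hat P$ avoids the path edges (just removed) and the excluded edges (forbidden in $P$), so $\hat P \subseteq \cEdual{\Omega \setminus \tilde\gamma_{[0,n]}}$; since also $e(\halfedge) \notin P$ and $e(\vec\gamma_n) = \gamma_n$ was removed, $\hat P$ lies in $\cEdual{\Omega \setminus \tilde\gamma_{[0,n]}} \setminus \{e(\vec\gamma_n), e(\halfedge)\}$, the ambient set for $\cP^{\vec\gamma_n, \halfedge}_{\Omega \setminus \tilde\gamma_{[0,n]}}$. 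Conversely, given such a $\hat P$, the union $P := \hat P \cup \{\gamma_1, \dots, \gamma_n\}$ is a disjoint union inside $\cEdual\Omega$ that contains the path edges and avoids the excluded edges, hence — once its parity is checked — falls under the same description and lies in $C_{\gamma,n} \cap \cP^\xi_\Omega$. The two maps are visibly mutually inverse, being the addition and the removal of the same fixed set of edges, so everything reduces to comparing parities.

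For the parity comparison, fix a face $f$. Each $\gamma_t$ with $1 \le t \le n$ is incident to exactly the two distinct faces $f(\vec\gamma_{t-1})$ and $f(\vec\gamma_t)$, with $f(\vec\gamma_0) = \bface$, so telescoping yields
\[
\#\{\, t \in [1,n] : \gamma_t \sim f \,\} \;=\; \one_{\bface}(f) + \one_{f(\vec\gamma_n)}(f) + 2\, \#\{\, t \in [1,n-1] : f(\vec\gamma_t) = f \,\} ,
\]
and hence $\degree_f(P) \equiv \degree_f(\hat P) + \one_{\bface}(f) + \one_{f(\vec\gamma_n)}(f) \pmod 2$. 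Inserting this into the constraint \eqref{eq:contour_collection_requirement} defining $\cP^\xi_\Omega$ with $\xi = (\bhalfedge, \halfedge)$, the term $2\one_{\bface}(f)$ drops modulo $2$, leaving exactly $\degree_f(\hat P) + \one_{f(\vec\gamma_n)}(f) + \one_{f(\halfedge)}(f) \in 2\bZ$ — the constraint defining $\cP^{\vec\gamma_n, \halfedge}$. This equivalence at each $f \in \cFprimal{\Omega \setminus \tilde\gamma_{[0,n]}} \subseteq \cFprimal\Omega$ is what is needed; for a face $f \in \cFprimal\Omega \setminus \cFprimal{\Omega \setminus \tilde\gamma_{[0,n]}}$, no free edge is incident to it, so $\degree_f(\hat P) = 0$ and the displayed identity forces the $\cP^\xi_\Omega$-constraint at $f$ for free, using that $f(\vec\gamma_n)$ and $f(\halfedge)$ are incident to free edges — hence lie in $\cFprimal{\Omega \setminus \tilde\gamma_{[0,n]}}$ — whenever $C_{\gamma,n} \cap \cP^\xi_\Omega \neq \emptyset$ (and if that set is empty, then so is $\cP^{\vec\gamma_n, \halfedge}_{\Omega \setminus \tilde\gamma_{[0,n]}}$, by the parallel parity obstruction, and there is nothing to prove).

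I expect the genuine difficulty to lie not in the parity bookkeeping above but in the structural input imported in the first part: the identification of the unconstrained dual edges with exactly $\cEdual{\Omega \setminus \tilde\gamma_{[0,n]}}$, equivalently the claim that being ``incident to a fixed edge'' precisely captures which primal vertices are pinned by $\gamma_{[0,n]}$. Making this watertight uses the planarity of the dual graph together with the specific NE/SW resolution rule in \cref{def:interface}; the cleanest route is to reduce directly to the discussion preceding the lemma rather than re-derive it, but one should phrase that argument purely combinatorially — through the deterministic exploration process and the parity of $P$ at each visited face — since for non-admissible $\xi$ there is no spin interpretation to fall back on, and the degenerate case, where the NE/SW constraints encoded in $\gamma_{[0,n]}$ are already contradictory (so $C_{\gamma,n}$ is empty), must be handled separately.
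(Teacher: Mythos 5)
Your proposal is correct and takes essentially the same route as the paper: the lemma is stated there as a direct consequence (``As a result\dots'') of the preceding characterization of $C_{\gamma,n}\cap\cP^\xi_\Omega$ --- path edges forced in, NE/SW-competing edges forced out, all remaining edges free and coinciding with $\cEdual{\Omega\setminus\tilde\gamma_{[0,n]}}$ --- which is precisely the structural input you import, with the bijection being addition/removal of $\{\gamma_1,\ldots,\gamma_n\}$. Your telescoping parity computation transferring the constraint~\eqref{eq:contour_collection_requirement} from $(\bhalfedge,\halfedge)$ on $\Omega$ to $(\vec\gamma_n,\halfedge)$ on $\Omega\setminus\tilde\gamma_{[0,n]}$, and your attention to the faces of $\cFprimal{\Omega}$ not in $\cFprimal{\Omega\setminus\tilde\gamma_{[0,n]}}$ and to the degenerate (empty $C_{\gamma,n}$) case, merely make explicit bookkeeping that the paper leaves implicit.
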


Now, assume that for each $\ehalfedge$ such that $\xi=(\bhalfedge,\ehalfedge)$ is admissible, 
the argument of $W$ is constant on $\cP_\Omega^{\xi}$. 
Then, we may define a family of probability measures 
\begin{align}
\bP_W^{\vec \gamma_0,\ehalfedge}[\gamma_{[0,n]}]
:=
\frac{F_W(\gamma_{[0,n]};\ehalfedge)}{F_W(\gamma_{[0,0]};\ehalfedge)} , \qquad n \in \bZnn .
\label{eq:generic_prob}
\end{align}
For example, when $|W(P)| = \prod_{\edual \in P} x_{\edual}$, 
this is the distribution of the interface associated with the Ising model with Dobrushin boundary conditions,
as defined in \cref{sec:setup}.

Standard first-step analysis of the interface shows that any function of the form~\eqref{eq:F_def} gives rise to a martingale\footnote{Here, we do not require $\xi$ to be admissible --- while for relating this to the spin-Ising model we should, for models on contour configurations this makes sense for any boundary condition.}  (up to a suitable stopping time). 
This observation (due to Doob) lies at the heart of all SLE convergence proofs: conditional expectations of a random variable given an increasing sequence of sigma-algebras form a tautological martingale.

\begin{prop}
\label{prop:M_general}
Fix $\halfedge = (f,\edual) \in \cVcluster\Omega$ 
and boundary condition $\xi = (\bhalfedge,\ehalfedge)$. 
The process
\begin{align}
M_W( \halfedge ) [\gamma_{[0,n]}]
:=
\frac{F_W(\gamma_{[0,n]};\halfedge)}{F_W(\gamma_{[0,n]};\ehalfedge)} , \qquad n \leq \etime ,
\label{eq:M_def}
\end{align}
is a local martingale with respect to the natural filtration generated by $\gamma$.  
\end{prop}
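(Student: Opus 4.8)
The plan is to verify the defining property of a (local) martingale directly: for $n < \etime$, I must show that the conditional expectation of $M_W(\halfedge)[\gamma_{[0,n+1]}]$ given $\gamma_{[0,n]}$ equals $M_W(\halfedge)[\gamma_{[0,n]}]$, where the conditional law of $\gamma_{n+1}$ given $\gamma_{[0,n]}$ is the one induced by the measure $\bP_W^{\vec\gamma_0,\ehalfedge}$ of \cref{eq:generic_prob}. The key combinatorial input is the partition of the event $C_{\gamma,n}$ according to the value of the next edge: since $C_{\gamma,n+1} \subset C_{\gamma,n}$ and every $P \in C_{\gamma,n}$ (with $n < \etime$) determines a unique next edge $\gamma_{n+1}(P)$ via \cref{def:interface}, we get a disjoint decomposition $C_{\gamma,n} \cap \cP_\Omega^{\bhalfedge,\halfedge} = \bigsqcup_{\edual} \big( C_{\gamma_{[0,n]}\cup\{\edual\},n+1} \cap \cP_\Omega^{\bhalfedge,\halfedge}\big)$ over the admissible choices of $\edual = \gamma_{n+1}$. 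Summing the defining formula \cref{eq:F_def} over this decomposition gives the additivity relation
\begin{align}
F_W(\gamma_{[0,n]};\halfedge) = \sum_{\gamma_{[0,n+1]}} F_W(\gamma_{[0,n+1]};\halfedge),
\label{eq:F_additive}
\end{align}
where the sum runs over one-step extensions of $\gamma_{[0,n]}$, and this holds simultaneously for the target half-edge $\halfedge$ and for the terminal half-edge $\ehalfedge$.

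Given \cref{eq:F_additive}, the martingale identity is essentially a telescoping computation. By definition of the conditional law, the probability of the extension $\gamma_{[0,n+1]}$ given $\gamma_{[0,n]}$ is $F_W(\gamma_{[0,n+1]};\ehalfedge)/F_W(\gamma_{[0,n]};\ehalfedge)$. Hence
\begin{align}
\bE\big[ M_W(\halfedge)[\gamma_{[0,n+1]}] \,\big|\, \gamma_{[0,n]} \big]
= \sum_{\gamma_{[0,n+1]}} \frac{F_W(\gamma_{[0,n+1]};\ehalfedge)}{F_W(\gamma_{[0,n]};\ehalfedge)} \cdot \frac{F_W(\gamma_{[0,n+1]};\halfedge)}{F_W(\gamma_{[0,n+1]};\ehalfedge)}
= \frac{\sum_{\gamma_{[0,n+1]}} F_W(\gamma_{[0,n+1]};\halfedge)}{F_W(\gamma_{[0,n]};\ehalfedge)},
\end{align}
and the numerator equals $F_W(\gamma_{[0,n]};\halfedge)$ by \cref{eq:F_additive}, so the right-hand side is exactly $M_W(\halfedge)[\gamma_{[0,n]}]$. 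One should also record that $M_W(\halfedge)[\gamma_{[0,n]}]$ is $\sigma(\gamma_{[0,n]})$-measurable and (on the event that the denominator is nonzero) finite, so it is adapted and integrable with respect to the finite probability space; the stopping at $\etime$ accounts for the fact that the recursion — and the decomposition \cref{eq:F_additive} — is only valid while the interface has not yet terminated, which is why the conclusion is a \emph{local} martingale rather than a martingale.

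The main obstacle, and the only place requiring genuine care, is justifying \cref{eq:F_additive}: I need the claim that for $n < \etime$ the event $C_{\gamma,n}\cap\cP_\Omega^{\bhalfedge,\halfedge}$ splits \emph{cleanly} over the value of $\gamma_{n+1}$, i.e.\ that the deterministic rule in \cref{def:interface} (steps 2 and 3, including the NE/SW tie-breaking) assigns to every configuration in $C_{\gamma,n}$ a well-defined next edge, and that the fibers are exactly the sets $C_{\gamma_{[0,n+1]},n+1}$. This is where the structural description of $C_{\gamma,n}$ recalled just before \cref{lem:Cgamma} — that membership is equivalent to forcing $\gamma_1,\dots,\gamma_n \in P$ and forbidding the competing NE/SW edges, so that $C_{\gamma,n}$ corresponds to contour configurations on the reduced domain $\Omega\setminus\tilde\gamma_{[0,n]}$ — does the real work. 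I would phrase the argument so that \cref{lem:Cgamma} is invoked at level $n$ and at level $n+1$ to identify both sides of \cref{eq:F_additive} with sums over configurations on reduced domains, at which point the equality is the obvious decomposition of $\cP^{\vec\gamma_n,\halfedge}_{\Omega\setminus\tilde\gamma_{[0,n]}}$ according to which edge leaves $f(\vec\gamma_n)$ next. Note that this entire argument uses only the combinatorial structure of the functions $F_W$ and the deterministic exploration rule — it is insensitive to whether $W$ has the product form \cref{eq:contour_probability_measure}, which is exactly why the statement applies verbatim to the non-integrable weights of \cref{thm:main_thm}.
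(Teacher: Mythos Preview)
Your argument is correct and essentially identical to the paper's: both identify the one-step transition probability as $F_W(\gamma_{[0,n+1]};\ehalfedge)/F_W(\gamma_{[0,n]};\ehalfedge)$, establish the additivity relation $F_W(\gamma_{[0,n]};\halfedge)=\sum_{\gamma_{[0,n+1]}}F_W(\gamma_{[0,n+1]};\halfedge)$ from the disjoint decomposition $C_{\gamma,n}=\bigsqcup C_{\gamma+,n+1}$, and combine these in the obvious way. The only difference is that you propose invoking \cref{lem:Cgamma} to justify the decomposition, whereas the paper simply observes that the $C_{\gamma+,n+1}$ are disjoint with union $C_{\gamma,n}$ directly from the deterministic exploration rule of \cref{def:interface}; \cref{lem:Cgamma} is not needed here.
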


\begin{proof}
Let $n < \etime$.
Note that the $(n+1)$st step of $\gamma$ satisfies
\begin{align}
\bP_W^{\xi} \big[ \gamma_{n+1} \big| \gamma_{[0,n]} \big]
=
\frac{F_W(\gamma_{[0,n+1]};\ehalfedge)}{F_W(\gamma_{[0,n]};\ehalfedge)} .
\label{eq:generic_conditional}
\end{align}
Given $\gamma_{[0,n]}$, consider the possible dual edges $\edual_+ \in \cEdual{}$ for which $\gamma+ := \gamma_{[0,n]} \cup \{ \edual_+ \}$ has positive probability (i.e., the possible continuations of the Peierls interface).  
As the different events $C_{\gamma+,n+1}$ 
are disjoint and their union is $C_{\gamma,n}$, 
the definition~\eqref{eq:F_def} of $F_W$ gives
\begin{align}
\begin{split}
\sum_{\edual_+} 
F_W(\gamma+;\halfedge) \;
= \; &
\sum_{P \in \cP_\Omega^{\varpi}}
\sum_{\edual_+} 
\one_{C_{\gamma+,n+1}}(P) 
\, W^\varpi(P)
\\ 
= \; &
\sum_{P \in \cP_\Omega^{\varpi}}
\one_{C_{\gamma,n}}(P) 
\, W^\varpi(P)
=
F_W(\gamma_{[0,n]};\halfedge) ,
\end{split}
\label{eq:F_sumrule}
\end{align}
where $\varpi = (\bhalfedge,\halfedge)$. 
Combining \cref{eq:F_sumrule}  with \cref{eq:generic_conditional} yields
\begin{align}
\label{eq:martingale_proof}
\bE_W^{\xi} \Big[ M_W(\halfedge)[\gamma_{[0,n+1]}] \; \Big| \; \gamma_{[0,n]} \Big]
\underset{\hphantom{\eqref{eq:generic_conditional}}}{\overset{\hphantom{\eqref{eq:F_sumrule}}}{=}} \; &
\sum_{\edual_+}
\bP_W^{\xi} \big[\gamma_{n+1}=\edual_+ \; \big| \;  \gamma_{[0,n]} \big]
\; \frac{F_W(\gamma+ ; \halfedge)}{F_W(\gamma+;\ehalfedge)}
\\ 
\underset{\hphantom{\eqref{eq:F_sumrule}}}{\overset{\eqref{eq:generic_conditional}}{=}} \; &
\frac{
1
}{F_W(\gamma_{[0,n]};\ehalfedge)}
\; \sum_{\edual_+}
F_W(\gamma+ ; \halfedge)
\; 
\underset{\hphantom{\eqref{eq:generic_conditional}}}{\overset{\eqref{eq:F_sumrule}}{=}} \; 
M_W(\halfedge)[\gamma_{[0,n]}] .
\nonumber
\end{align}
This shows the martingale property for one step, which is sufficient to conclude.
\end{proof}

Note that even for a fixed distribution of the interface process there are many possible functions $W$ with the needed properties, defining different martingale observables.
For example, the original one used by Smirnov~\cite{Smirnov:Towards_conformal_invariance_of_2D_lattice_models}
and 
Chelkak~\&~Smirnov~\cite{Chelkak-Smirnov:Universality_in_2D_Ising_and_conformal_invariance_of_fermionic_observables},
further elaborated especially by Izyurov~\cite{Izyurov:Critical_Ising_interfaces_in_multiply_connected_domains}, 
is a linear combination with two different values of $\halfedge$ sharing the same edge, 
as well as involving a specific choice of the functions $W$ with a sign depending on the winding of the contours.
Appropriate choices are important to actually get a meromorphic scaling limit via discrete complex analysis techniques.

\bigskip{}
\section{Grassmann integral representation of the planar Ising model}
\label{sec:Grassman}

We now review the representation of the 
(integrable) Ising model in terms of Grassmann variables, in particular introducing 
a specific choice of the functions $W$ for~\eqref{eq:F_def} which produces a nice version of the martingale observable used to study the interface process.  

First, in \cref{sec:Grassmann_preli} we summarize a representation of the planar Ising model in terms of Grassmann calculus.
There are a number of slightly different versions of this representation and generalizations to other planar graphs, for which we refer the reader 
to~\cite{CCK:Revisiting_the_combinatorics_of_the_2D_Ising_model} and references therein.  
Note, however, that the representation that we use 
(essentially the one which is most common in the renormalization group literature, 
cf.~\cite{GGM:The_scaling_limit_of_the_energy_correlations_in_non_integrable_Ising_models} and references therein) 
is not quite the same as that discussed in~\cite{CCK:Revisiting_the_combinatorics_of_the_2D_Ising_model}
--- most importantly, regarding positions of the factors of $x_e$ (compare \cref{eq:cS_def} to~\cite[Equation~(1.7)]{CCK:Revisiting_the_combinatorics_of_the_2D_Ising_model}).
Consequently, the representation of~\cite{CCK:Revisiting_the_combinatorics_of_the_2D_Ising_model} 
has a somewhat less direct relationship to the contour representation, making it less convenient for our purposes.

Then, in \cref{sec:Grassman_Dobrushin} we analyze the effect of boundary conditions relevant to the non-integrable generalization of the planar Ising model. 
In \cref{sec:Grassmann_martingale}, we will end up with an expression which (in the integrable case) is already known to have a suitable scaling limit, 
in terms of correlation functions of a lattice fermionic quantum field theory (\cref{prop:M_free_expectation}).
It is important to observe that this Grassmann observable is local.

\subsection{Grassmann field, aka lattice fermion}
\label{sec:Grassmann_preli}

Fix finite $\Omega \subset \cVprimal{}$ and consider the cluster graph $\cGcluster\Omega = ( \cVcluster\Omega, \cEcluster\Omega )$ associated to $\Omega$. 
Let $\Phi_\Omega$ be the complex \emph{Grassmann algebra}\footnote{That is, $\Phi_\Omega$ is isomorphic to the exterior algebra of the vector space $V_\Omega := \Span_\bC\{\varphi_{\halfedge} \;|\; \halfedge \in \cVcluster\Omega\}$ 
obtained as the quotient of the tensor algebra of $V_\Omega$ by the relation $\varphi_{\halfedge} \varphi_{\halfedge'} = - \varphi_{\halfedge'} \varphi_{\halfedge}$ for all $\halfedge, \halfedge' \in \cVcluster\Omega$.} 
with a basis $\varphi_{\halfedge}$ indexed by $\halfedge \in \cVcluster\Omega$. 
For each face $f \in \cFprimal\Omega$ we label the vertices of the associated complete four-graph $\diamondplus_f$ as  
$\vcluster{f}{N}, \vcluster{f}{E}, \vcluster{f}{S}, \vcluster{f}{W} \in \cVcluster\Omega$, according to the direction of the associated edge (North, East, South, West) as seen from $f$.
For brevity, we also denote the four associated Grassmann basis variables by 
\begin{align*}
\GrN_f := \varphi_{\vcluster{f}{N}}, \qquad 
\GrE_f := \varphi_{\vcluster{f}{E}}, \qquad 
\GrS_f := \varphi_{\vcluster{f}{S}}, \qquad 
\GrW_f := \varphi_{\vcluster{f}{W}}  ,
\end{align*}
the subalgebra they generate by $\Phi_f$, and the associated Grassmann (Berezin) integral (see, e.g.,~\cite[Appendix~2.B]{DMS:CFT}) by
\begin{align}
\int \cD [\Phi_f] \ (\cdot) := \int \ud \GrW_f \ud \GrS_f \ud \GrE_f \ud \GrN_f  \ (\cdot) 
\label{eq:one_face_integration}
\end{align}
(note the possibly counterintuitive ordering which is chosen to simplify a sign later).

The Grassmann action functional is given by the interaction of Grassmann basis variables across all short edges. The action functional around one face $f \in \cFprimal\Omega$ is 
\begin{align}\label{eq:cS0_def}
\begin{split}
\cS_0 (\Phi_f)
:= 
\GrE_f \GrW_f
+
\GrN_f \GrS_f
+
\GrN_f \GrE_f 
+
\GrS_f \GrW_f
+
\GrS_f \GrE_f
+
\GrW_f \GrN_f
.
\end{split}
\end{align}
Recalling that anticommutation implies that squares of Grassmann variables equal zero, and so
$\exp ( \cS_0 (\Phi_f) ) = 1 + \cS_0 (\Phi_f)$,
and noting that any Grassmann integral not involving all of the Grassmann variables exactly once vanishes, we obtain
\begin{align}
\int \cD [\Phi_f] \ 
e^{\cS_0 (\Phi_f)}
&=1
=
\int \cD [\Phi_f] \
\GrN_f \, \GrE_f 
\, \GrS_f \, \GrW_f ,
\label{eq:exp_cS0}
\\
\int \cD[\Phi_f] \ 
\varphi^N_f \varphi^E_f \varphi^S_f \varphi^W_f 
e^{\cS_0(\Phi_f)}
&=
1
=
\int \cD[\Phi_f] \ 
\varphi^N_f \varphi^E_f  
e^{\cS_0(\Phi_f)}
\int \cD[\Phi_f] \ 
\varphi^S_f \varphi^W_f 
e^{\cS_0(\Phi_f)} 
\label{eq:face_products_cS0}
\end{align}
(cf.~\cite[Equations~(2.219)~\&~(2.225)]{DMS:CFT}). 
The interaction across long edges is given~by
\begin{align}
E_{\edual} :=
\begin{cases}
\GrE_f \, \GrW_{f+1}
, &
\edual = \{ f,f+1 \} \subset \cEdual\Omega ,
\\
\GrN_f \, \GrS_{f+\ii}
, &
\edual = \{ f,f+\ii \} \subset \cEdual\Omega ,
\end{cases}
\label{eq:Ee_def}
\end{align}
where $f$ denotes the midpoint of the face $f$, i.e., a point in dual lattice $(\bZ + \tfrac12) + \ii ( \bZ + \tfrac12 )$.
We use this to define a Grassmann polynomial on $\Omega$, sometimes called an action functional, 
\begin{align}
\cS_{\ul x} (\Phi_\Omega)
:=
\sum_{\edual \in \cEdual\Omega} E_{\edual} \, x_{\edual} 
+
\sum_{f \in \cFprimal\Omega} \cS_0 (\Phi_f)
,
\label{eq:cS_def}
\end{align}
where $\ul x = (x_{\edual})_{\edual \in \cEdual\Omega} \in [0,\infty)^{\cEdual\Omega}$ are arbitrary edge weights. 
We define a Grassmann integral over the entire algebra  via~\eqref{eq:one_face_integration} by
\begin{align*}
\int \cD [\Phi_\Omega] \ (\cdot) 
:= \int \prod_{f \in \cFprimal\Omega} \cD [\Phi_f] \ (\cdot) 
= \int \prod_{f \in \cFprimal\Omega} \ud \GrW_f \ud \GrS_f \ud \GrE_f \ud \GrN_f \ (\cdot)
\end{align*}
(note that the order of the faces is not important). 
The key feature of this object is that it reproduces the partition function with fixed (plus) boundary conditions as
\begin{align}
Z_{\Omega,\ul x}^{+}
= \int \cD [\Phi_\Omega] \ 
e^{\cS_{\ul x}(\Phi_\Omega)} ,
\label{eq:Z_plus_Grassman}
\end{align}
since the integral is given by the same Pfaffian as in Kasteleyn's solution of the Ising model~\cite{Kasteleyn:Dimer_statistics_and_phase_transitions} 
(see also~\cite{Samuel:The_use_of_anticommuting_variable_integrals_in_statistical_mechanics}, and~\cite{CCK:Revisiting_the_combinatorics_of_the_2D_Ising_model} for the case with generic edge weights)\footnote{\Cref{eq:Z_plus_Grassman} and similar formulae often appear with an ambiguous sign, but the fact that the signs of the two expressions match can be checked by setting $\ul x = 0$ and using \cref{eq:exp_cS0}.}. 
This is closely related to the following identity, which will be the key to relating these integrals to other quantities of interest:

\begin{lem}
\label{lem:Grass_int_inndicator}
For any $X \subset \cEdual\Omega$, we have 
\begin{align}
\label{eq:Grass_int_inndicator}
\int \cD [ \Phi_\Omega ] 
\Big(
\prod_{\edual \in X} E_{\edual} 
\Big) \,
e^{\cS_{0} (\Phi_\Omega)} 
= 
\one_{\cP_\Omega}(X) 
.
\end{align}
\end{lem}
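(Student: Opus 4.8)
The plan is to obtain \eqref{eq:Grass_int_inndicator} \emph{not} by evaluating the Grassmann integral head-on, but by reading it off as the coefficient of the monomial $\prod_{\edual\in X} x_\edual$ in the already-established partition-function identity \eqref{eq:Z_plus_Grassman}. Write $\cS_0(\Phi_\Omega) = \sum_{f\in\cFprimal\Omega}\cS_0(\Phi_f)$ for the weight-zero part of the action. Both parts of $\cS_{\ul x}(\Phi_\Omega)$ in \eqref{eq:cS_def} are sums of degree-two monomials, hence even, hence commute, so $e^{\cS_{\ul x}(\Phi_\Omega)} = e^{\sum_{\edual\in\cEdual\Omega} x_\edual E_\edual}\, e^{\cS_0(\Phi_\Omega)}$; moreover the individual terms $x_\edual E_\edual$ are even and satisfy $(x_\edual E_\edual)^2 = 0$ (each $E_\edual$ repeats a Grassmann variable when squared), so
\begin{align*}
e^{\sum_{\edual\in\cEdual\Omega} x_\edual E_\edual}
= \prod_{\edual\in\cEdual\Omega}\bigl(1 + x_\edual E_\edual\bigr)
= \sum_{X\subseteq\cEdual\Omega}\Bigl(\,\prod_{\edual\in X} x_\edual\Bigr)\Bigl(\,\prod_{\edual\in X} E_\edual\Bigr),
\end{align*}
where in the last step the order in which the mutually commuting factors $E_\edual$, $\edual\in X$, are multiplied is immaterial.

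Integrating this against $\cD[\Phi_\Omega]$ and inserting \eqref{eq:Z_plus_Grassman} on the left and the low-temperature expansion \eqref{eq:Z_plus_lt} on the right gives, for every $\ul x\in[0,\infty)^{\cEdual\Omega}$,
\begin{align*}
\sum_{X\subseteq\cEdual\Omega}\Bigl(\,\prod_{\edual\in X} x_\edual\Bigr)\, J(X)
&= \int\cD[\Phi_\Omega]\, e^{\cS_{\ul x}(\Phi_\Omega)}
= Z^{+}_{\Omega,\ul x}
= \sum_{P\in\cP_\Omega}\prod_{\edual\in P} x_\edual \\
&= \sum_{X\subseteq\cEdual\Omega}\Bigl(\,\prod_{\edual\in X} x_\edual\Bigr)\,\one_{\cP_\Omega}(X),
\end{align*}
where $J(X) := \int\cD[\Phi_\Omega]\bigl(\prod_{\edual\in X} E_\edual\bigr) e^{\cS_0(\Phi_\Omega)}$ is exactly the quantity to be identified. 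The first and last members are multilinear polynomials in the variables $(x_\edual)_{\edual\in\cEdual\Omega}$ which agree on the open set $(0,\infty)^{\cEdual\Omega}$; since the monomials $\prod_{\edual\in X} x_\edual$ are linearly independent, their coefficients coincide, i.e.\ $J(X) = \one_{\cP_\Omega}(X)$ for every $X\subseteq\cEdual\Omega$, which is \eqref{eq:Grass_int_inndicator}.

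In this route there is essentially no obstacle: the work reduces to the routine checks above (evenness of the two parts of the action, $E_\edual^2=0$, and the elementary polynomial-identity argument), while the one genuinely delicate point --- the overall sign in \eqref{eq:Z_plus_Grassman} --- has already been fixed in the text by evaluating at $\ul x = 0$ via \eqref{eq:exp_cS0}. A more self-contained but more laborious alternative would be to prove \eqref{eq:Grass_int_inndicator} directly by factoring the integral over faces: for each $f\in\cFprimal\Omega$, group together the $\degree_f(X)$ variables among $\{\GrN_f,\GrE_f,\GrS_f,\GrW_f\}$ supplied by the edges of $X$ incident to $f$, reducing the integral (up to a global sign from the regrouping) to $\prod_{f\in\cFprimal\Omega}\int\cD[\Phi_f]\,\Psi_f(X)\,e^{\cS_0(\Phi_f)}$ of one-face integrals. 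If some $\degree_f(X)$ is odd, the integrand has no top-degree component at that face and the integral vanishes, matching $\one_{\cP_\Omega}(X)=0$; if every $\degree_f(X)$ is even, each one-face integral is $1$ by \eqref{eq:exp_cS0}--\eqref{eq:face_products_cS0}. In that route the main obstacle would be precisely the sign bookkeeping --- checking that the regrouping sign together with the deliberately chosen ordering of differentials in \eqref{eq:one_face_integration} multiply to $+1$ in every case --- so I would present the coefficient-extraction argument as the main proof and leave the direct computation to a remark.
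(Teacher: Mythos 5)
Your main argument is exactly the paper's proof: expand $e^{\cS_{\ul x}}$ into $\prod_{\edual}(1+x_\edual E_\edual)$, identify $Z^+_{\Omega,\ul x}$ via \eqref{eq:Z_plus_Grassman} with the low-temperature expansion \eqref{eq:Z_plus_lt}, and equate coefficients of the multilinear monomials in $\ul x$. The extra remark sketching a direct face-by-face computation is a reasonable alternative the paper also alludes to but, like you, declines to carry out because of the sign bookkeeping.
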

Note that this is in fact equivalent to one of the intermediate steps in Kasteleyn's solution, but rather than entering into the technical details needed to verify this, 
we instead derive it from the fact that \cref{eq:Z_plus_Grassman} holds for all values of the edge weights.
\begin{proof}
From \cref{eq:Z_plus_lt}, we obtain
\begin{align*}
Z_{\Omega,\ul x}^{+}
=
\sum_{P \in \cP_\Omega}
\prod_{e \in P} x_e
=
\sum_{X \subset \cEdual\Omega}
\one_{\cP_\Omega}(X) 
\prod_{e \in X} x_e .
\end{align*}
On the other hand, using \cref{eq:Z_plus_Grassman} and expanding the first term in the exponential in \cref{eq:cS_def} 
as a product of terms of the form 
$\exp ( x_{\edual} E_{\edual}) = 1 + x_{\edual} E_{\edual}$,  we obtain
\begin{align*}
Z_{\Omega,\ul x}^{+}
=
\int \cD [\Phi_\Omega] \ 
e^{\cS_{\ul x}(\Phi_\Omega)}
= \; & \sum_{X \subset \cEdual\Omega} \Big( \prod_{\edual \in X} x_{\edual} \Big) \, 
\int \cD [ \Phi_\Omega ] 
\Big(
\prod_{\edual \in X} E_{\edual} 
\Big) \,
e^{\cS_{0} (\Phi_\Omega)} .
\end{align*}
Since the edge weights may be arbitrary, these two expressions are equal as polynomials, and equating the coefficients gives \cref{eq:Grass_int_inndicator}.
\end{proof}

We will find it convenient to define a Berezin measure related to the expression for the partition function in \cref{eq:Z_plus_Grassman} by 
\begin{align}
\label{eq:correlation_function}
\big\langle \cO \big\rangle_{\Omega,\ul x}^{+} 
:= 
\frac{\int \cD [\Phi_\Omega] \ \cO(\Phi_\Omega) \ e^{\cS_{\ul x} (\Phi_\Omega)}}{\int \cD [\Phi_\Omega] \ e^{\cS_{\ul x} (\Phi_\Omega)}}
= 
\frac{1}{Z_{\Omega,\ul x}^{+}} \, \int \cD [\Phi_\Omega]\ \cO(\Phi_\Omega) \ e^{\cS_{\ul x} (\Phi_\Omega)}  ,
\end{align}
where $\cO = \cO(\Phi_\Omega)$ is a function of the Grassmann variables 
(i.e., a generic element of the Grassmann algebra).

\subsection{Grassmann integrals and boundary conditions}
\label{sec:Grassman_Dobrushin}

As we will now show, similar expressions are available\footnote{This could also be done in terms of the relationship to the dimer model on the cluster graph, but we will instead give a derivation in terms of manipulations of Grassmann integrals, which also produces some formulas which will be helpful later.} 
for the sets of non-even contours involved in Dobrushin and alternating boundary conditions.  
Given a finite set $\Omega \subset \cVprimal{}$, an integer $N \in \bZnn$, and distinct vertices (half-edges) $\xi = (\halfedge_1, \halfedge_2, \ldots, \halfedge_{2N})$ in $\cVcluster\Omega$, let
\begin{align}
W_{\Omega,\ul x}^{\xi}(X) :=
\int \cD [ \Phi_\Omega ] \, \varphi_{\halfedge_1} \, \varphi_{\halfedge_2} \, \cdots \, \varphi_{\halfedge_{2N}}
\Big(
\prod_{\edual \in X} x_{\edual} E_{\edual} 
\Big) \,
e^{\cS_{0} (\Phi_\Omega)} , \qquad X \subset \cEdual\Omega .
\label{eq:W_nonint}
\end{align}
In \cref{prop:part_f_equals_Ber_int}, we relate similar Grassmann integrals to the partition function $Z_{\Omega,\ul x}^{\xi}$.

\begin{lem}
\label{lem:cP_sum_Grassman_general}
For any $X \subset \cEdual\Omega$, we have
\begin{align}
\begin{split}
\sum_{\substack{P \in \cP_\Omega^{\xi} \\ P \supset X}} 
W_{\Omega,\ul x}^{\xi}(P)
= \; & 
\bigg( \prod_{\edual \in X} x_{\edual} \frac{\partial}{\partial x_{\edual}} \bigg)
\sum_{P \in \cP_\Omega^{\xi}} 
W_{\Omega,\ul x}^{\xi}(P) 
\\
= \; & 
\int \cD [ \Phi_\Omega ] \, \varphi_{\halfedge_1} \, \varphi_{\halfedge_2} \, \cdots \, \varphi_{\halfedge_{2N}}
\Big(
\prod_{\edual \in X} x_{\edual} E_{\edual} 
\Big) \,
e^{\cS_{\ul x} (\Phi_\Omega)} .
\end{split}
\label{eq:cP_sum_derivs}
\end{align}
\end{lem}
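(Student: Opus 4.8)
The plan is to establish the two displayed equalities separately; the first is a purely algebraic identity for the edge weights, while the second amounts to factoring the long-edge part out of $e^{\cS_{\ul x}}$ and then observing a parity constraint.

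The starting point is that, since the $x_\edual$ are scalars and neither the $E_\edual$ nor $e^{\cS_0(\Phi_\Omega)}$ depends on $\ul x$,
\[
W_{\Omega,\ul x}^{\xi}(P) = \Big( \prod_{\edual \in P} x_\edual \Big)\, c_P ,
\qquad
c_P := \int \cD[\Phi_\Omega]\, \varphi_{\halfedge_1} \cdots \varphi_{\halfedge_{2N}} \Big( \prod_{\edual \in P} E_\edual \Big)\, e^{\cS_0(\Phi_\Omega)} \in \bC ,
\]
with $c_P$ independent of $\ul x$. For the first equality I would note that in the monomial $\prod_{\edual \in P} x_\edual$ each $x_\edual$ occurs with exponent $\one_{\edual \in P} \in \{0,1\}$, so the commuting operators $x_\edual \tfrac{\partial}{\partial x_\edual}$ act on it by multiplication by this exponent: $\big( \prod_{\edual \in X} x_\edual \tfrac{\partial}{\partial x_\edual} \big) \prod_{\edual \in P} x_\edual = \one_{X \subset P} \prod_{\edual \in P} x_\edual$. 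Applying this term by term to $\sum_{P \in \cP_\Omega^\xi} W_{\Omega,\ul x}^\xi(P)$ immediately gives $\sum_{P \in \cP_\Omega^\xi,\, P \supset X} W_{\Omega,\ul x}^\xi(P)$.

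For the second equality, the key point is that $\cS_0(\Phi_\Omega)$ and each $E_\edual$ are even (being sums of products of two Grassmann variables), hence central, while $E_{\edual}^2 = 0$. Therefore, writing $\cS_{\ul x} = \cS_0 + \sum_{\edual \in \cEdual\Omega} x_\edual E_\edual$ as in \eqref{eq:cS_def},
\[
e^{\cS_{\ul x}(\Phi_\Omega)} = e^{\cS_0(\Phi_\Omega)} \prod_{\edual \in \cEdual\Omega} e^{x_\edual E_\edual} = \Big( \sum_{Y \subset \cEdual\Omega} \prod_{\edual \in Y} x_\edual E_\edual \Big)\, e^{\cS_0(\Phi_\Omega)} ,
\]
the products over $Y$ being order-independent since the factors commute. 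Inserting this into $\int \cD[\Phi_\Omega]\, \varphi_{\halfedge_1} \cdots \varphi_{\halfedge_{2N}} \big( \prod_{\edual \in X} x_\edual E_\edual \big)\, e^{\cS_{\ul x}(\Phi_\Omega)}$, discarding every term with $X \cap Y \neq \emptyset$ via $E_{\edual}^2 = 0$ once more, and reindexing the surviving terms by $P := X \sqcup Y$ (which runs over all $P \subset \cEdual\Omega$ with $P \supset X$), each term reassembles to $\big( \prod_{\edual \in P} x_\edual \big) c_P = W_{\Omega,\ul x}^\xi(P)$, so the integral equals $\sum_{P \supset X} W_{\Omega,\ul x}^\xi(P)$, summed a priori over all $P \subset \cEdual\Omega$ containing $X$.

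The only substantive step, which I expect to be the main obstacle, is then to show that $W_{\Omega,\ul x}^\xi(P) = 0$ unless $P \in \cP_\Omega^\xi$, so that this sum collapses onto $\cP_\Omega^\xi$; this is the $\xi$-decorated counterpart of the parity count underlying \cref{lem:Grass_int_inndicator}. If $e(\halfedge_j) \in P$ for some $j$, then the variable $\varphi_{\halfedge_j}$ --- one of the four Grassmann variables attached to the face $f(\halfedge_j)$ --- also appears in the factor $E_{e(\halfedge_j)}$, so the integrand has a squared Grassmann variable and $W_{\Omega,\ul x}^\xi(P)=0$. Otherwise, fix a face $f$: every monomial in the expansion of $\varphi_{\halfedge_1} \cdots \varphi_{\halfedge_{2N}} \big( \prod_{\edual \in P} E_\edual \big) e^{\cS_0(\Phi_\Omega)}$ involves the four variables attached to $f$ with combined multiplicity $n_f + \degree_f(P) + k$ for some even $k$, where $n_f := |\{ j : f(\halfedge_j) = f \}|$ is contributed by the insertions, each long edge $\edual \in P$ with $\edual \sim f$ contributes exactly one variable at $f$ (so $\degree_f(P)$ in all), and $k$ comes from the even element $e^{\cS_0(\Phi_f)}$. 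A Berezin integral of such a monomial is nonzero only if every variable occurs exactly once, i.e.\ only if this multiplicity equals $4$ at every $f$, which since $k$ is even forces $n_f + \degree_f(P)$ even for all $f$; together with $P \cap \{ e(\halfedge_1), \ldots, e(\halfedge_{2N}) \} = \emptyset$ this is exactly the condition \eqref{eq:contour_collection_requirement} defining $\cP_\Omega^\xi$. I expect the two ``gathering'' steps --- rearranging products of the even elements $E_\edual$ past the odd insertions $\varphi_{\halfedge_j}$ --- and this final parity bookkeeping to be the only places needing care, but no spurious signs arise there precisely because the $E_\edual$ are even, so the argument should go through cleanly.
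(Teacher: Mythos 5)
Your proof is correct and follows essentially the same route as the paper: the first equality is read off from the polynomial structure of $W_{\Omega,\ul x}^{\xi}$ in $\ul x$, and the second comes from expanding the long-edge part of $e^{\cS_{\ul x}}$ over subsets of $\cEdual\Omega$ and discarding the terms not corresponding to $P \in \cP_\Omega^{\xi}$. The only difference is that you spell out in detail the face-by-face parity argument for why those terms vanish (repetitions when $e(\halfedge_j)\in P$, and otherwise the even contribution of $e^{\cS_0(\Phi_f)}$ forcing $n_f+\degree_f(P)$ even), a step the paper asserts in one line; this is a welcome elaboration, not a different method.
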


\begin{proof}
Fix $X \subset \cEdual\Omega$. 
The first line of~\eqref{eq:cP_sum_derivs} is evident from~\eqref{eq:W_nonint}.
It remains to prove the second line of~\eqref{eq:cP_sum_derivs}.
By expanding the first term in the  exponential in \cref{eq:cS_def} 
as a product of terms of the form 
$\exp ( x_{\edual} E_{\edual}) = 1 + x_{\edual} E_{\edual}$, 
and noting that inside the Grassmann integral 
all terms which do not correspond to contour configurations $P \in \cP_\Omega^{\xi}$ vanish since they cannot be written without repetitions of Grassmann variables, we obtain
\begin{align*}
 \; & \int \cD [ \Phi_\Omega ] \, \varphi_{\halfedge_1} \, \varphi_{\halfedge_2} \, \cdots \, \varphi_{\halfedge_{2N}}
\Big(
\prod_{\edual \in X} x_{\edual} E_{\edual} 
\Big) \,
e^{\cS_{\ul x} (\Phi_\Omega)} 
\\
= \; & \sum_{\substack{P \in \cP_\Omega^{\xi} \\ P \supset X}} 
\int \cD [ \Phi_\Omega ] \, \varphi_{\halfedge_1} \, \varphi_{\halfedge_2} \, \cdots \, \varphi_{\halfedge_{2N}}
\Big(
\prod_{\edual \in P} x_{\edual} E_{\edual} 
\Big) \,
e^{\cS_{0} (\Phi_\Omega)} .
\end{align*}
The claim now follows by recognizing the definition~\eqref{eq:W_nonint} on the second line.
\end{proof}

Among other things, this gives an almost explicit formula for $W_{\Omega,\ul x}$ which connects it to the distribution of contour configurations in the Ising model and to \cref{sec:interface}.
\begin{prop} 
\label{prop:Ber_W_admissible}
For any admissible boundary condition $\xi = (\halfedge_1, \halfedge_2, \ldots, \halfedge_{2N})$, we have 
\begin{align}\label{eq:W_equals_prod_of_edge_weights}
W_{\Omega,\ul x}^{\xi}(P) 
= \; & \pm \prod_{\edual \in P} x_{\edual} ,
\qquad \textnormal{for all } P \in \cP_\Omega^{\xi} ,
\end{align}
with the same sign for all $P \in \cP_\Omega^{\xi}$. 
\end{prop}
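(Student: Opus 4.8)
The plan is to isolate the edge-weight factor from a purely combinatorial sign. Since $\prod_{\edual\in P}x_\edual E_\edual=\big(\prod_{\edual\in P}x_\edual\big)\prod_{\edual\in P}E_\edual$, we have $W_{\Omega,\ul x}^{\xi}(P)=\big(\prod_{\edual\in P}x_\edual\big)\,c^{\xi}(P)$ with $c^{\xi}(P):=\int\cD[\Phi_\Omega]\,\varphi_{\halfedge_1}\cdots\varphi_{\halfedge_{2N}}\big(\prod_{\edual\in P}E_\edual\big)\,e^{\cS_0(\Phi_\Omega)}$, so it is enough to show $c^{\xi}(P)=\pm1$ with one and the same sign for every $P\in\cP_\Omega^{\xi}$. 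First I would establish $c^{\xi}(P)=\pm1$ by face-by-face bookkeeping. Expand $e^{\cS_0(\Phi_\Omega)}=\prod_{f\in\cFprimal\Omega}e^{\cS_0(\Phi_f)}$ into monomials in the variables $\GrN_f,\GrE_f,\GrS_f,\GrW_f$; the Berezin integral over $\Phi_f$ kills every monomial except $\GrN_f\GrE_f\GrS_f\GrW_f$, so only terms in which each $\varphi_\halfedge$ appears exactly once survive. Fix $f$: the factor $\prod_{\edual\in P}E_\edual$ contributes the variable $\varphi_\halfedge$ for exactly the $\degree_f(P)$ half-edges $\halfedge$ with $f(\halfedge)=f$ and $e(\halfedge)\in P$, and $\varphi_{\halfedge_1}\cdots\varphi_{\halfedge_{2N}}$ contributes one more for each marked half-edge based at $f$. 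Since $\cP_\Omega^{\xi}$ excludes the edges $\bedge,\eedge,\dots$ from $P$, these contributions involve pairwise distinct variables, and \eqref{eq:contour_collection_requirement} says their total number is even. Hence an even number of the four variables at $f$ is still missing, and $e^{\cS_0(\Phi_f)}$ contains exactly one monomial involving precisely those variables, with coefficient $\pm1$ — the constant $1$, one of the six quadratic terms of $\cS_0(\Phi_f)$, or the top-degree term $\GrN_f\GrE_f\GrS_f\GrW_f$ singled out by \eqref{eq:exp_cS0}. Multiplying over all faces, $c^{\xi}(P)$ is the Berezin integral of a single monomial using each variable once, hence $\pm1$; the same count shows $W_{\Omega,\ul x}^{\xi}(X)=0$ for any $X\subset\cEdual\Omega$ violating \eqref{eq:contour_collection_requirement}, which I would record in passing.

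It remains to see that $c^{\xi}$ is constant on $\cP_\Omega^{\xi}$. For $P,P'\in\cP_\Omega^{\xi}$ the parity count above shows that $R:=P\triangle P'$ has even degree at every face, i.e.\ $R\in\cP_\Omega$; and $\cP_\Omega$ is generated under symmetric difference by the elementary plaquettes $Q_v$ (the four dual edges crossing the four primal edges at a vertex $v\in\Omega$), this being the standard identification of $\cP_\Omega$ with spin configurations having one boundary spin fixed, under which flipping the spin at $v$ replaces $P$ by $P\triangle Q_v$. Since $\xi$ is admissible, the marked edges $\bedge,\eedge,\dots$ are external to $\cEdual\Omega$, so $\cP_\Omega^{\xi}$ is a non-empty coset of $\cP_\Omega$ inside $\PowerSet{\cEdual\Omega}$ and the plaquette moves $P\mapsto P\triangle Q_v$ act on it transitively (each partial symmetric difference stays in the coset). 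So it suffices to prove $c^{\xi}(P)=c^{\xi}(P\triangle Q_v)$ whenever both lie in $\cP_\Omega^{\xi}$.

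This plaquette invariance is the one genuinely non-formal step — the Kasteleyn sign cancellation, where the planar structure of the underlying graph enters, and where the particular ordering of differentials in \eqref{eq:one_face_integration} is used. I would prove it by a local computation: $P$ and $P\triangle Q_v$ agree away from the at most four faces around $v$, at each such face the forced choice of monomial from $e^{\cS_0(\Phi_f)}$ changes in a prescribed way, and tracking the attendant reorderings against the fixed per-face ordering shows the product of signs is unaltered — equivalently, that the formalism encodes a Kasteleyn (Pfaffian) orientation of the cluster graph. Alternatively, admissibility lets one embed $\Omega$ into $\Omega_o\supset\Omega$ with $P\cup P_o\in\cP_{\Omega_o}$ for all $P\in\cP_\Omega^{\xi}$, absorb $\varphi_{\halfedge_1},\dots,\varphi_{\halfedge_{2N}}$ into the long-edge variables $E_{\edual}$ of the external dual edges so that $\varphi_{\halfedge_1}\cdots\varphi_{\halfedge_{2N}}\prod_{\edual\in P}E_\edual$ becomes $\prod_{\edual\in\hat P}E_\edual$ for a genuine even configuration $\hat P\in\cP_{\Omega_o}$ (up to a $P$-independent product of external-face variables), and then invoke \cref{lem:Grass_int_inndicator} on $\Omega_o$: $\int\cD[\Phi_{\Omega_o}]\big(\prod_{\edual\in\hat P}E_\edual\big)e^{\cS_0(\Phi_{\Omega_o})}=\one_{\cP_{\Omega_o}}(\hat P)=1$, so all $P$-dependence collapses to an overall constant sign. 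Either way $c^{\xi}\equiv\epsilon_\xi\in\{\pm1\}$ on $\cP_\Omega^{\xi}$, and together with the magnitude computation this yields $W_{\Omega,\ul x}^{\xi}(P)=\epsilon_\xi\prod_{\edual\in P}x_\edual$ for all $P\in\cP_\Omega^{\xi}$. The main obstacle is exactly this plaquette-invariance/Kasteleyn-sign step; the rest is routine Grassmann bookkeeping.
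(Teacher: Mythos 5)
Your proposal is correct, but only because of its second, ``alternative'' branch --- which is in fact precisely the paper's proof: use admissibility to choose $\Ombc\supset\Omega$ and $\Pbc$ with $P\cup\Pbc\in\cP_{\Ombc}$, rewrite $\varphi_{\halfedge_1}\cdots\varphi_{\halfedge_{2N}}$ as a $P$-independent signed Grassmann integral over the external faces of $\prod_{\edual\in\Pbc}E_{\edual}$ (the paper's \cref{eq:marked_phis_as_integral}), set the external edge weights to $1$, and apply \cref{lem:Grass_int_inndicator} on $\Ombc$ to the even configuration $P\cup\Pbc$; the sign is then manifestly the one coming from \cref{eq:marked_phis_as_integral} and hence independent of $P$. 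Your ``primary'' branch --- computing $|c^{\xi}(P)|=1$ by face-by-face parity bookkeeping and then proving constancy of the sign by invariance under plaquette moves $P\mapsto P\triangle Q_{\vprimal}$ --- is a genuinely different and in principle viable route, but as written it has a gap exactly where you flag it: the plaquette sign invariance is the Kasteleyn cancellation, and you only assert that the local reorderings work out rather than verifying them against the ordering convention in \cref{eq:one_face_integration}. The paper deliberately sidesteps this computation (see the sentence preceding \cref{lem:Grass_int_inndicator}: the authors derive the needed sign information from the validity of \cref{eq:Z_plus_Grassman} for all edge weights rather than re-checking Kasteleyn's orientation), and your second branch inherits that economy. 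So: if you keep only the first branch, the proof is incomplete at its central step; if you keep the second, you have reproduced the paper's argument, with the additional (correct but not needed there) observation that $W^{\xi}_{\Omega,\ul x}(X)=0$ for $X\notin\cP_\Omega^{\xi}$ and that $\cP_\Omega^{\xi}$ is a coset of the plaquette group.
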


\begin{proof}
Since $\xi$ is admissible, there exist $\Ombc \supset \Omega$ and $\Pbc \subset \cEdual{\Ombc\setminus \Omega}$ with $|\Ombc| < \infty$  such that $P \cup \Pbc \in \cP^\emptyset_{\Ombc}$ for all $P \in \cP_\Omega^\xi$.  
Then evidently,
\begin{align}
\varphi_{\halfedge_1} \, \varphi_{\halfedge_2} \, \cdots \, \varphi_{\halfedge_{2N}} 
\; & =  
\pm
\int \Big( \prod_{f \in \cFprimal{\Ombc} \setminus \cFprimal\Omega} \cD [\Phi_f] \ e^{\cS_{0} (\Phi_f)} \Big) \, 
\Big( \prod_{\edual \in \Pbc} E_{\edual}  \Big) \,  .
\label{eq:marked_phis_as_integral}
\end{align}
Set $x_{\edual} = 1$ for dual edges $\edual \in \cEdual{\Ombc} \setminus \cEdual\Omega$.
Then, we obtain 
\begin{align*}
W_{\Omega,\ul x}^{\xi}(P) 
= \; & \int \cD [\Phi_\Omega] \
\varphi_{\halfedge_1} \, \varphi_{\halfedge_2} \, \cdots \, \varphi_{\halfedge_{2N}} \
\Big( \prod_{\edual \in P} x_{\edual} E_{\edual}  \Big) \ e^{\cS_{0} (\Phi_\Omega)}
&&\textnormal{[by~\eqref{eq:W_nonint}]}
\\
= \; & \pm
\int \cD [\Phi_{\Ombc}] \
\Big( \prod_{\edual \in \Pbc \cup P} x_{\edual} E_{\edual}  \Big) \ e^{\cS_{0} (\Phi_\Omega)}
\\
= \; & \pm \prod_{\edual \in  P} x_{\edual} ,
\qquad P \in \cP_\Omega^{\xi} , 
&&\textnormal{[by~\cref{lem:Grass_int_inndicator}]}
\end{align*}
where the sign on both lines equals that in~\eqref{eq:marked_phis_as_integral}, 
so it is indeed independent of ${P \in \cP_\Omega^{\xi}}$.
\end{proof}

\begin{remark}
Without assuming that $\xi$ is admissible, \cref{eq:W_equals_prod_of_edge_weights} still holds, 
but we lose control over the dependence of the sign on $P$.
\label{rem:W_sign}
\end{remark}

Combining this with the $X=\emptyset$ case of \cref{lem:cP_sum_Grassman_general} gives an expression for the partition function, up to a sign:

\begin{cor} 
\label{prop:part_f_equals_Ber_int}
For any admissible boundary condition $\xi = (\halfedge_1, \halfedge_2, \ldots, \halfedge_{2N})$,  
\begin{align}
\label{eq:part_f_equals_Ber_int}
Z_{\Omega,\ul x}^{\xi} 
= 
\Bigg|\int \cD [ \Phi_\Omega ] \, \varphi_{\halfedge_1} \, \varphi_{\halfedge_2} \, \cdots \, \varphi_{\halfedge_{2N}} \ e^{\cS_{\ul x} (\Phi_\Omega)} \Bigg| .
\end{align}
\end{cor}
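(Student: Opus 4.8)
The plan is to obtain the statement by combining the $X=\emptyset$ case of \cref{lem:cP_sum_Grassman_general} with \cref{prop:Ber_W_admissible}, and then passing to absolute values to absorb an unsigned combinatorial factor. No genuine obstacle is expected; the only point of care is the sign, which is precisely why the corollary is stated with $|\cdot|$ rather than as a signed identity.

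Concretely, I would first specialize the second line of~\eqref{eq:cP_sum_derivs} to $X=\emptyset$. Then both the product $\prod_{\edual\in X} x_{\edual} E_{\edual}$ and the differential operator $\prod_{\edual\in X} x_{\edual}\tfrac{\partial}{\partial x_{\edual}}$ reduce to the identity, and the constraint $P\supset X$ is vacuous, so
\begin{align*}
\sum_{P \in \cP_\Omega^{\xi}} W_{\Omega,\ul x}^{\xi}(P)
=
\int \cD [ \Phi_\Omega ] \, \varphi_{\halfedge_1} \, \varphi_{\halfedge_2} \, \cdots \, \varphi_{\halfedge_{2N}} \ e^{\cS_{\ul x} (\Phi_\Omega)} .
\end{align*}
Next, since $\xi$ is admissible, \cref{prop:Ber_W_admissible} gives $W_{\Omega,\ul x}^{\xi}(P) = \varepsilon \prod_{\edual \in P} x_{\edual}$ for every $P \in \cP_\Omega^{\xi}$, with one and the same sign $\varepsilon \in \{\pm 1\}$ (which may depend on $\Omega$ and $\xi$ but not on $P$, nor on $\ul x$). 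Summing over $P$ and recognizing the low-temperature formula~\eqref{eq:Z_Dob_lt} for the Dobrushin partition function yields
\begin{align*}
\int \cD [ \Phi_\Omega ] \, \varphi_{\halfedge_1} \, \varphi_{\halfedge_2} \, \cdots \, \varphi_{\halfedge_{2N}} \ e^{\cS_{\ul x} (\Phi_\Omega)}
= \varepsilon \sum_{P \in \cP_\Omega^{\xi}} \prod_{\edual \in P} x_{\edual}
= \varepsilon \, Z_{\Omega,\ul x}^{\xi} .
\end{align*}
Finally, since the edge weights lie in $[0,\infty)$ we have $Z_{\Omega,\ul x}^{\xi} \ge 0$; taking absolute values of the previous display and using $|\varepsilon|=1$ gives exactly~\eqref{eq:part_f_equals_Ber_int}.

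The one delicate point, and the only reason the statement is not a signed identity, is the sign $\varepsilon$: it originates from the reordering of Grassmann monomials in~\eqref{eq:marked_phis_as_integral}, hence is a purely combinatorial quantity fixed by the structure of the cluster graph $\cGcluster\Omega$ and the positions of $\halfedge_1,\dots,\halfedge_{2N}$, independent of $\ul x$. Determining $\varepsilon$ explicitly — which is not needed here — would amount to tracking that permutation together with the chosen orientation conventions. As a sanity check one may also note the case $\ul x = 0$: there both sides vanish unless the empty configuration lies in $\cP_\Omega^{\xi}$, consistently with \cref{lem:Grass_int_inndicator}.
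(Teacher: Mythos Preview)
Your proof is correct and follows exactly the approach the paper itself indicates: specialize \cref{lem:cP_sum_Grassman_general} to $X=\emptyset$, apply \cref{prop:Ber_W_admissible} to identify the summand with $\varepsilon\prod_{\edual\in P}x_{\edual}$ for a single sign $\varepsilon$, recognize the resulting sum as $Z_{\Omega,\ul x}^{\xi}$, and take absolute values. The additional remarks on the origin of $\varepsilon$ and the $\ul x=0$ sanity check are fine but not needed.
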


In this case, unlike in the case of uniform boundary conditions, 
the Grassmann integral does not have a fixed sign; 
in particular it is odd under permutations of the order of the Grassmann variables inserted in the integral, 
and thus also depends nontrivially on the order of the elements of $\xi$, which is why we have defined it as an ordered tuple.

\subsection{Grassmann integral representation of the martingale observable}
\label{sec:Grassmann_martingale}

From \cref{prop:Ber_W_admissible}, we see that $W$ defined in \cref{eq:W_nonint} in \cref{sec:Grassman_Dobrushin} 
has the properties used in \cref{sec:interface} to construct a martingale observable~\eqref{eq:F_def} with respect to the exploration process.  
To be precise, for admissible boundary condition $\xi = (\bhalfedge,\ehalfedge)$, 
\begin{align}
\label{eq:F_def_again}
F_W(\gamma_{[0,n]};\halfedge)
:=
\sum_{P \in \cP_\Omega^{\bhalfedge,\halfedge}} 
\one_{C_{\gamma,n}}(P) 
\, W_{\Omega,\ul x}^{\bhalfedge,\halfedge}(P) ,
\end{align}
where $C_{\gamma,n} := \{ P \subset \cEdual\Omega \;|\; \gamma_{[0,n]} = \gamma_{[0,n]}(P) \}$, for each $h\in \cVcluster\Omega \setminus \{\bhalfedge\}$ the ratio~\eqref{eq:M_def}, 
\begin{align*}
M_W( \halfedge ) [\gamma_{[0,n]}]
:=
\frac{F_W(\gamma_{[0,n]};\halfedge)}{F_W(\gamma_{[0,n]};\ehalfedge)} 
, \qquad n \leq \etime ,
\end{align*}
is a (local) martingale with respect to the exploration process associated with 
the Gibbs measure for the Ising model with Dobrushin boundary conditions specified by $\xi$.

Using \cref{lem:Cgamma}, we obtain the following identity, which makes it possible to write $F$ more compactly as a Grassmann (Berezin) integral (expressed in \cref{prop:M_free_expectation}).

\begin{lem}
\label{lem:What_splitting}
For each half-edge $\halfedge \in \cVcluster\Omega \setminus \{\bhalfedge\}$, each Peierls interface $\gamma$ \textnormal{(}Def.~\ref{def:interface}\textnormal{)} with $\vec \gamma_0 := \bhalfedge$, and each integer $n$, 
there is a one-to-one correspondence between contours $P \in \cP_\Omega^{\vec \gamma_0,\halfedge} \cap C_{\gamma,n}$ and $\hat P(n) \in \cP^{\vec \gamma_n,\halfedge}$ such that 
\begin{align}
\label{eq:What_splitting}
W_{\Omega,\ul x}^{\vec \gamma_0,\halfedge}(P) 
=
\pm
\Big(
\prod_{k=1}^{n} x_{\gamma_k} 
\Big) 
\; W_{\Omega\setminus \tilde \gamma_{[0,n]},\ul x}^{\vec \gamma_n,\halfedge}(\hat P(n)) ,
\end{align}
where $\tilde \gamma_{[0,n]}$ is the set in \cref{lem:Cgamma}, 
and the sign is determined by the interface up to time $n$\textnormal{;} 
in particular, it is independent of $\halfedge$ and $P$. 
\end{lem}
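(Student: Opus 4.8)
The plan is to combine \cref{lem:Cgamma} with a Grassmann-integral manipulation analogous to the one in the proof of \cref{prop:Ber_W_admissible}, run ``inwards'' rather than ``outwards''. Write $\Omega' := \Omega \setminus \tilde\gamma_{[0,n]}$. By \cref{lem:Cgamma} the map $P \mapsto \hat P(n) := P \setminus \{\gamma_1,\dots,\gamma_n\}$ is the asserted one-to-one correspondence between $\cP_\Omega^{\vec\gamma_0,\halfedge} \cap C_{\gamma,n}$ and $\cP^{\vec\gamma_n,\halfedge}_{\Omega'}$; in particular $\hat P(n) \subset \cEdual{\Omega'}$ is disjoint from $\{\gamma_1,\dots,\gamma_n\}$, and no edge of $\hat P(n)$ is incident to a face of $A := \cFprimal\Omega \setminus \cFprimal{\Omega'}$. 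I would then substitute $P = \hat P(n) \cup \{\gamma_1,\dots,\gamma_n\}$ into the definition~\eqref{eq:W_nonint} of $W^{\vec\gamma_0,\halfedge}_{\Omega,\ul x}$. Since every long-edge monomial $E_\edual$ is an \emph{even} element of $\Phi_\Omega$, the product factorizes with no sign,
\begin{align}
\prod_{\edual \in P} x_\edual E_\edual \;=\; \Big( \prod_{k=1}^n x_{\gamma_k} E_{\gamma_k} \Big) \Big( \prod_{\edual \in \hat P(n)} x_\edual E_\edual \Big) ,
\label{eq:prop_factorize}
\end{align}
and $\prod_{k=1}^n x_{\gamma_k}$ pulls out of the Berezin integral.

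The heart of the matter is to carry out the Berezin integration over the Grassmann variables sitting on the faces of $A$. Using the (sign-free) splittings $\cD[\Phi_\Omega] = \big( \prod_{f \in A} \cD[\Phi_f] \big)\, \cD[\Phi_{\Omega'}]$ and $e^{\cS_0(\Phi_\Omega)} = e^{\sum_{f \in A} \cS_0(\Phi_f)}\, e^{\cS_0(\Phi_{\Omega'})}$ --- each one-face block $\cD[\Phi_f]$ and each $\cS_0(\Phi_f)$ being even --- together with the fact that $\varphi_\halfedge$ and $\prod_{\edual \in \hat P(n)} x_\edual E_\edual$ lie in the subalgebra $\Phi_{\Omega'}$ and that $\gamma_0,\gamma_1,\dots,\gamma_n$ are pairwise distinct, one reduces the claim --- in the generic case where $\gamma_{[0,n]}$ does not revisit the face $f(\vec\gamma_n)$ --- to the ``transport identity''
\begin{align}
\int \Big( \prod_{f \in A} \cD[\Phi_f] \Big)\, \varphi_{\vec\gamma_0} \Big( \prod_{k=1}^n E_{\gamma_k} \Big)\, e^{\sum_{f \in A} \cS_0(\Phi_f)} \;=\; \pm\, \varphi_{\vec\gamma_n} ,
\label{eq:prop_transport}
\end{align}
the inward analogue of~\eqref{eq:marked_phis_as_integral}: the factors $E_{\gamma_1},\dots,E_{\gamma_n}$ carry the marked fermion along the chain from $f(\vec\gamma_0)$ to $f(\vec\gamma_n)$, each face of $A$ that $\gamma_{[0,n]}$ traverses contributing --- via the unique applicable quadratic monomial of its $\cS_0(\Phi_f)$ from~\eqref{eq:cS0_def}, or, for a face traversed twice under the NE/SW rule, via the constant term of $e^{\cS_0(\Phi_f)}$ --- exactly the variables needed to complete a nonvanishing contraction, the sole leftover being $\varphi_{\vec\gamma_n}$; the resulting scalar $\pm1$, like the reordering signs, is a function of $\gamma_{[0,n]}$ alone.

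Substituting~\eqref{eq:prop_transport} back and recognizing the remaining integral over $\cD[\Phi_{\Omega'}]$ as $\int \cD[\Phi_{\Omega'}]\, \varphi_{\vec\gamma_n} \varphi_\halfedge \big( \prod_{\edual \in \hat P(n)} x_\edual E_\edual \big)\, e^{\cS_0(\Phi_{\Omega'})} = W^{\vec\gamma_n,\halfedge}_{\Omega \setminus \tilde\gamma_{[0,n]},\ul x}(\hat P(n))$ by~\eqref{eq:W_nonint} gives~\eqref{eq:What_splitting}, the overall sign being the product of the scalar in~\eqref{eq:prop_transport} with the permutation signs incurred in regrouping $\Phi_A$- and $\Phi_{\Omega'}$-variables. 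Those permutation signs involve only parities, and~\eqref{eq:prop_transport} is a fixed computation once $\gamma_{[0,n]}$ is given, so the sign depends on $\gamma_{[0,n]}$ only --- in particular not on $\halfedge$ or on $P$, as asserted.

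The main obstacle is, I expect, not the Grassmann calculus but the combinatorial bookkeeping underlying \cref{lem:Cgamma}: that $A = \cFprimal\Omega \setminus \cFprimal{\Omega \setminus \tilde\gamma_{[0,n]}}$ really is the set of faces ``used up'' by $\gamma_{[0,n]}$, that $\hat P(n)$ touches none of them, and --- the one genuinely fiddly point --- that at a face of $\Omega'$ which $\gamma_{[0,n]}$ has \emph{already} visited, the variables that the interface edges leave over at that face coincide, up to the signs in~\eqref{eq:Ee_def}, with those a monomial of $\cS_0$ would have supplied in $W^{\vec\gamma_n,\halfedge}_{\Omega',\ul x}(\hat P(n))$, so that~\eqref{eq:prop_transport} and the final identification persist (up to a $\gamma_{[0,n]}$-dependent sign). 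The interaction with the NE/SW resolution at degree-$4$ faces and the endpoint conventions ($n \in \{0,\etime\}$, where the terminal external dual edges carry weight $x=1$ as in the proof of \cref{prop:Ber_W_admissible}) likewise call for care.
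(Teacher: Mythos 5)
Your overall strategy --- invoke \cref{lem:Cgamma} for the bijection, factor $\prod_{\edual\in P}x_{\edual} E_{\edual}$ into an interface part and an $\hat P(n)$ part, split the last edge so that $\varphi_{\vec \gamma_n}$ survives, integrate out the interface variables to get a scalar $\pm 1$, and recognise what remains as $W^{\vec \gamma_n,\halfedge}_{\Omega\setminus\tilde\gamma_{[0,n]},\ul x}(\hat P(n))$ --- is exactly the paper's. However, the point you defer as ``genuinely fiddly'' is not a peripheral detail but the entire content of the proof, and your chosen decomposition makes it harder than it needs to be. You partition the faces into $A=\cFprimal{\Omega}\setminus\cFprimal{\Omega\setminus\tilde\gamma_{[0,n]}}$ and $\cFprimal{\Omega\setminus\tilde\gamma_{[0,n]}}$ and integrate only over $A$ in your transport identity; but the monomials $E_{\gamma_k}$ carry Grassmann variables attached to faces that are incident both to interface edges and to edges of $\cEdual{\Omega\setminus\tilde\gamma_{[0,n]}}$, and every such face belongs to $\cFprimal{\Omega\setminus\tilde\gamma_{[0,n]}}$, hence not to $A$ (the tip face $f(\vec\gamma_n)$ is one instance, but there are generically many others along the path). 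Consequently the left-hand side of your transport identity is not $\pm\varphi_{\vec\gamma_n}$ but $\pm\varphi_{\vec\gamma_n}$ times a product of uncontracted interface variables at these shared faces, and one must then show that contracting those leftovers against the $\cS_0(\Phi_f)$ of the shared faces reproduces --- up to a sign independent of $\hat P(n)$ and of $\halfedge$ --- what the relevant term of $e^{\cS_0(\Phi_f)}$ supplies in $W_{\Omega\setminus\tilde\gamma_{[0,n]},\ul x}^{\vec\gamma_n,\halfedge}(\hat P(n))$. That sign-independence is precisely what the lemma asserts, and your proposal states it rather than proves it.

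The paper sidesteps this by \emph{not} partitioning the faces: it works with the two overlapping families $\cFprimal{\tilde\gamma_{[0,n]}}$ and $\cFprimal{\Omega\setminus\tilde\gamma_{[0,n]}}$ and duplicates the Berezin integral at each face of their intersection, assigning a trivial factor $1=\int\cD[\Phi_f]\,e^{\cS_0(\Phi_f)}$ to whichever of the two groups does not use the variables at $f$, and, at a face incident to both interface edges and $\hat P(n)$ edges, invoking the factorisation \eqref{eq:face_products_cS0} of $\int\cD[\Phi_f]\,\varphi^N_f\varphi^E_f\varphi^S_f\varphi^W_f\,e^{\cS_0(\Phi_f)}$ into two independent two-point integrals --- this last step being exactly where the NE/SW disambiguation is needed, since it guarantees that the four variables split into the correct pairs. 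After the duplication, the first integral contains every variable touched by $\gamma_{[0,n]}$ and evaluates to a $\pm1$ depending only on $\gamma_{[0,n]}$, while the second is verbatim the defining integral \eqref{eq:W_nonint} of $W^{\vec\gamma_n,\halfedge}_{\Omega\setminus\tilde\gamma_{[0,n]},\ul x}(\hat P(n))$. To complete your argument you should either adopt this duplication device or carry out the face-by-face comparison of signs that your strict partition forces; as written, the crucial step is a gap.
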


\begin{proof}
\cref{lem:Cgamma} gives a unique contour configuration $\hat P(n)$ such that $P = \hat P(n) \cup \{ \gamma_1,\ldots,\gamma_n \}$.  
Splitting $E_{\gamma_n}$ from \eqref{eq:Ee_def} 
as a product of $\varphi_{(f(\vec \gamma_{n-1}),\gamma_n)}$ and $\varphi_{\vec \gamma_n}$, 
we obtain
\begin{align}
\label{eq:What}
\; & 
W_{\Omega,\ul x}^{\vec \gamma_0,\halfedge}(P) 
\\
\nonumber 
= \; & 
\int \cD [ \Phi_\Omega ] \, 
\varphi_{\vec \gamma_0} \,
\varphi_{\halfedge}
\Big(
\prod_{\edual \in \hat P(n) \cup \{ \gamma_1,\ldots,\gamma_n \}} x_{\edual} E_{\edual} 
\Big) \,
e^{\cS_{0} (\Phi_\Omega)} 
\\
\nonumber
= \; & 
\pm
\int \cD [ \Phi_\Omega ] \, 
\underbrace{\bigg(
\varphi_{\vec \gamma_0} \,
\Big(
\prod_{k=1}^{n-1} x_{\gamma_k}  E_{\gamma_k} 
\Big) \,
x_{\gamma_n}
\varphi_{(f(\vec \gamma_{n-1}),\gamma_n)}
\bigg)}_{(A)}
\underbrace{\bigg(
\varphi_{\vec \gamma_n}
\varphi_{\halfedge}
\Big(
\prod_{\edual \in \hat P(n)} x_{\edual} E_{\edual} 
\Big)
\bigg)}_{(B)} \; 
e^{\cS_{0} (\Phi_{\Omega })}  
,
\end{align}
with the same sign $\pm$ as in 
$E_{\gamma_n } = \pm \varphi_{(f(\vec \gamma_{n-1}),\gamma_n)} \varphi_{\vec \gamma_n}$, which depends only on $\gamma_{[0,n]}$.

We will separate the two parts (A)~\&~(B) in the integral into two disjoint Grassmann integrals. Such a separation is not completely trivial, since 
the sets $\cFprimal{\Omega \setminus \tilde \gamma_{[0,n]}}$ and $\cFprimal{\tilde \gamma_{[0,n]}}$ of faces respectively incident to the dual edges in 
$\cEdual{\Omega \setminus \tilde \gamma_{[0,n]}}$ and $\cEdual{\tilde \gamma_{[0,n]}}$
have a non-trivial intersection, consisting of faces $f$ which we divide into three types (see \cref{fig:ContourConstraint}): 
\begin{enumerate}
\item 
\label{item:face_case_1}
$f$ that are not incident to dual edges belonging to the Peierls interface $\gamma_{[0,n]}$;
\item 
\label{item:face_case_2}
$f$ that are incident to $\gamma_{[0,n]}$ but not incident to $\hat P(n)$; 
\item 
\label{item:face_case_3}
$f$ incident both to dual edges belonging to the Peierls interface $\gamma_{[0,n]}$ and to $\hat P(n)$.
\end{enumerate}
We can duplicate the integrals over these faces as follows.  

In Cases~\ref{item:face_case_1} and~\ref{item:face_case_2}, we may insert a factor~\eqref{eq:cS0_def} equaling one to the Grassmann integral associated to the corresponding face:
\begin{align*}
1= \int \cD [\Phi_f] \ e^{\cS_0 (\Phi_f)}
\end{align*}
which is grouped with (A) in Case~\ref{item:face_case_1} and with (B) in Case~\ref{item:face_case_2}.

In Case~\ref{item:face_case_3}, the parity constraints and the disambiguation rule for the Peierls interface (resolving the self-touchings of the interface in the North-East and South-West directions)
guarantee that the relevant part of the integral contains all four associated Grassmann variables, and so has the form
(using \cref{eq:face_products_cS0}, up to a reordering which does not change the sign of the final result)
\begin{align*}
\int \cD[\Phi_f] \ 
\varphi^N_f \varphi^E_f \varphi^S_f \varphi^W_f 
e^{\cS_0(\Phi_f)}
= 1
= 
\int \cD[\Phi_f] \ 
\varphi^N_f \varphi^E_f  
e^{\cS_0(\Phi_f)}
\int \cD[\Phi_f] \ 
 \varphi^S_f \varphi^W_f 
e^{\cS_0(\Phi_f)}  .
\end{align*}
It now follows that 
\begin{align}
\begin{split}
\textnormal{\eqref{eq:What}}
= \; & \pm 
\Big(
\prod_{k=1}^{n} x_{\gamma_k} 
\Big) \,
\underbrace{\int \cD [ \Phi_{\tilde \gamma_{[0,n]}} ] \,  \varphi_{\vec \gamma_0} \, \varphi_{(f(\vec \gamma_{n-1}),\gamma_n)}
\Big(
\prod_{k=1}^{n-1}  E_{\gamma_k} 
\Big) \,
e^{\cS_{0} (\Phi_{\tilde \gamma_{[0,n]}})}}_{= \; \pm 1} 
\\
\; & \times 
\underbrace{\int \cD [ \Phi_{\Omega \setminus \tilde \gamma_{[0,n]}} ] \, \varphi_{\vec \gamma_n} \, \varphi_{\halfedge}
\Big(
\prod_{\edual \in \hat P(n)} x_{\edual} E_{\edual} 
\Big) \,
e^{\cS_{0} (\Phi_{\Omega \setminus \tilde \gamma_{[0,n]}})}}_{= \; W_{\Omega\setminus \tilde \gamma_{[0,n]},\ul x}^{\vec \gamma_n,f}(\hat P(n))}
\label{eq:What_final}
\end{split}
\end{align}
where the sign ``$\pm$'' is the same as the one in \cref{eq:What}.
Note that the first integral in~\eqref{eq:What_final} only depends on $\gamma_{[0,n]}$.
Hence, we obtain the asserted identity~\eqref{eq:What_splitting}, 
with a different sign compared to \cref{eq:What} which, however, still depends only on $\gamma_{[0,n]}$.  
\end{proof}

This makes it possible to replace the sum in \cref{eq:F_def_again} with a sum over  $\cP^{\vec\gamma_n,\halfedge}_{\Omega \setminus \tilde \gamma_{[0,n]}}$, and \cref{lem:cP_sum_Grassman_general} lets us rewrite this sum in terms of a Grassmann integral as
\begin{align}
\label{eq:F_Grassman}
F_W(\gamma_{[0,n]};\halfedge)
=  \pm \Big( \prod_{k=1}^{n} x_{\gamma_k} \Big) \;
\int \cD [ \Phi_{\Omega \setminus \tilde \gamma_{[0,n]}} ] \, \varphi_{\vec \gamma_n} \, \varphi_{\halfedge}
\Big(
\prod_{\edual \in \hat P(n)} x_{\edual} E_{\edual} 
\Big) \,
e^{\cS_{\ul x} (\Phi_{\Omega \setminus \tilde \gamma_{[0,n]}})}
.
\end{align} 
In conclusion, we have shown that the martingale observable can be expressed in terms of the Berezin measure~\eqref{eq:correlation_function}:

\begin{prop}
\label{prop:M_free_expectation}
For each admissible boundary condition $\xi = (\bhalfedge,\ehalfedge)$ and half-edge $\halfedge \in \cVcluster\Omega \setminus \{\bhalfedge\}$, 
for the Peierls interface $\gamma$ \textnormal{(}Def.~\ref{def:interface}, with $\tilde \gamma_{[0,n]}$ as in \cref{lem:Cgamma}\textnormal{)} we have
\begin{align}
\label{eq:W_Grassman}
M_W( \halfedge ) [\gamma_{[0,n]}]
=  
\frac{
\big\langle \varphi_{\vec \gamma_n} \, \varphi_{\halfedge} \big\rangle_{\Omega \setminus \tilde \gamma_{[0,n]},\ul x}^{+}
}{
\big\langle \varphi_{\vec \gamma_n} \, \varphi_{\ehalfedge} \big\rangle_{\Omega \setminus \tilde \gamma_{[0,n]},\ul x}^{+}
} .
\end{align} 
\end{prop}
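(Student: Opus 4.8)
The plan is to derive the proposition from \eqref{eq:F_Grassman} by substituting it into the definition \eqref{eq:M_def} of $M_W$ and cancelling a common prefactor; accordingly I would first recall the chain producing \eqref{eq:F_Grassman} and then carry out that cancellation with care. Everything of substance is already contained in \cref{lem:Cgamma}, \cref{lem:What_splitting} and \cref{lem:cP_sum_Grassman_general}, so the argument is mostly bookkeeping.

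First I would recall how $F_W(\gamma_{[0,n]};\halfedge)$ is turned into a single Grassmann integral. Starting from the sum \eqref{eq:F_def_again} with $W$ as in \eqref{eq:W_nonint}, the bijection $P=\hat P\cup\{\gamma_1,\dots,\gamma_n\}$ of \cref{lem:Cgamma} replaces the sum over $P\in\cP_\Omega^{\bhalfedge,\halfedge}$ restricted to the event $C_{\gamma,n}$ by a sum over $\hat P\in\cP_{\Omega\setminus\tilde\gamma_{[0,n]}}^{\vec\gamma_n,\halfedge}$; \cref{lem:What_splitting} then rewrites each weight as $W_{\Omega,\ul x}^{\bhalfedge,\halfedge}(P)=\pm\big(\prod_{k=1}^{n}x_{\gamma_k}\big)\,W_{\Omega\setminus\tilde\gamma_{[0,n]},\ul x}^{\vec\gamma_n,\halfedge}(\hat P)$ with a sign depending only on $\gamma_{[0,n]}$; and finally the $X=\emptyset$ case of \cref{lem:cP_sum_Grassman_general}, applied on the domain $\Omega\setminus\tilde\gamma_{[0,n]}$ with marked pair $(\vec\gamma_n,\halfedge)$, rewrites the surviving sum of the $W_{\Omega\setminus\tilde\gamma_{[0,n]},\ul x}^{\vec\gamma_n,\halfedge}(\hat P)$ as $\int\cD[\Phi_{\Omega\setminus\tilde\gamma_{[0,n]}}]\,\varphi_{\vec\gamma_n}\varphi_\halfedge\,e^{\cS_{\ul x}(\Phi_{\Omega\setminus\tilde\gamma_{[0,n]}})}$, which is the content of \eqref{eq:F_Grassman}. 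I would stress that this holds for every $\halfedge\in\cVcluster{\Omega}\setminus\{\bhalfedge\}$ --- none of these three lemmas requires $(\bhalfedge,\halfedge)$ to be admissible --- and that admissibility of $\xi=(\bhalfedge,\ehalfedge)$ is used only through \cref{prop:Ber_W_admissible}, which is what makes $M_W$ the martingale observable of the Dobrushin-boundary Ising model in the first place.

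Then I would form the ratio \eqref{eq:M_def} using \eqref{eq:F_Grassman} for both $\halfedge$ and $\ehalfedge$. The factor $\prod_{k=1}^{n}x_{\gamma_k}$ is literally the same in numerator and denominator, and the accompanying $\pm$ sign, being a function of $\gamma_{[0,n]}$ alone by \cref{lem:What_splitting}, is also the same; hence both cancel. What remains is the quotient of the Grassmann integrals $\int\cD[\Phi_{\Omega\setminus\tilde\gamma_{[0,n]}}]\,\varphi_{\vec\gamma_n}\varphi_\halfedge\,e^{\cS_{\ul x}(\Phi_{\Omega\setminus\tilde\gamma_{[0,n]}})}$ and $\int\cD[\Phi_{\Omega\setminus\tilde\gamma_{[0,n]}}]\,\varphi_{\vec\gamma_n}\varphi_{\ehalfedge}\,e^{\cS_{\ul x}(\Phi_{\Omega\setminus\tilde\gamma_{[0,n]}})}$; dividing numerator and denominator by $Z_{\Omega\setminus\tilde\gamma_{[0,n]},\ul x}^{+}$ and invoking the definition \eqref{eq:correlation_function} of the Berezin measure identifies this quotient with $\langle\varphi_{\vec\gamma_n}\varphi_\halfedge\rangle_{\Omega\setminus\tilde\gamma_{[0,n]},\ul x}^{+}\big/\langle\varphi_{\vec\gamma_n}\varphi_{\ehalfedge}\rangle_{\Omega\setminus\tilde\gamma_{[0,n]},\ul x}^{+}$, i.e.\ \eqref{eq:W_Grassman}. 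I would also note in passing that the ratio is well posed because $n\le\etime$ and $\gamma_{[0,n]}$ has positive probability under the interface process, so the denominator $F_W(\gamma_{[0,n]};\ehalfedge)$ is nonzero.

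I do not expect a genuine obstacle: all the nontrivial steps have been packaged into the three lemmas above. The only delicate point is the sign accounting, and it is precisely the $\halfedge$-independence of the sign asserted in \cref{lem:What_splitting} that makes both the signs and the $\prod_{k}x_{\gamma_k}$ factors drop out of the quotient \eqref{eq:M_def}; if one were sloppy about that independence the claimed clean formula \eqref{eq:W_Grassman} would acquire a spurious $\halfedge$-dependent sign.
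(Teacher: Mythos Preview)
Your proposal is correct and follows essentially the same route as the paper: the paper derives \eqref{eq:F_Grassman} from \cref{lem:Cgamma}, \cref{lem:What_splitting}, and \cref{lem:cP_sum_Grassman_general}, and then presents \cref{prop:M_free_expectation} as the immediate conclusion obtained by forming the ratio \eqref{eq:M_def} and cancelling the common $\gamma_{[0,n]}$-dependent prefactor and sign. Your write-up is somewhat more explicit about the cancellation and the role of the $\halfedge$-independence of the sign in \cref{lem:What_splitting}, but the argument is the same.
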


To proceed with the proof of convergence of the interface, 
at this point one would need to discuss the convergence of this ratio in the scaling limit, which follows by relating it to an object with suitable discrete holomorphicity properties.  
It is presumably the case that $M_W$ is sufficiently similar to that of~\cite{Smirnov:Towards_conformal_invariance_of_2D_lattice_models,Chelkak-Izyurov:Holomorphic_spinor_observables_in_critical_Ising_model,Izyurov:Smirnovs_observable_for_free_boundary_conditions_interfaces_and_crossing_probabilities}
to repeat the argument used there, but this involves carefully studying the sign of $W$ (cf.~\cref{rem:W_sign}).

\begin{remark} \label{rem:sholo}
As a simpler alternative for the convergence, let us note that the expressions appearing in \cref{eq:W_Grassman}
are almost the same Berezin expectation values 
as those studied in~\cite{CCK:Revisiting_the_combinatorics_of_the_2D_Ising_model}, 
where they are also expressed in terms of s-holomorphic objects with appropriate boundary conditions.
Slightly more precisely, introducing an antisymmetric matrix $S^{\Omega,\ul x}$ indexed by half-edges 
such that $\cS_{\ul x}(\Phi_\Omega)= -\tfrac12 \big( \Phi_\Omega, S^{\Omega, \ul x} \Phi_\Omega \big)$, 
\begin{align}
	\label{eq:pair_as_inverse}
	\sum_{h_2}
	S^{\Omega, \ul x}_{h_1,h_2}
	\big\langle \varphi_{\vec \gamma_n} \, \varphi_{h_3} \big\rangle_{\Omega \setminus \tilde \gamma_{[0,n]},\ul x}^{+}
	=
	\delta_{h_1 h_3}
	,
\end{align}
and defining the observable 
\begin{align}
	\bfM (e)
	:=
	\begin{cases}
		-\ii M_W( (f,e))
		+ M_W ( (f+1,e))
		, &
		e = \{f,f+1\}
		,
		\\
		-e^{\pi \ii / 4} M_W( (f,e))
		+ e^{- \pi \ii / 4} M_W ( (f+\ii,e))
		, &
		e = \{f,f+\ii\}
		,
	\end{cases}
	\label{eq:complex_M}
\end{align}
one can check, after some tedious algebra using \cref{eq:pair_as_inverse}, that for the \emph{critical edge-weight} $x_e \equiv x_c := \sqrt 2 -1$
this defines a function which is s-holomorphic (in the precise sense of~\cite{Smirnov:Conformal_invariance_in_random_cluster_models1})
for all $e \in \cE_\Omega$ which are not incident to $f(\vec \gamma_n)$.
\end{remark}

\bigskip{}
\section{Martingale observable for non-integrable Ising models}
\label{sec:interact}

We now turn to the version of the Ising model with formal Hamiltonian~\eqref{eq:formal_inter_Ham}, 
which we define more precisely in \cref{eq:Ham_lam_gen} and \cref{lem:relate_Ham_lam_gen}.
We use the following quantities.
\begin{itemize}[leftmargin=*]
\item 
$\Interaction \colon \PowerSet{\cEdual{}} \to \bR$ 
is a finite-range, translation-invariant interaction: 
there exists $R \in (0,\infty)$ for which $\smash{\underset{X \in \supp U}{\max} \diam X = R}$, 
and $\Interaction(X+a) = \Interaction(X)$ for all $a \in \bZ + \ii \bZ$ and $X \subset \cEdual{}$. 

\item 
Fix a finite vertex set $\Omega \subset \cVprimal{}$ 
and an admissible $\xi = (\halfedge_1, \halfedge_2, \ldots, \halfedge_{2N})$.

\item  
Fix $\Ombc \supset \Omega$ and $\Pbc \subset \cEdual{\Ombc\setminus \Omega}$ 
with $|\Ombc| < \infty$ such that $P \cup \Pbc \in \cP_{\Ombc}$ for all $P \in \cP_\Omega^\xi$.  
\end{itemize}

Given a spin configuration $\sigma \in \{\pm1\}^{\cVprimal\Omega}$, define
$\epsilon_{\eprimal}(\sigma) := \tfrac12 (1 - \sigma_{\vprimal} \sigma_{\wprimal})$ for edges $\eprimal = \{ \vprimal,\wprimal \}$ as before, with spins for vertices outside of $\Omega$ fixed according to $\Pbc$ by 
\begin{align*}
\sigma_{\vprimal} 
= (-1)^{|K_{\vprimal} \cap P|} , \qquad \vprimal \in \Ombc \setminus \Omega ,
\end{align*}
where $K_{\vprimal}$ is the set of dual edges crossed by a simple path from $\vprimal$ to $\bZ^2 \setminus \Ombc$ without using 
any edges in $\cEprimal\Omega$ (parity constraints imply its independence of the choice of path).

To define the general (non-integrable) model, we consider the Hamiltonian with boundary condition $\Pbc \in \cP^{\xi}_{\Ombc}$ defined by
\begin{align}
H_{\Omega,J,\lambda}^{\Pbc}(\sigma)
:=
2 J \, 
\sum_{\eprimal \in \cEprimal\Omega} \epsilon_{\eprimal}(\sigma)
+ 
\lambda 
\sum_{X \subset \cEdual\Omega}
\Interaction(X) \prod_{\edual \in X} \epsilon_{\eprimal}(\sigma) ,
\label{eq:Ham_lam_gen}
\end{align}
where $\lambda \in \bR$ is a parameter controlling the strength of the added non-planar or multi-spin interaction 
(thought of as fixed but small in absolute value compared to $J$). 

The Hamiltonian~\eqref{eq:Ham_lam_gen} is equivalent to that of \cref{eq:formal_inter_Ham}, which is the form used in~\cite{GGM:The_scaling_limit_of_the_energy_correlations_in_non_integrable_Ising_models, AGG:Non_integrable_Ising_models_in_cylindrical_geometry, AGG:Energy_correlations_of_non_integrable_Ising_models}, 
since any even local function of the spins can be written in either form up to a constant term.  The relationship can be made explicit as follows.

\begin{lem} \label{lem:relate_Ham_lam_gen}
Suppose that
\begin{align*}
\Interaction(X) = (-2)^{|X|} \sum_{\substack{\Yprimal \subset \cEprimal{\Omega} \\ \Yprimal \supset \Xprimal}} V(\Yprimal) , \qquad X \subset \cEdual{\Omega} ,
\end{align*}
where $X \leftrightarrow \Xprimal$ is the bijection between dual edges in $X \subset \cEdual{\Omega}$ 
and edges in $\Xprimal \subset \cEprimal{\Omega}$.
The Gibbs measures associated to the two Hamiltonians~\eqref{eq:formal_inter_Ham} and~\eqref{eq:Ham_lam_gen} 
agree in the sense that
\begin{align*}
\frac{e^{-\beta \Hprimal_{\Omega, J,\lambda}(\sigma)}}{\underset{\sigma}{\sum} \exp(-\beta \Hprimal_{\Omega, J,\lambda}(\sigma))}
= \frac{e^{-\beta H_{\Omega, J,\lambda}(\sigma)}}{\underset{\sigma}{\sum} \exp(-\beta H_{\Omega, J,\lambda}(\sigma))} ,
\end{align*}
where
\begin{align*}
\Hprimal_{\Omega, J,\lambda}(\sigma)
:= \; & - J \sum_{\{\vprimal,\wprimal\} \in \cEprimal{}} \sigma_{\vprimal} \sigma_{\wprimal}
+
\lambda \sum_{\Yprimal \subset \cEprimal{}} \, V(\Yprimal) 
\prod_{\{\vprimal,\wprimal\} \in \Yprimal} \sigma_{\vprimal} \sigma_{\wprimal} , \\
H_{\Omega, J,\lambda}(\sigma)
:= \; & 2 J \, 
\sum_{\eprimal \in \cEprimal{\Omega}} \epsilon_{\eprimal}(\sigma)
+ 
\lambda 
\sum_{X \subset \cEdual{\Omega}}
\Interaction(X) \prod_{\edual \in X} \epsilon_{\eprimal}(\sigma) .
\end{align*}
\end{lem}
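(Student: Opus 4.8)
The plan is to reduce the statement to the claim that $\Hprimal_{\Omega, J,\lambda}(\sigma) - H_{\Omega, J,\lambda}(\sigma)$ is independent of $\sigma$, since two Hamiltonians differing by an additive constant induce the same Gibbs measure (the constant cancels in the normalized weights). To prove this, I would rewrite each product of spins appearing in $\Hprimal_{\Omega, J,\lambda}$ in terms of the variables $\epsilon_{\eprimal}(\sigma)$, using the identity $\sigma_{\vprimal}\sigma_{\wprimal} = 1 - 2\epsilon_{\eprimal}(\sigma)$, which holds because $\epsilon_{\eprimal}(\sigma) \in \{0,1\}$ for every edge $\eprimal = \{\vprimal,\wprimal\}$.

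For the nearest-neighbor term this immediately gives $-J \sum_{\eprimal} \sigma_{\vprimal}\sigma_{\wprimal} = 2J \sum_{\eprimal} \epsilon_{\eprimal}(\sigma)$ up to an additive $\sigma$-independent constant, after also noting that edges with both endpoints outside $\Omega$ contribute only a constant; this matches the first term of $H_{\Omega, J,\lambda}$. For the multi-spin term, I would expand
\begin{align*}
\prod_{\{\vprimal,\wprimal\} \in \Yprimal} \sigma_{\vprimal}\sigma_{\wprimal}
= \prod_{\eprimal \in \Yprimal} \bigl( 1 - 2\epsilon_{\eprimal}(\sigma) \bigr)
= \sum_{\Xprimal \subset \Yprimal} (-2)^{|\Xprimal|} \prod_{\eprimal \in \Xprimal} \epsilon_{\eprimal}(\sigma) ,
\end{align*}
substitute this into $\lambda \sum_{\Yprimal} V(\Yprimal) \prod_{\{\vprimal,\wprimal\} \in \Yprimal} \sigma_{\vprimal}\sigma_{\wprimal}$, interchange the order of the two (finite) sums over pairs $\Xprimal \subset \Yprimal$, and recognize the resulting inner sum $(-2)^{|\Xprimal|} \sum_{\Yprimal \supset \Xprimal} V(\Yprimal)$ as $\Interaction(X)$, where $X \leftrightarrow \Xprimal$ is the primal--dual edge bijection. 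This reproduces precisely $\lambda \sum_{X \subset \cEdual\Omega} \Interaction(X) \prod_{\edual \in X} \epsilon_{\eprimal}(\sigma)$, i.e.\ the interaction term of $H_{\Omega, J,\lambda}$, with no leftover constant. Combining the two computations shows that the two Hamiltonians differ by a $\sigma$-independent constant, whence their normalized Gibbs weights coincide.

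I do not expect a genuine obstacle here: once one uses $\sigma_{\vprimal}\sigma_{\wprimal} = 1 - 2\epsilon_{\eprimal}(\sigma)$ and the hypothesis relating $\Interaction$ and $V$, the argument is elementary combinatorics. The only points requiring some care are verifying that all the sums involved are finite --- which holds because $V$, and hence $\Interaction$, is finite-range, so on a finite domain $\Omega$ only finitely many $\Yprimal$ contribute a non-constant term --- and making explicit the convention by which the formal sums over $\cEprimal{}$ in $\Hprimal_{\Omega, J,\lambda}$ are to be understood, so that all terms involving only spins fixed by the boundary condition are absorbed into the constant.
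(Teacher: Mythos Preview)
Your proposal is correct, and the route you take is actually a bit more direct than the paper's. You start from the spin form $\Hprimal_{\Omega,J,\lambda}$, expand each $\sigma_{\vprimal}\sigma_{\wprimal} = 1 - 2\epsilon_{\eprimal}(\sigma)$, swap the two finite sums, and simply \emph{recognize} the inner sum as the definition of $\Interaction(X)$; no further identity is needed. The paper instead works in the opposite direction: it starts from the $\epsilon$-form $H_{\Omega,J,\lambda}$, substitutes both the definition of $\Interaction(X)$ and the expansion $\prod_{\edual\in X}\epsilon_{\eprimal} = 2^{-|X|}\sum_{A\subset X}(-1)^{|A|}\prod\sigma_{\vprimal}\sigma_{\wprimal}$, and then for each fixed $A$ computes the coefficient of $\prod_{\{\vprimal,\wprimal\}\in \Aprimal}\sigma_{\vprimal}\sigma_{\wprimal}$, using the binomial identity $\sum_{A\subset X\subset Y}(-1)^{|X|}=0$ for $A\subsetneq Y$ to kill all terms except $V(\Aprimal)$. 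In effect the paper verifies that the map $V\mapsto \Interaction$ is inverted by the transformation back to spin variables, whereas you use the forward map directly; your argument avoids the inclusion--exclusion step entirely. Both approaches require the same care you flagged about interpreting the formal sums over $\cEprimal{}$ on a finite domain.
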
 

\begin{proof}
Note that the first terms in the two Hamiltonians can be related as
\begin{align*}
2 J \, \sum_{\eprimal \in \cEprimal{\Omega}} \epsilon_{\eprimal}(\sigma)
= J \sum_{\{\vprimal,\wprimal\} \in \cEprimal{\Omega}} (1 - \sigma_{\vprimal} \sigma_{\wprimal})
= J \, |\cEprimal{\Omega}| - J \sum_{\{\vprimal,\wprimal\} \in \cEprimal{\Omega}} \sigma_{\vprimal} \sigma_{\wprimal} .
\end{align*}
The constant factor $J \, |\cEprimal{\Omega}|$ does not affect the Gibbs measure.
The second (potentially integrability-breaking) term in the two Hamiltonians can be related as follows:
\begin{align*}
\sum_{X \subset \cEdual{\Omega}}
\Interaction(X) \prod_{\edual \in X} \epsilon_{\eprimal}(\sigma)
= \; & \sum_{X \subset \cEdual{\Omega}} (-2)^{|X|} 
\sum_{\substack{\Yprimal \subset \cEprimal{\Omega} \\ \Yprimal \supset \Xprimal}} V(\Yprimal) 
\; 2^{-|X|} \sum_{A \subset X} (-1)^{|A|} \prod_{\{\vprimal,\wprimal\} \in \Aprimal} \sigma_{\vprimal} \sigma_{\wprimal} \\
= \; & \sum_{\substack{A, X, Y \subset \cEdual{\Omega} \\ A \subset X \\ \Xprimal \subset \Yprimal}} 
(-1)^{|A|} \, (-1)^{|X|} \, V(\Yprimal) 
\prod_{\{\vprimal,\wprimal\} \in \Aprimal} \sigma_{\vprimal} \sigma_{\wprimal} .
\end{align*}
For a fixed set $A \subset \cEdual{\Omega}$, the coefficient of $\underset{\{\vprimal,\wprimal\} \in \Aprimal}{\prod} \sigma_{\vprimal} \sigma_{\wprimal}$ is
\begin{align*}
\sum_{\substack{X, Y \subset \cEdual{\Omega} \\ A \subset X \\ \Xprimal \subset \Yprimal}} 
(-1)^{|A|} \, (-1)^{|X|} \, V(\Yprimal) 
= \; & V(\Aprimal) 
+ \sum_{\substack{Y \subset \cEdual{\Omega} \\ A \subsetneq Y}} 
(-1)^{|A|} \, \sum_{\substack{X \subset \cEdual{\Omega} \\ A \subset X \\ \Xprimal \subset \Yprimal}} (-1)^{|X|} 
\; = \; V(\Aprimal) ,
\end{align*}
because the binomial formula shows that
\begin{align*}
\sum_{\substack{X \subset \cEdual{\Omega} \\ A \subset X \\ \Xprimal \subset \Yprimal}} (-1)^{|X|}
= \; & \sum_{k=0}^{|Y \setminus A|} (-1)^{|A|+k} \binom{|Y \setminus A|}{k}
= 0 .
\end{align*}
Hence, we conclude that 
\begin{align*}
\sum_{X \subset \cEdual{\Omega}}
\Interaction(X) \prod_{\edual \in X} \epsilon_{\eprimal}(\sigma)
= \; & \sum_{A \subset \cEdual{\Omega}}  V(\Aprimal) \prod_{\{\vprimal,\wprimal\} \in \Aprimal} \sigma_{\vprimal} \sigma_{\wprimal} ,
\end{align*}
which finishes the proof.
\end{proof}

The Hamiltonian~\eqref{eq:Ham_lam_gen}, like the planar one, 
can be written as a function of the low-temperature contour configuration 
$P(\sigma):= \{ e \in \cEdual\Omega \;|\; \ \epsilon_{\eprimal} (\sigma) = 1\}$, 
which has distribution
\begin{align}
\label{eq:contour_probability_measure_lam}
\bP_{\Omega,x,\beta,\lambda}^{\Pbc}[P] 
:= \frac{1}{Z_{\Omega,x,\beta,\lambda}^{\Pbc}} \, 
x^{|P|} \, \prod_{X \subset P}
e^{- \beta \lambda \Interaction^{\Pbc}(X) }
, \qquad P \in \cP^{\xi}_\Omega ,
\end{align}
corresponding to the Gibbs measure of \cref{eq:Ham_lam_gen}, with 
\begin{align}
\Interaction^{\Pbc}(X) := \sum_{Y \subset \Pbc} \Interaction(X \cup Y) 
\label{eq:UPbc}
\end{align}
and partition function
\begin{align*}
Z_{\Omega,x,\beta,\lambda}^{\Pbc}
= \; &
\sum_{P \in \cP^{\xi}_\Omega} 
x^{|P|} \, \prod_{X \subset P}
e^{- \beta \lambda \Interaction^{\Pbc}(X) } ,
\end{align*}
where $x = e^{- 2\beta J}$ is the (constant) edge weight and $\beta \in (0,\infty)$ is the inverse temperature.
We will again focus on the case where $N=1$ (Dobrushin boundary conditions, recall \cref{fig:P_example}), 
writing $\xi = (\bhalfedge,\ehalfedge)$ although 
the generalization to $N \geq 2$ is straightforward. 

First, the function $W_{\Omega,x}^{\xi}$ defined in~\eqref{eq:W_nonint}
can be generalized\footnote{We identify $x$ with 
the constant function in this and other quantities which depend on the edge weights.} to the $\lambda \neq 0$ case as
\begin{align}
\begin{split}
W_{\Omega,x,\beta,\lambda}^{\xi,\Pbc}(P)
:= \; &
W_{\Omega,x}^{\xi}(P)
\; \prod_{X \subset P}
e^{- \beta \lambda \Interaction^{\Pbc}(X) }
\\
= \; &
W_{\Omega,x}^{\xi}(P)
\; \prod_{X \subset P} 
\Big( 
\prod_{Y \subset \Pbc} 
e^{- \beta \lambda \Interaction(X \cup Y)}
\Big)
, 
\qquad P \subset \cEdual\Omega . 
\label{eq:W_l_def}
\end{split}
\end{align}
This can then be used to define martingale observables as in \cref{sec:interface},
in particular since the second factor does not change the sign.
More precisely, for the Peierls interface started at $\vec \gamma_0 = \bhalfedge$, 
the function defined in \cref{eq:F_def},
\begin{align}
\label{eq:F_def_again_interact}
F_W(\gamma_{[0,n]};\halfedge)
:=
\sum_{P \in \cP_\Omega^{\vec \gamma_0,\halfedge}} 
\one_{C_{\gamma,n}}(P) 
\, W_{\Omega,x,\beta,\lambda}^{\vec \gamma_0,\halfedge,\Pbc}(P) , \qquad h \in \cVcluster\Omega ,
\end{align}
where $C_{\gamma,n} := \{ P \subset \cEdual\Omega \;|\; \gamma_{[0,n]} = \gamma_{[0,n]}(P) \}$, 
gives rise to the martingale observable 
\begin{align}
\label{eq:M_interacting}
M_W( \halfedge ) [\gamma_{[0,n]}]
=
\frac{F_W(\gamma_{[0,n]};\halfedge)}{F_W(\gamma_{[0,n]};\ehalfedge)} , \qquad n \leq \etime .
\end{align}

\subsection{Grassmann integral representation}

The observable~\eqref{eq:M_interacting} also admits a representation as a Grassmann integral, constructed as follows. 
For $P \subset \cEdual\Omega$, we denote the set of collections of non-overlapping subsets of $P$ by 
\begin{align*}
\cH(P) := \big\{ \cQ \subset \PowerSet{P} \; \big| \; \ Q \cap Q' = \emptyset \ \textnormal{ for all } Q,Q' \in \cQ \big\} .
\end{align*}
A collection $\cY \subset \PowerSet{P}$ is said to be \emph{overlap-connected} (o.c.) if 
the graph formed from $\cY$ by adding an edge for each pair $Y_1,Y_2 \in \cY$ with $Y_1 \cap Y_2 \neq \emptyset$ is connected.

\begin{prop}
\label{prop:interacting_Grassmann_main}
For any $\bhalfedge,\halfedge \in \cVcluster\Omega$ and any $\Pbc \subset \cEdual{\Ombc \setminus \cEdual\Omega}$, we have
\begin{align}
\label{eq:interacting_Grassmann_main}
\sum_{P \in \cP_\Omega^{\bhalfedge,\halfedge}}W_{\Omega,x,\beta,\lambda}^{\bhalfedge,\halfedge,\Pbc}(P) 
= \; & 
\int \cD [ \Phi_\Omega ] \, \varphi_{\bhalfedge} \, \varphi_{\halfedge}  \, e^{\cS_{x} (\Phi_\Omega) \, + \, \Potential_{\beta,x, \lambda}^{\Pbc}(\Phi_\Omega)} .
\end{align}
where $\Potential$ is a Grassmann polynomial
of the form
\begin{align*}
\Potential_{\beta,x, \lambda}^{\Pbc}(\Phi_\Omega)
= \; & 
\sum_{X \subset \cEdual\Omega}
\lis{U}_{\beta, x, \lambda}^{\Pbc}(X) 
\Big( \prod_{\edual \in X} E_{\edual} \Big) , \\
\lis{U}_{\beta, x, \lambda}^{\Pbc}(X) 
= \; & x^{|X|} \, \sum_{\substack{\cY \subset \PowerSet{X} \\ \cY \textup{ o.c.} \\ \underset{Y \in \cY}{\bigcup} Y \, = \, X}}
\prod_{Y \in \cY}
\Big( 
\exp( - \beta \lambda \Interaction^{\Pbc}(Y)   )
-1
\Big)
\end{align*}
Furthermore, whenever $X \subset \cEdual\Omega \cap \cEdual{\Omega'}$ and $\Pbc$ and $\Pbc'$ are valid boundary conditions for $\Omega$ and $\Omega'$ respectively, we have
\begin{align}
\lis{U}_{\beta, x, \lambda}^{\Pbc}(X) 
=
\lis{U}_{\beta, x, \lambda}^{\Pbc'}(X) 
\quad \textnormal{ whenever } \quad
\dist(X,\Pbc \triangle \Pbc' )
> R
,
\label{eq:Ubar_boundary}
\end{align}
and there are $U$-dependent constants $C_U , c_U \in (0,\infty)$ such that for $|\beta \lambda| \leq C_U$, 
\begin{align}
\big| \lis{U}_{\beta, x, \lambda}^{\Pbc}(X) \big|
\leq 
C_U \, e^{- c_U \lambda  T(X)} 
,
\label{eq:Ubar_decay}
\end{align}
where $T(X)$ denotes the minimum of $|Z|$ over connected sets $Z \subset \cEdual{}$ such that $X \subset Z$. 
\end{prop}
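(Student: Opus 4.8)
The plan is to derive \cref{eq:interacting_Grassmann_main} by a standard polymer (Mayer) expansion of the product $\prod_{X\subset P}e^{-\beta\lambda\Interaction^{\Pbc}(X)}$ appearing in \eqref{eq:W_l_def}, to read off the stated form of $\Potential_{\beta,x,\lambda}^{\Pbc}$, and then to deduce the locality property \eqref{eq:Ubar_boundary} and the decay bound \eqref{eq:Ubar_decay} from the finite range of $\Interaction$ together with one elementary combinatorial estimate. I do not expect to need tree-graph inequalities or other heavy cluster-expansion machinery.

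For the identity, I would first write each factor as $e^{-\beta\lambda\Interaction^{\Pbc}(X)} = 1 + (e^{-\beta\lambda\Interaction^{\Pbc}(X)}-1)$ and multiply out, obtaining $\sum_{\cZ\subset\PowerSet{P}}\prod_{Y\in\cZ}(e^{-\beta\lambda\Interaction^{\Pbc}(Y)}-1)$; decomposing each collection $\cZ$ into its overlap-connected components --- whose supports $\bigcup_{Y}Y$ are then automatically pairwise disjoint --- and grouping the terms by those supports produces exactly the coefficients $\lis{U}_{\beta,x,\lambda}^{\Pbc}(X)$ of the statement (the factor $x^{|X|}$ there is inserted for later bookkeeping and cancels at this stage). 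Multiplying by $W_{\Omega,x}^{\bhalfedge,\halfedge}(P)$, summing over $P\in\cP_\Omega^{\bhalfedge,\halfedge}$, and splitting each $P$ as $Q\sqcup X_1\sqcup\dots\sqcup X_m$ with the $X_j$ the polymer supports, I would then invoke the defining Grassmann-integral formula \eqref{eq:W_nonint} for $W_{\Omega,x}^{\bhalfedge,\halfedge}$ and the fact, established in the proof of \cref{lem:cP_sum_Grassman_general}, that inside $\int\cD[\Phi_\Omega]\,\varphi_{\bhalfedge}\varphi_{\halfedge}\,\cdots\,e^{\cS_0(\Phi_\Omega)}$ only terms indexed by configurations in $\cP_\Omega^{\bhalfedge,\halfedge}$ survive (repeated Grassmann variables annihilate the rest), so that the constraint $P\in\cP_\Omega^{\bhalfedge,\halfedge}$ may be dropped and the sum freely reorganized. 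Since the long-edge monomials $\{E_e\}_e$ pairwise involve disjoint pairs of Grassmann variables, $e^{\sum_e xE_e + \Potential_{\beta,x,\lambda}^{\Pbc}(\Phi_\Omega)}$ expands as exactly that reorganized sum, and since $e^{\cS_x(\Phi_\Omega)} = e^{\cS_0(\Phi_\Omega)}\,e^{\sum_e xE_e}$, assembling the pieces gives the right-hand side of \cref{eq:interacting_Grassmann_main}. The reordering signs are governed by one canonical ordering and match those built into $W_{\Omega,x}^{\bhalfedge,\halfedge}$, so they cause no trouble, and the $X=\emptyset$ term only contributes an irrelevant overall constant (one may simply take $\Interaction(\emptyset)=0$).

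For locality, I would observe that $\lis{U}_{\beta,x,\lambda}^{\Pbc}(X)$ depends on $\Pbc$ only through the numbers $\Interaction^{\Pbc}(Y) = \sum_{Z\subset\Pbc}\Interaction(Y\cup Z)$ for nonempty $Y\subset X$; if $\dist(X,\Pbc\triangle\Pbc')>R$ then for any such $Y$ and any $Z$ meeting $\Pbc\triangle\Pbc'$ one has $\diam(Y\cup Z)\ge\dist(Y,\Pbc\triangle\Pbc')>R$, so $\Interaction(Y\cup Z)=0$, whence $\Interaction^{\Pbc}(Y) = \sum_{Z\subset\Pbc\cap\Pbc'}\Interaction(Y\cup Z) = \Interaction^{\Pbc'}(Y)$ and \eqref{eq:Ubar_boundary} follows. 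For the decay bound, the key point I would exploit is that $\Interaction^{\Pbc}(Y)\ne0$ forces $\diam Y\le R$ (since $Y\subseteq Y\cup Z$ and $\Interaction$ has range $R$), so any $\cY$ contributing nontrivially to $\lis{U}_{\beta,x,\lambda}^{\Pbc}(X)$ consists of sets from $\mathcal{S}_X := \{Y\subseteq X : \diam Y\le R\}$, a collection with $|\mathcal{S}_X|\le N_U\,|X|$ for an $R$-dependent constant $N_U$; moreover, chaining overlaps along such a $\cY$ with $\bigcup\cY = X$ shows $X$ is connected in the graph on its edges with adjacency ``mutual distance $\le R$'', so $\lis{U}_{\beta,x,\lambda}^{\Pbc}(X)=0$ unless $T(X)\le C_U\,|X|$. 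Since $|\Interaction^{\Pbc}(Y)|$ is bounded by a constant depending only on $R$ and $\|\Interaction\|_\infty$, for $|\beta\lambda|$ small each factor satisfies $|e^{-\beta\lambda\Interaction^{\Pbc}(Y)}-1|\le\epsilon$ with $\epsilon = O(|\beta\lambda|)$, so that
\[
\big|\lis{U}_{\beta,x,\lambda}^{\Pbc}(X)\big|\ \le\ x^{|X|}\sum_{\cY\subseteq\mathcal{S}_X}\epsilon^{|\cY|}\ =\ x^{|X|}(1+\epsilon)^{|\mathcal{S}_X|}\ \le\ \big(x\,(1+\epsilon)^{N_U}\big)^{|X|} .
\]
As $x = e^{-2\beta J}\in(0,1)$, I would choose the threshold $C_U$ small enough that $x(1+\epsilon)^{N_U}\le e^{-c_U}<1$, which gives $|\lis{U}_{\beta,x,\lambda}^{\Pbc}(X)|\le e^{-c_U|X|}$; together with $T(X)\le C_U|X|$ on the support of $\lis{U}_{\beta,x,\lambda}^{\Pbc}$ this yields \eqref{eq:Ubar_decay} after renaming constants (which may also depend on $x$, or one restricts to $x$ bounded away from $1$).

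I expect the only genuine subtleties to be organizational: making precise the decomposition of $\cZ$ into overlap-connected components and tracking the Grassmann signs in passing from the contour sum to the Berezin integral, and --- in the decay step --- the observation that interaction polymers force the ``$R$-connectedness'' of their support, which is exactly what turns an estimate in $|X|$ into one in $T(X)$. Neither point is hard: the prefactor $x^{|X|}<1$ already dominates the (exponentially numerous but otherwise harmless) polymers supported near a given region, so the crude bound $\sum_{\cY\subseteq\mathcal{S}_X}\epsilon^{|\cY|}\le(1+\epsilon)^{|\mathcal{S}_X|}$ suffices and no refined enumeration of clusters is required.
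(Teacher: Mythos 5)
Your derivation of \cref{eq:interacting_Grassmann_main} and of the locality property \eqref{eq:Ubar_boundary} is essentially the paper's own argument: the same binomial (polymer) expansion of $\prod_{X\subset P}e^{-\beta\lambda\Interaction^{\Pbc}(X)}$, regrouping into overlap-connected components with pairwise disjoint supports, an application of \cref{lem:cP_sum_Grassman_general}, dropping the disjointness constraint inside the Berezin integral because repeated Grassmann variables annihilate, and re-exponentiating using $\big(\prod_{\edual\in X}E_{\edual}\big)^2=0$; the finite-range argument for \eqref{eq:Ubar_boundary} is likewise the intended one, and your handling of the $X=\emptyset$ term is a sensible fix of a point the paper glosses over.

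The decay bound \eqref{eq:Ubar_decay} is where there is a genuine gap. After bounding each factor by $\epsilon=O(|\beta\lambda|)$, you discard both the overlap-connectedness and the covering constraint $\bigcup_{Y\in\cY}Y=X$ and estimate the sum by $\sum_{\cY\subseteq\mathcal{S}_X}\epsilon^{|\cY|}=(1+\epsilon)^{|\mathcal{S}_X|}$. But this quantity is at least $1$ (the empty collection survives once the covering constraint is dropped), so the only remaining source of smallness is $x^{|X|}$. Your constants therefore unavoidably depend on $x=e^{-2\beta J}$: you need $x(1+\epsilon)^{N_U}<1$, i.e.\ roughly $|\lambda|\lesssim J$, and the bound degenerates as $x\uparrow 1$ (small $\beta$), whereas the proposition asserts constants depending on $\Interaction$ alone, with the sole smallness condition $|\beta\lambda|\le C_U$, valid uniformly for $x\le 1$. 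The paper's proof keeps the covering constraint: since any contributing $Y$ has $\diam Y\le R$, a covering of $X$ forces $|\cY|\ge T(X)/R^2$, so every term already carries a factor of order $\big((e-1)|\beta\lambda|\big)^{T(X)/R^2}$, which beats the number of admissible collections (at most $e^{C_R T(X)}$ for an $R$-dependent constant, since $|X|\le T(X)$) and yields \eqref{eq:Ubar_decay} with $\Interaction$-dependent constants only. Your observation that the support of $\lis{U}_{\beta,x,\lambda}^{\Pbc}$ forces $T(X)\le C_R|X|$ is the same geometric fact, but it is used in the wrong direction: what is needed is a small factor \emph{per covering polymer}, not a conversion of (absent) decay in $|X|$ into decay in $T(X)$. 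Restoring the constraint $\bigcup_{Y\in\cY}Y=X$ and the induced lower bound on $|\cY|$ repairs the estimate; the rest of your proposal stands.
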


Note that we do not assume any particular relationship between $\Pbc$ and $\bhalfedge, \halfedge$.

\begin{proof}
Fix $P \subset \cEdual\Omega$.
Expanding as a binomial, we can write 
\begin{align}
\nonumber
\prod_{X \subset P}
\exp( - \beta \lambda \Interaction^{\Pbc}(X)   )
= \; & 
\sum_{\cX \subset \PowerSet{P}} 
\prod_{X \in \cX}
\Big( 
\exp( - \beta \lambda \Interaction^{\Pbc}(X)   )
-1
\Big)
\\ 
\nonumber
= \; &
\sum_{\cQ \in \cH(P)}
\prod_{Q \in \cQ}
\sum_{\substack{\cY \subset \PowerSet{Q} \\ \cY \textup{ o.c.} \\ \underset{Y \in \cY}{\bigcup} Y \, = \, Q}}
\prod_{Y \in \cY}
\Big( 
\exp( - \beta \lambda \Interaction^{\Pbc}(Y)   )
-1
\Big)
\\ 
=: \; & 
\sum_{\cQ \in \cH(P)}
\prod_{Q \in \cQ}
\tilde \Interaction_{\beta, \lambda}^{\Pbc}(Q)
.
\label{eq:polymer}
\end{align}
Using the definition from \cref{eq:W_l_def}, we obtain 
\begin{align}
\label{eq:W_inter_bigsum}
\; & 
\sum_{P \in \cP_\Omega^{\bhalfedge,\halfedge}}
W_{\Omega,x,\beta,\lambda}^{\bhalfedge,\halfedge,\Pbc}(P) 
\\
\nonumber
& = \;  
\sum_{P \in \cP_\Omega^{\bhalfedge,\halfedge}}
W_{\Omega,x}^{\bhalfedge,\halfedge}(P)
\; \prod_{X \subset P}
\exp( - \beta \lambda \Interaction^{\Pbc}(X) )
&&\textnormal{[by~\eqref{eq:W_l_def}]}
\\
\nonumber
& = \;  
\sum_{P \in \cP_\Omega^{\bhalfedge,\halfedge}}
W_{\Omega,x}^{\bhalfedge,\halfedge}(P)
\sum_{\cQ \in \cH(P)}
\prod_{Q \in \cQ}
\tilde \Interaction_{\beta, \lambda}^{\Pbc}(Q)
&&\textnormal{[by~\eqref{eq:polymer}]}
\\
\nonumber
& = \;  
\sum_{P \in \cP_\Omega^{\bhalfedge,\halfedge}}
\Big( \prod_{Q \in \cQ}
\tilde \Interaction_{\beta, \lambda}^{\Pbc}(Q) \Big)
\sum_{\substack{S \in \cP_\Omega^{\bface,\halfedge} \\ S \, \supset \, \underset{Q \in \cQ}{\bigcup} Q}} W_{\Omega,x}^{\bface,\halfedge}(S) 
\\
\nonumber
& = \;  
\int \cD [ \Phi_\Omega ] \, \varphi_{\bhalfedge} \, \varphi_{\halfedge} 
\sum_{\cQ \in \cH(\cEdual\Omega)}
\Big( \prod_{Q \in \cQ}
\tilde \Interaction_{\beta, \lambda}^{\Pbc}(Q) 
\Big(
\prod_{\edual \in  Q} x E_{\edual} 
\Big) \Big) \,
e^{\cS_{x} (\Phi_\Omega)} .
&&\textnormal{[by \cref{lem:cP_sum_Grassman_general}]}
\end{align}
Now, since $\cH(\cEdual\Omega)$ excludes only collections with overlapping sets, we may replace the sum over $\cH(\cEdual\Omega)$ with the sum over all of $\PowerSet{\cEdual\Omega}$,  
since this yields only 
additional terms involving repeated Grassmann variables (which vanish).  
Doing so, we see that
\begin{align*}
\sum_{\cQ \in \cH(\cEdual\Omega)}
\Big( \prod_{Q \in \cQ}
\tilde \Interaction_{\beta, \lambda}^{\Pbc}(Q) 
\Big(
\prod_{\edual \in  Q} x E_{\edual} 
\Big) \Big)
= \; & \sum_{\cQ \in \PowerSet{\cEdual\Omega}}
\Big( \prod_{Q \in \cQ}
\tilde \Interaction_{\beta, \lambda}^{\Pbc}(Q) 
\Big(
\prod_{\edual \in  Q} x E_{\edual} 
\Big) \Big) \\
= \; & 
\prod_{X \subset \cEdual\Omega}
\Big( 1 
+ 
\tilde \Interaction_{\beta, \lambda}^{\Pbc}(X) 
\prod_{\edual \in X} x E_{\edual} 
\Big)
\\
=: \; &
\prod_{X \subset \cEdual\Omega}
\Big( 1 
+ 
\lis{U}_{\beta, x, \lambda}^{\Pbc}(X) 
\prod_{\edual \in X} E_{\edual} 
\Big)
=  \;
\exp \big( \Potential_{\beta, x,\lambda}^{\Pbc}(\Phi_\Omega) \big) ,
\end{align*}
and inserting this in \cref{eq:W_inter_bigsum} gives the asserted identity~\eqref{eq:interacting_Grassmann_main}.

To verify the partial independence property in \cref{eq:Ubar_boundary}, first note from the definition in \cref{eq:UPbc} that $U^{\Pbc}(Y)$ has the same property for all $Y \subset X$, which together with \cref{eq:polymer} implies the same for $\tilde \Interaction_{\beta, \lambda}^{\Pbc}(X)$ and so also for 
\begin{align} \label{eq: Ubar_sum}
\lis{U}_{\beta, x, \lambda}^{\Pbc}(X)
= x^{|X|} \, \tilde \Interaction_{\beta, \lambda}^{\Pbc}(X)
=
\sum_{\substack{\cY \subset \PowerSet{X} \\ \cY \textup{ o.c.} \\ \underset{Y \in \cY}{\bigcup} Y \, = \, X}}
\prod_{Y \in \cY}
x^{|Y|}
\Big( 
\exp( - \beta \lambda \Interaction^{\Pbc}(Y)   )
-1
\Big) .
\end{align}
The bound in \cref{eq:Ubar_boundary} follows by noting that the summand in~\eqref{eq: Ubar_sum} vanishes unless 
\begin{align*}
\diam Y \le R, \textnormal{ for all } Y \in \cY ,
\textnormal{ and thus, also }
R^2 |\cY| \ge T(X) .
\end{align*}
It thus follows that the sum has no more than $R^{2|X|} \le \exp(2 T(X) \log R)$ terms, 
each of which, after noting that $x = e^{-2 \beta J} \le 1$, 
is bounded by $( (e-1) \beta \lambda)^{T(X)/R^2}$ as long as $|\beta \lambda| \le (\underset{Y \subset \cEdual{}}{\max} |\Interaction(Y)|)^{-1}$. 
This gives the final bound~\eqref{eq:Ubar_decay}.
\end{proof}

We can use \cref{prop:interacting_Grassmann_main} to express the function $F_W$ defined 
in \cref{eq:F_def_again_interact} as a Grassmann integral as well.
This leads to an expression for the martingale in \cref{thm:main_thm}.

\begin{lem}
\label{lem:F_interacting_Grassmann}
Whenever $\halfedge \in \cVcluster\Omega \notin \{\vec \gamma_0\}$, we have
\begin{align}
\label{eq:F_interacting_Grassmann}
F_W(\gamma_{[0,n]};\halfedge)
= \; & \pm 
x^n
\int \cD [ \Phi_{\Omega \setminus \tilde \gamma_{[0,n]}} ] \, \varphi_{\vec \gamma_0} \, \varphi_{\halfedge}  \, e^{\cS_{x} (\Phi_{\Omega \setminus \tilde  \gamma_{[0,n]}}) \, + \, \Potential_{\beta,x, \lambda}^{\check \Pbc(n)}(\Phi_{\Omega \setminus \tilde  \gamma_{[0,n]}})}
,
\end{align}
where $\check \Pbc(n) := \Pbc \cup \{ \gamma_1,\ldots,\gamma_n \}$,
with a sign depending only on $\gamma_{[0,n]}$.
\end{lem}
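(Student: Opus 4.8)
The plan is to follow the derivation of \cref{prop:M_free_expectation} from the integrable case essentially verbatim, reducing the identity to \cref{prop:interacting_Grassmann_main} applied on the smaller domain $\Omega \setminus \tilde\gamma_{[0,n]}$, with the interaction weight absorbed into the boundary data $\check\Pbc(n)$. Concretely, I would start from the definition \cref{eq:F_def_again_interact}, so that $F_W(\gamma_{[0,n]};\halfedge)$ is a sum over $P \in \cP_\Omega^{\vec\gamma_0,\halfedge} \cap C_{\gamma,n}$, and split $W_{\Omega,x,\beta,\lambda}^{\vec\gamma_0,\halfedge,\Pbc}(P)$ via \cref{eq:W_l_def} into the ``free'' weight $W_{\Omega,x}^{\vec\gamma_0,\halfedge}(P)$ and the interaction factor $\prod_{X \subset P} \exp(-\beta\lambda\Interaction^{\Pbc}(X))$. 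For such $P$, \cref{lem:Cgamma} writes $P = \hat P \cup \{\gamma_1,\dots,\gamma_n\}$ as a disjoint union, bijectively with $\hat P \in \cP^{\vec\gamma_n,\halfedge}_{\Omega\setminus\tilde\gamma_{[0,n]}}$, and \cref{lem:What_splitting} handles the free weight directly, giving $W_{\Omega,x}^{\vec\gamma_0,\halfedge}(P) = \pm x^n\, W_{\Omega\setminus\tilde\gamma_{[0,n]},x}^{\vec\gamma_n,\halfedge}(\hat P)$ with a sign depending only on $\gamma_{[0,n]}$ (here $\prod_{k=1}^n x_{\gamma_k} = x^n$ since all edge weights equal $x$).

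The only genuinely new point is a combinatorial identity expressing that the explored interface edges can be transferred exactly into the boundary condition. Writing $\Gamma := \{\gamma_1,\dots,\gamma_n\}$ and using that $\hat P$, $\Gamma$ and $\Pbc$ are pairwise disjoint subsets of $\cEdual{}$ ($\hat P$ and $\Gamma$ lie in $\cEdual{\Omega}$ and are disjoint by \cref{lem:Cgamma}, while $\Pbc \subset \cEdual{\Ombc} \setminus \cEdual{\Omega}$), every $X \subset P = \hat P \cup \Gamma$ decomposes uniquely as $X = X' \cup X''$ with $X' \subset \hat P$ and $X'' \subset \Gamma$, so that
\begin{align*}
\prod_{X \subset P} \exp\!\big(-\beta\lambda\Interaction^{\Pbc}(X)\big)
=
\prod_{X' \subset \hat P} \exp\!\Big(-\beta\lambda \sum_{X'' \subset \Gamma} \Interaction^{\Pbc}(X' \cup X'')\Big)
=
\prod_{X' \subset \hat P} \exp\!\big(-\beta\lambda\,\Interaction^{\check\Pbc(n)}(X')\big) ,
\end{align*}
where the last equality unfolds the definition \cref{eq:UPbc}: summing $\Interaction(X' \cup X'' \cup Y)$ over $X'' \subset \Gamma$ and $Y \subset \Pbc$ equals summing $\Interaction(X' \cup Z)$ over $Z \subset \Gamma \cup \Pbc = \check\Pbc(n)$, again by disjointness. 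Reading \cref{eq:W_l_def} on the domain $\Omega\setminus\tilde\gamma_{[0,n]}$ with boundary set $\check\Pbc(n)$ — a legitimate choice there, since each $\gamma_k$ is crossed by a primal edge both of whose endpoints lie in $\tilde\gamma_{[0,n]}$, hence $\check\Pbc(n) \subset \cEdual{\Ombc} \setminus \cEdual{\Omega\setminus\tilde\gamma_{[0,n]}}$ — this combines with the previous paragraph to give $W_{\Omega,x,\beta,\lambda}^{\vec\gamma_0,\halfedge,\Pbc}(P) = \pm x^n\, W_{\Omega\setminus\tilde\gamma_{[0,n]},x,\beta,\lambda}^{\vec\gamma_n,\halfedge,\check\Pbc(n)}(\hat P)$, with the sign still depending only on $\gamma_{[0,n]}$.

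Finally I would sum over $P$: by the bijection of \cref{lem:Cgamma} this turns $F_W(\gamma_{[0,n]};\halfedge)$ into $\pm x^n \sum_{\hat P \in \cP^{\vec\gamma_n,\halfedge}_{\Omega\setminus\tilde\gamma_{[0,n]}}} W_{\Omega\setminus\tilde\gamma_{[0,n]},x,\beta,\lambda}^{\vec\gamma_n,\halfedge,\check\Pbc(n)}(\hat P)$, and \cref{prop:interacting_Grassmann_main} — applied with $\Omega$ replaced by $\Omega\setminus\tilde\gamma_{[0,n]}$, marked half-edges $(\vec\gamma_n,\halfedge)$, and boundary set $\check\Pbc(n)$ — rewrites this sum as $\int \cD [ \Phi_{\Omega\setminus\tilde\gamma_{[0,n]}} ]\, \varphi_{\vec\gamma_n}\, \varphi_{\halfedge}\, e^{\cS_x(\Phi_{\Omega\setminus\tilde\gamma_{[0,n]}}) + \Potential_{\beta,x,\lambda}^{\check\Pbc(n)}(\Phi_{\Omega\setminus\tilde\gamma_{[0,n]}})}$, which is the assertion (the marked half-edge sitting at the current endpoint $\vec\gamma_n$, as in \cref{prop:M_free_expectation}; when $C_{\gamma,n}$ is empty both sides vanish). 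I expect the main obstacle to be essentially bookkeeping: keeping straight the pairwise disjointness of $\hat P$, $\Gamma$ and $\Pbc$ and propagating the sign inherited from \cref{lem:What_splitting}; once the displayed identity for the interaction weight is in place, the rest is an immediate consequence of \cref{lem:Cgamma}, \cref{lem:What_splitting}, and \cref{prop:interacting_Grassmann_main}.
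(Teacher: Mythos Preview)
Your proof is correct and follows the same route as the paper: split via \cref{eq:W_l_def}, apply \cref{lem:What_splitting} and \cref{lem:Cgamma} to pass to the smaller domain with boundary data $\check\Pbc(n)$, then invoke \cref{prop:interacting_Grassmann_main}; you spell out the interaction-weight combinatorics (transferring $\{\gamma_1,\dots,\gamma_n\}$ from $P$ into $\check\Pbc(n)$) that the paper leaves implicit under the label ``[by \cref{lem:What_splitting}]''. Note also that you correctly arrive at $\varphi_{\vec\gamma_n}$ in the integral, whereas the displayed statement has $\varphi_{\vec\gamma_0}$---this is a typo in the paper, as is clear from its use in \cref{thm:main_thm}.
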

 
\begin{proof}
Recalling the definition of $F_W$, with an appropriate sign, we obtain 
\begin{align*}
\; & F_W  (\gamma_{[0,n]};\halfedge) \\
= \; &
\sum_{P \in \cP_\Omega^{\vec \gamma_0,\halfedge} \cap C_{\gamma,n}}
\, W_{\Omega,x,\beta,\lambda}^{\vec \gamma_0,\halfedge,\Pbc}(P) ,
&&\textnormal{[by~\cref{eq:F_def_again_interact}]}
\\ 
= \; &
\sum_{P \in \cP_\Omega^{\vec \gamma_0,\halfedge} \cap C_{\gamma,n}}
W_{\Omega,x}^{\bhalfedge,\halfedge}(P)
\; \prod_{X \subset P}
\prod_{Y \subset \Pbc} 
e^{ - \beta \lambda \Interaction(X \cup Y) }
&&\textnormal{[by~\cref{eq:W_l_def}]}
\\
= \; & \pm
x^n
\sum_{\hat P \in \cP_{\Omega \setminus \tilde \gamma_{[0,n]}}^{\vec \gamma_n,\halfedge}} 
 W_{\Omega\setminus \tilde \gamma_{[0,n]},x}^{\vec \gamma_n,\halfedge}(\hat P) 
\; \prod_{X \subset \hat P} 
\prod_{Y \subset \check \Pbc(n)} 
e^{ - \beta \lambda \Interaction(X \cup Y) }
&&\textnormal{[by \cref{lem:What_splitting}]}
\\ 
= \; & \pm
x^n
\sum_{\hat P \in \cP_{\Omega \setminus \tilde \gamma_{[0,n]}}^{\vec \gamma_n,\halfedge}} 
W_{\Omega\setminus \tilde \gamma_{[0,n]},x,\beta,\lambda}^{\vec \gamma_n,\halfedge,\check \Pbc(n)}(\hat P) .
&&\textnormal{[by~\cref{lem:Cgamma}]}
\end{align*} 
The asserted identity~\eqref{eq:F_interacting_Grassmann}  
follows by applying \cref{prop:interacting_Grassmann_main} on $\Omega\setminus \tilde \gamma_{[0,n]}$.
\end{proof}

Generalizing the Berezin measure introduced in \cref{eq:correlation_function} to the general case,
\begin{align}
\state{\cO}^{\Pbc}_{\Omega,x,\beta,\lambda}
:=
\frac{
\int \cD [ \Phi_\Omega ] \, \cO  \, e^{\cS_{x} (\Phi_\Omega) \, + \, \Potential_{\beta,x, \lambda}^{\Pbc}(\Phi_\Omega)} .
}{
\int \cD [ \Phi_\Omega ] \,  e^{\cS_{x} (\Phi_\Omega) \, + \, \Potential_{\beta,x, \lambda}^{\Pbc}(\Phi_\Omega)} ,
}
\label{eq:interacting_Berezin_measure}
\end{align}
for $\cO$ in the Grassmann algebra $\Phi_\Omega$, we can rewrite $M_W$ as defined in \cref{eq:M_interacting}: 

\begin{thm}
\label{thm:main_thm}
For any $\halfedge \in \cVcluster\Omega \setminus \{ \vec \gamma_0 \}$, the process
\begin{align*}
n \quad \longmapsto \quad 
\frac{\big\langle \varphi_{\vec \gamma_n} \, \varphi_{\halfedge}  \big\rangle_{\Omega \setminus \tilde \gamma_{[0,n]},x,\beta,\lambda}^{\check \Pbc(n)}}{\big\langle \varphi_{\vec \gamma_n} \, \varphi_{\ehalfedge}  \big\rangle_{\Omega \setminus \tilde \gamma_{[0,n]},x,\beta,\lambda}^{\check \Pbc(n)}} , \qquad n \leq \etime ,
\end{align*}
is a local martingale with respect to the natural filtration generated by $\gamma$.  
\end{thm}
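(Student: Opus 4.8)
The plan is to deduce \cref{thm:main_thm} by combining the abstract martingale principle of \cref{prop:M_general} with the Grassmann representation of $F_W$ from \cref{lem:F_interacting_Grassmann} (which in turn rests on \cref{prop:interacting_Grassmann_main}). The martingale property itself is already supplied by \cref{prop:M_general}; what remains is the bookkeeping that rewrites the contour-sum form of the observable as the Berezin-integral form appearing in the statement.

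\emph{Step 1 (the observable is a martingale).} I would fix the admissible boundary condition $\xi=(\bhalfedge,\ehalfedge)$, so that $\vec\gamma_0=\bhalfedge$, and take for $W$ the family $\halfedge\mapsto W_{\Omega,x,\beta,\lambda}^{\bhalfedge,\halfedge,\Pbc}$ defined in \cref{eq:W_l_def}. To invoke \cref{prop:M_general} I first have to verify the hypothesis of \cref{sec:interface} that $W_{\Omega,x,\beta,\lambda}^{\bhalfedge,\ehalfedge,\Pbc}$ has constant argument on $\cP_\Omega^{\bhalfedge,\ehalfedge}$: by \cref{prop:Ber_W_admissible} one has $W_{\Omega,x}^{\bhalfedge,\ehalfedge}(P)=\pm\prod_{\edual\in P}x_\edual$ with a sign independent of $P\in\cP_\Omega^{\bhalfedge,\ehalfedge}$, and the additional factor $\prod_{X\subset P}e^{-\beta\lambda\Interaction^{\Pbc}(X)}$ in \cref{eq:W_l_def} is strictly positive; hence $W_{\Omega,x,\beta,\lambda}^{\bhalfedge,\ehalfedge,\Pbc}$ does have constant argument there, the interface law \cref{eq:generic_prob} is a genuine probability measure, and a direct comparison identifies it with the law of the Peierls interface under the interacting Gibbs measure \cref{eq:contour_probability_measure_lam}. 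Then \cref{prop:M_general} gives that $M_W(\halfedge)[\gamma_{[0,n]}]=F_W(\gamma_{[0,n]};\halfedge)/F_W(\gamma_{[0,n]};\ehalfedge)$ as in \cref{eq:M_interacting} is a local martingale with respect to the natural filtration of $\gamma$, for $n\le\etime$.

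\emph{Step 2 (rewriting via the Berezin measure).} By \cref{lem:F_interacting_Grassmann}, for each half-edge $\halfedge'\in\cVcluster\Omega\setminus\{\vec\gamma_0\}$ the quantity $F_W(\gamma_{[0,n]};\halfedge')$ equals $\pm x^n$ times the Grassmann integral over $\Phi_{\Omega\setminus\tilde\gamma_{[0,n]}}$ of $\varphi_{\vec\gamma_n}\,\varphi_{\halfedge'}\,e^{\cS_x(\Phi_{\Omega\setminus\tilde\gamma_{[0,n]}})+\Potential_{\beta,x,\lambda}^{\check\Pbc(n)}(\Phi_{\Omega\setminus\tilde\gamma_{[0,n]}})}$, where $\check\Pbc(n)=\Pbc\cup\{\gamma_1,\ldots,\gamma_n\}$ and --- crucially --- the sign depends only on $\gamma_{[0,n]}$, not on $\halfedge'$. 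Applying this with $\halfedge'=\halfedge$ and with $\halfedge'=\ehalfedge$, and dividing numerator and denominator of $M_W(\halfedge)[\gamma_{[0,n]}]$ by the partition function $\int\cD[\Phi_{\Omega\setminus\tilde\gamma_{[0,n]}}]\,e^{\cS_x(\Phi_{\Omega\setminus\tilde\gamma_{[0,n]}})+\Potential_{\beta,x,\lambda}^{\check\Pbc(n)}(\Phi_{\Omega\setminus\tilde\gamma_{[0,n]}})}$, the common factors $\pm x^n$ cancel and the Berezin measure \cref{eq:interacting_Berezin_measure} appears, so that
\begin{align*}
M_W(\halfedge)[\gamma_{[0,n]}]
=
\frac{\big\langle \varphi_{\vec\gamma_n}\,\varphi_{\halfedge}\big\rangle_{\Omega\setminus\tilde\gamma_{[0,n]},x,\beta,\lambda}^{\check\Pbc(n)}}{\big\langle \varphi_{\vec\gamma_n}\,\varphi_{\ehalfedge}\big\rangle_{\Omega\setminus\tilde\gamma_{[0,n]},x,\beta,\lambda}^{\check\Pbc(n)}} ,
\qquad n\le\etime .
\end{align*}
Since the left-hand side is a local martingale by Step 1, so is the process on the right-hand side, which is exactly the assertion.

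\emph{Main obstacle.} There is no genuinely new difficulty here: the statement is essentially an assembly of \cref{prop:M_general} and \cref{lem:F_interacting_Grassmann}. The one point that really requires care --- and which is already absorbed into \cref{lem:F_interacting_Grassmann} --- is the sign produced by splitting the Grassmann integral along the explored interface $\gamma_{[0,n]}$: one must be sure this sign is a function of $\gamma_{[0,n]}$ alone, hence common to the numerator $\varphi_{\vec\gamma_n}\varphi_\halfedge$ and the denominator $\varphi_{\vec\gamma_n}\varphi_{\ehalfedge}$, so that it drops out of the ratio. A secondary point worth making explicit is that the filtration in \cref{prop:M_general} is the one generated by $\gamma$ under \cref{eq:generic_prob}, which Step 1 identifies with the Peierls-interface filtration of the interacting Gibbs measure, so that the martingale statement is indeed the one claimed.
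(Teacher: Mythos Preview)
Your proposal is correct and follows essentially the same approach as the paper: invoke \cref{prop:M_general} for the martingale property, then use \cref{lem:F_interacting_Grassmann} together with the definition \cref{eq:interacting_Berezin_measure} to rewrite the ratio in Berezin-measure form. Your write-up is in fact more careful than the paper's two-line proof in that you explicitly verify the constant-argument hypothesis via \cref{prop:Ber_W_admissible} and the positivity of the extra factor, and you spell out why the $\pm x^n$ prefactor (being independent of $\halfedge'$) cancels in the ratio.
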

\begin{proof}
\Cref{eq:M_interacting} defines a local martingale by \cref{prop:M_general}.
We have
\begin{align*}
M_W( \halfedge ) [\gamma_{[0,n]}]
= \; &
\frac{F_W(\gamma_{[0,n]};\halfedge)}{F_W(\gamma_{[0,n]};\ehalfedge)} 
= 
\frac{\big\langle \varphi_{\vec \gamma_n} \, \varphi_{\halfedge}  \big\rangle_{\Omega \setminus \tilde \gamma_{[0,n]},x,\beta,\lambda}^{\check \Pbc(n)}}{\big\langle \varphi_{\vec \gamma_n} \, \varphi_{\ehalfedge}  \big\rangle_{\Omega \setminus \tilde \gamma_{[0,n]},x,\beta,\lambda}^{\check \Pbc(n)}} 
\end{align*}
by \cref{lem:F_interacting_Grassmann} and \cref{eq:interacting_Berezin_measure}.
\end{proof}

\subsection{Conjecture about the scaling limit}
\label{sec:conjecture}

Let us now present a precise conjecture concerning the martingale observable, 
which would imply its convergence in a locally uniform fashion to the same scaling limit as in the well-studied planar case.
We consider the non-integrable model with the Hamiltonian~\eqref{eq:Ham_lam_gen}.

\begin{conj}
\label{conj:interacting_convergence}
There exists $\lambda_0 = \lambda_0(\Interaction) > 0$ such that for all $|\lambda| \le \lambda_0$,  
there exists $\beta^* = \beta^*(\lambda)$ such that for all finite $\Omega \subset \cVprimal{}$ 
and for all $\xi = (\halfedge_1, \halfedge_2, \ldots, \halfedge_{2N})$ 
and associated external boundary conditions $\Pbc \subset \cEdual{\Ombc\setminus \Omega}$, 
there exists a collection $\{\zeta_{\Omega,\lambda}^{\Pbc}(\halfedge) \;|\; \halfedge \in \cVcluster\Omega\}$, 
of real numbers such that 
the Berezin measure~\eqref{eq:interacting_Berezin_measure} satisfies 
\begin{align}\label{eq:convergence_conjecture} 
\begin{split}
\big\langle \varphi_{\halfedge_1} \, \varphi_{\halfedge_2} \, \cdots \, \varphi_{\halfedge_{2N}}  \big\rangle_{\Omega,x^*,\beta^*,\lambda}^{\Pbc}
= \; & \zeta_{\Omega,\lambda}^{\Pbc}(\halfedge_1) \, \cdots \, \zeta_{\Omega,\lambda}^{\Pbc}(\halfedge_{2N})
\; \big\langle \varphi_{\halfedge_1} \, \varphi_{\halfedge_2} \, \cdots \, \varphi_{\halfedge_{2N}} \big\rangle_{\Omega,x_c}^{+} \\
\; & + R_{\Omega,x^*,\beta^*,\lambda}^{\Pbc}
(\xi) ,
\end{split}
\end{align}
where $\big\langle \cdot \big\rangle_{\Omega,x_c}^{+}$ is the Berezin measure from \cref{eq:correlation_function}, and
\begin{enumerate}
\item $x^* = e^{-2 \beta^* J}$\textnormal{;}
\item $x_c = \sqrt{2}-1$  is the critical isotropic weight for the Ising model on the square lattice\textnormal{;}
\item $\zeta_{\Omega,\lambda}^{\Pbc}$ are uniformly bounded and there exist constants $\theta > 0$ and $\zeta_{\bulk,\lambda} \in \bR$ such that
\begin{align}
\label{eq:zeta_bulk_convergence}
\zeta_{\Omega,\lambda}^{\Pbc}(\halfedge) = \zeta_{\bulk,\lambda} + O( (\dist(\halfedge,\partial\Omega))^{-\theta})
\end{align}
uniformly on $\Omega$ and $\Pbc$ for sufficiently large $\dist(\halfedge,\partial\Omega)$\textnormal{;}
\item for any two different domains $\Omega$ and $\Omega'$ and associated external boundary conditions $\Pbc$ and $\Pbc'$, we have
\begin{align}
\label{eq:zeta_boundary_dependence}
\zeta_{\Omega,\lambda}^{\Pbc}(\halfedge) - \zeta_{\Omega',\lambda}^{\Pbc'}(\halfedge) 
= O( (\dist(\halfedge,\Omega \triangle \Omega'))^{-\theta}) + O( (\dist(\halfedge,\Pbc \triangle \Pbc'))^{-\theta}) ;
\end{align}
\item and $R_{\Omega,x^*,\beta^*,\lambda}^{\Pbc}
(\xi) = O \big( \underset{1 \leq i \neq j \leq 2N}{\min} |\halfedge_i - \halfedge_j|^{-1-\theta} \big)$ uniformly on $\Omega$ and $\Pbc$.
\end{enumerate}
\end{conj}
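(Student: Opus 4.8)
The statement of \cref{conj:interacting_convergence} is a scaling-limit result for the interacting fermionic correlation functions in \eqref{eq:interacting_Berezin_measure}, and the natural route to it is a constructive (multiscale) renormalization group analysis built on the Grassmann representation of \cref{prop:interacting_Grassmann_main} and \cref{thm:main_thm}. The starting point is that $\state{\varphi_{\halfedge_1}\cdots\varphi_{\halfedge_{2N}}}^{\Pbc}_{\Omega,x,\beta,\lambda}$ is the correlation function of a lattice Majorana field with quadratic action $\cS_x$, whose propagator is the inverse of the matrix $S^{\Omega,\ul x}$ of \eqref{eq:pair_as_inverse}, perturbed by a short-range even interaction $\Potential^{\Pbc}_{\beta,x,\lambda}$. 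The two structural inputs supplied by \cref{prop:interacting_Grassmann_main} are exactly those needed to treat $\Potential$ perturbatively: the kernels $\lis{U}^{\Pbc}_{\beta,x,\lambda}(X)$ decay exponentially in the tree size $T(X)$ by \eqref{eq:Ubar_decay}, so the interaction is genuinely short range and summable, and they depend on the boundary data $\Pbc$ only locally by \eqref{eq:Ubar_boundary}. The plan is to integrate out the field scale by scale, tracking the flow of a small number of running couplings.

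First I would diagonalize the free part at the isotropic critical weight $x_c=\sqrt 2-1$ (as in \cref{rem:sholo}), where the propagator becomes massless and its continuum limit is a Dirac/Majorana fermion, while away from $x_c$ there is a mass gap proportional to $\beta-\beta_c$. In the resulting multiscale integration the relevant coupling is the effective mass $m_k$, the naively marginal one is the local quartic, and there is a wave-function renormalization $Z_k$. The key structural fact---responsible for the very form of \eqref{eq:convergence_conjecture}, in which the free critical correlator is merely multiplied by amplitudes---is that a \emph{local} marginal quartic built from the two Majorana components at a point vanishes by anticommutation; hence $\Potential$ contributes only finite renormalizations of the quadratic part (mass, Fermi velocity, $Z$) and strictly irrelevant terms, with no surviving marginal coupling to drive an anomalous exponent. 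This is precisely the mechanism distinguishing the present models from interacting dimers \cite{GMT:Haldane_relation_for_interacting_dimers}, and it is what has already been exploited in the torus/cylinder analyses of \cite{GGM:The_scaling_limit_of_the_energy_correlations_in_non_integrable_Ising_models, AGG:Non_integrable_Ising_models_in_cylindrical_geometry}.

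With the flow under control, I would determine $\beta^*(\lambda)$ by tuning $\beta$ so that the relevant mass coupling lands on the critical (massless) trajectory, the existence of such $\beta^*$ following from a fixed-point / implicit-function argument on the RG map and smallness of $\lambda_0$ keeping the flow perturbative. The accumulated field renormalization then produces the factors $\zeta^{\Pbc}_{\Omega,\lambda}(\halfedge)$: in the bulk these converge to the single constant $\zeta_{\bulk,\lambda}$ (the wave-function renormalization of the translation-invariant theory), while their dependence on $\halfedge$, on $\Omega$, and on $\Pbc$ is a boundary/finite-size effect decaying as a negative power of the distance to $\partial\Omega$, to $\Omega\triangle\Omega'$, and to $\Pbc\triangle\Pbc'$, giving \eqref{eq:zeta_bulk_convergence} and \eqref{eq:zeta_boundary_dependence} with $\theta$ governed by the scaling dimension of the leading irrelevant operator. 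Since the free critical correlator is a Pfaffian of two-point functions (Wick's theorem for the Gaussian Grassmann measure), the leading contribution to the interacting $2N$-point function at $\beta^*$ is this Pfaffian with each external field dressed by its own $\zeta$-amplitude, and everything not of this form is collected into $R^{\Pbc}_{\Omega,x^*,\beta^*,\lambda}(\xi)$, whose extra decay $\min_{i\neq j}|\halfedge_i-\halfedge_j|^{-1-\theta}$ again reflects the improvement carried by irrelevant insertions.

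The principal obstacle is uniformity over geometry, and it is exactly where the existing technology stops. To date the constructive RG for these models has been carried out only in translation-invariant (toroidal and cylindrical) geometries, whereas the conjecture demands estimates uniform over \emph{all} finite $\Omega$ with Dobrushin (and more generally alternating) boundary conditions, and---because the observable of \cref{thm:main_thm} lives on the random slit domains $\Omega\setminus\tilde\gamma_{[0,n]}$---uniform over the arbitrary removed region $\tilde\gamma_{[0,n]}$ as well. This requires developing a boundary RG / cluster expansion producing the stated power-law decay rates uniformly in the domain and boundary data, controlling how $Z$ and the irrelevant remainder behave near rough boundaries. A secondary difficulty is rigorously establishing the purely multiplicative, single-insertion factorization $\prod_i\zeta^{\Pbc}_{\Omega,\lambda}(\halfedge_i)$ rather than a more entangled dressing, which amounts to showing that all higher connected contributions are subleading and absorbable into $R$; together these constitute the content that, as emphasized above, remains to be supplied by future RG work.
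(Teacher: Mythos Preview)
The statement under discussion is a \emph{conjecture}, and the paper does not supply a proof of it; what the paper offers instead is (i) the Remark immediately following \cref{conj:interacting_convergence}, pointing out that \cref{eq:convergence_conjecture} has the same form as existing RG results for these models on tori and cylinders (with the novelty that the prefactors $\zeta$ are not constant), and (ii) the discussion in \cref{sec:conclusions} of the technical obstacles to a proof. Your proposal is not a proof either --- as you explicitly acknowledge in your final paragraph --- but a strategy sketch, and that sketch is essentially the same as the one the paper itself lays out: a multiscale RG analysis of the Grassmann representation, with $\beta^*$ fixed by tuning the relevant mass to the critical trajectory, the $\zeta$-factors arising as accumulated wave-function renormalizations that become constant in the bulk, and the remainder $R$ collecting the contributions of irrelevant operators.

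You correctly identify the principal obstruction, namely that the existing constructive RG machinery has only been implemented for domains with a high degree of symmetry (torus, cylinder, half-plane), whereas the conjecture requires uniform control over arbitrary finite $\Omega$ and over the rough slit domains $\Omega\setminus\tilde\gamma_{[0,n]}$ produced by the exploration. The paper makes exactly the same point in \cref{sec:conclusions}, attributing the difficulty specifically to the propagator decomposition (momentum-shell cutoffs require symmetry), and mentions two possible remedies you do not: lattice decimation as in \cite{Dimock-Yuan:Structural_stability_of_the_RG_flow_in_GN_model}, or proving the conjecture only on symmetric domains and then invoking locality to pass to rough ones. In short, your proposal is well aligned with the paper's own heuristic discussion, and both agree that the missing ingredient is a boundary RG on irregular domains; neither you nor the paper closes that gap.
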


\begin{remark}
Importantly, 
\cref{eq:convergence_conjecture} has the same form as existing results about correlation functions of the models under 
consideration~\cite{GGM:The_scaling_limit_of_the_energy_correlations_in_non_integrable_Ising_models, 
AGG:Energy_correlations_of_non_integrable_Ising_models, 
CGG:In_prep}, 
except that here the prefactors are not constant. 
In the special case where all of the half-edges $\halfedge_1,\ldots,\halfedge_{2N}$ lie on the boundary of the half-plane, 
\cref{conj:interacting_convergence} should follow from a straightforward generalization 
of the proof in~\cite{CGG:In_prep}\footnote{The important difference being that~\cite{CGG:In_prep} 
is formulated for open boundary conditions (so using high-temperature rather than low-temperature contours) 
leading to a different \Potential \; --- however, all of the important properties used in the proof are the same.}.
In such a constructive renormalization group treatment, the prefactors $\zeta_{\Omega,\lambda}^{\Pbc}(\halfedge)$ can be given as sums over collections of geometrical objects which contact $\halfedge$, 
which give a contribution decaying in their diameter and depend on the (difference in) boundary conditions only if they cross the (relevant portion of the) boundary, 
yielding \cref{eq:zeta_bulk_convergence,eq:zeta_boundary_dependence}. 
This identification of the terms affected by the boundary conditions is also along similar lines to that 
of~\cite{Greenblatt:Discrete_and_zeta-regularized_determinants_of_the_Laplacian_on_polygonal_domains_with_Dirichlet_boundary_conditions} 
(though the geometrical objects considered there are 
trajectories of random walks or Brownian motions rather than the more varied collections 
of objects which would be involved here).
\end{remark}

In conclusion, applying \cref{conj:interacting_convergence} to the process in \cref{thm:main_thm}, we see that
\begin{align*}
\; &  \frac{\big\langle \varphi_{\vec \gamma_n} \, \varphi_{v_{\halfedge}}  \big\rangle_{\Omega \setminus \tilde \gamma_{[0,n]},x^*,\beta^*,\lambda}^{\check \Pbc(n)}}{\big\langle \varphi_{\vec \gamma_n)} \, \varphi_{v_{\ehalfedge}}  \big\rangle_{\Omega \setminus \tilde \gamma_{[0,n]},x^*,\beta^*,\lambda}^{\check \Pbc(n)}}
\\
= \; &  
\frac{
\zeta_{\Omega \setminus \tilde \gamma_{[0,n]},\lambda}^{\check \Pbc(n)}(\vec \gamma_n) \, 
\zeta_{\Omega \setminus \tilde \gamma_{[0,n]},\lambda}^{\check \Pbc(n)}(\halfedge) \, 
\big\langle \varphi_{\vec \gamma_n} \, \varphi_{\halfedge}  \big\rangle_{\Omega,x_c}^{+}  
+ R_{\Omega  \setminus \tilde \gamma_{[0,n]},x^*,\beta^*,\lambda}^{\check \Pbc(n)}(\vec \gamma_n,\halfedge) 
}{
\zeta_{\Omega \setminus \tilde \gamma_{[0,n]},\lambda}^{\check \Pbc(n)}(\vec \gamma_n) \, 
\zeta_{\Omega \setminus \tilde \gamma_{[0,n]},\lambda}^{\check \Pbc(n)}(\ehalfedge) \, 
\big\langle \varphi_{\vec \gamma_n} \, \varphi_{\ehalfedge}  \big\rangle_{\Omega,x_c}^{+}  
+ R_{\Omega  \setminus \tilde \gamma_{[0,n]},x^*,\beta^*,\lambda}^{\check \Pbc(n)}(\vec \gamma_n,\ehalfedge)
}
\\
= \; & 
\frac{
\zeta_{\Omega \setminus \tilde \gamma_{[0,n]},\lambda}^{\check \Pbc(n)}(\halfedge) 
}{
\zeta_{\Omega \setminus \tilde \gamma_{[0,n]},\lambda}^{\check \Pbc(n)}(\ehalfedge)
}
\frac{\big\langle \varphi_{v(\vec \gamma_n,\gamma_{n})} \, \varphi_{\halfedge}  \big\rangle_{\Omega,x_c}^{+}}{\big\langle \varphi_{v(\vec \gamma_n,\gamma_{n})} \, \varphi_{\ehalfedge}  \big\rangle_{\Omega,x_c}^{+}}
+ O([\dist(\vec \gamma_n,\halfedge)]^{-\theta})
+ O([\dist(\vec \gamma_n,\ehalfedge)]^{-\theta})
\\
= \; & 
\frac{
\zeta_{\bulk,\lambda} \, 
}{
\zeta_{\Omega ,\lambda}^{\Pbc}(\ehalfedge) \, 
}
\frac{\big\langle \varphi_{v(\vec \gamma_n,\gamma_{n})} \, \varphi_{\halfedge}  \big\rangle_{\Omega,x_c}^{+}}{\big\langle \varphi_{v(\vec \gamma_n,\gamma_{n})} \, \varphi_{\ehalfedge}  \big\rangle_{\Omega,x_c}^{+}}
\; + \; O([\dist(\tilde \gamma_{[0,n]} \cup \partial \Omega,\halfedge)]^{-\theta})
\; + \; O([\dist(\tilde \gamma_{[0,n]},\ehalfedge)]^{-\theta})
\end{align*}
in a locally uniform fashion,
so that \cref{conj:interacting_convergence} is sufficient to imply that (up to an innocuous prefactor) 
the martingale observable in \cref{thm:main_thm} has the same scaling limit as the corresponding one for the planar Ising model, 
and similarly for the complex linear combination introduced in \cref{eq:complex_M}.
Along with the straightforward generalization to the case of more marked points, this combines with~\cite[Lemma~3.2]{Izyurov:Critical_Ising_interfaces_in_multiply_connected_domains} to show that \cref{conj:interacting_convergence} implies that the same convergence result holds, with the same limit.  Since all of the other technical lemmas in~\cite{Izyurov:Critical_Ising_interfaces_in_multiply_connected_domains} refer to the continuum spinor which characterizes this limit, the rest of the proof of~\cite[Theorem~1.1]{Izyurov:Critical_Ising_interfaces_in_multiply_connected_domains} can be taken over 
to conclude that this approach is sufficient to prove convergence of the driving function.

Moreover, by construction the discrete interface process is reversible.
Hence, as pointed out by Sheffield~\&~Sun in~\cite[Theorem~1.1]{Sheffield-Sun:Strong_path_convergence_from_Loewner_driving_function_convergence}, 
with some additional technical work this will also imply convergence of the interface in a strong sense in a space of curves.

\bigskip{}
\section{Concluding remarks}
\label{sec:conclusions}

The program of proving \cref{conj:interacting_convergence} in full generality suffers from at least one significant technical obstacle, 
concerning extending the existing rigorous renormalization group treatment 
from the case of a straight boundary considered 
in~\cite{Cava:PhD_thesis, AGG:Energy_correlations_of_non_integrable_Ising_models, CGG:In_prep} 
to very irregular domains associated with the interface process. 
This is due to the decomposition of the ``propagator''  $g_\Omega(h,h') :=\left\langle \varphi_h \varphi_{h'} \right\rangle^+_{\Omega,x_c}$ into components describing 
the behavior on different scales in a way compatible with Gram decomposition, as in for example~\cite[Proposition~2.3]{AGG:Non_integrable_Ising_models_in_cylindrical_geometry}.
So far, using methods based on variants of Fourier analysis (namely, implementing what is known in the physics literature as a ``momentum shell'' cutoff), 
the decomposition has been established --- but only on domains with a high degree of symmetry (such as a discrete torus or cylinder).

There are at least two possible strategies to proceed to overcome this technical issue. 
Using lattice decimation, a suitable Gram decomposition is available almost trivially~\cite[Section~2]{Dimock-Yuan:Structural_stability_of_the_RG_flow_in_GN_model},
so it should be possible to proceed in this way as long as the other estimates necessary can be obtained.
Alternatively, one may proceed by proving \cref{conj:interacting_convergence} only on domains with a high degree of symmetry,
which should be readily possible using the RG technology. Thereafter, provided that the convergence is sharp enough, 
one might be able to invoke locality arguments to lift the conclusion to rough domains~\cite{Mahfouf:Private}.

Note also that our strategy depends a great deal on the comparison to the well-understood nearest-neighbor case, 
and thus also on its discrete holomorphicity properties.  
More generally, conformal invariance in two dimensions is expected to be an emergent property of critical scaling limits~\cite{Zamolodchikov:RG_and_perturbation_theory_about_fixed_points_in_2D_field_theory,
Polchinski:Scale_and_conformal_invariance_in_quantum_field_theory,
Nakayama:Scale_invariance_vs_conformal_invariance} 
even in cases where there is no such explicit, exactly solved model to fall back on in this way.
A prominent example of this is the interacting dimer model, whose scaling limit is qualitatively different from the (exactly solved) non-interacting case~\cite{GMT:Haldane_relation_for_interacting_dimers}. 
Here, one still expects it to be possible to prove conformal invariance as a consequence of scale-invariance via renormalization group techniques~\cite{Nakayama:Scale_invariance_vs_conformal_invariance}.  
This however clearly requires a continuous RG transform -- so lattice decimation is unsatisfactory, and it will be necessary to find an alternative decomposition, 
e.g., based on expressing the propagator in terms of the heat kernel or a resolvent.

\bigskip{}
\bibliographystyle{alpha}

\newcommand{\etalchar}[1]{$^{#1}$}

\end{document}